\documentclass[xcolor=svgnames,11pt,reqno]{amsart}
\usepackage[round]{natbib}
\usepackage[T1]{fontenc}
\usepackage{amsmath}
\usepackage{amssymb}
\usepackage{amsthm}
\usepackage{fancyhdr}
\usepackage{enumitem}
\usepackage{comment}
\usepackage{relsize}
\usepackage{nicefrac}
\usepackage{bm}
\usepackage{mathrsfs}
\usepackage{graphicx}
\usepackage[utf8]{inputenc}
\usepackage{cancel}
\usepackage{mathtools}
\usepackage{tikz}
\usetikzlibrary{cd,decorations.pathreplacing,matrix,arrows,positioning,automata,shapes,shadows,calc,fadings,decorations,through,intersections}
\usepackage{float}
\usepackage{subcaption}
\usepackage{multirow}
\usepackage[left=2.5cm,top=2.8cm,right=2.5cm,bottom=2.5cm,head=3cm,headsep=1cm, foot=3cm]{geometry}
\usepackage{hyperref}
\usepackage[skip=2mm,indent=12pt]{parskip}
\definecolor{darkblue}{rgb}{0.1,0.1,0.9}
\definecolor{darkred}{rgb}{0.9,0.1,0.1}

\hypersetup{colorlinks, allcolors= darkblue}

\setlength{\tabcolsep}{0.5cm}

\newtheorem{theorem}{Theorem}[section]
\newtheorem{definition}[theorem]{Definition}
\newtheorem{proposition}[theorem]{Proposition}
\newtheorem{assumption}[theorem]{Assumption}
\newtheorem{corollary}[theorem]{Corollary}
\newtheorem{lemma}[theorem]{Lemma}
\newtheorem{remark}[theorem]{Remark}
\renewcommand{\Pr}{\mathbb{P}}
\newcommand{\R}{\mathbb{R}}
\newcommand{\Q}{\mathbb{Q}}
\newcommand{\E}{\mathbb{E}}
\def\ccv{\preccurlyeq_{\hspace{- 0.4 mm} _{CCV}}}
\def\stccv{\prec_{\hspace{- 0.4 mm} _{CCV}}}
\def\icv{\preccurlyeq_{\hspace{- 0.4 mm} _{ICV}}}
\def\sticv{\prec_{\hspace{- 0.4 mm} _{ICV}}}
\renewcommand{\tilde}{\widetilde}
\newcommand{\q}{\quad}
\newcommand{\One}{\mathbf{1}}

\raggedbottom
\allowdisplaybreaks
\makeatletter

\newcommand{\Rmnum}[1]{\expandafter\@slowromancap\romannumeral #1@}


\title[Efficiency with Risk-Averse Monetary Utilities]{\bf Efficiency in Pure-Exchange Economies\vspace{0.25cm}\\with Risk-Averse Monetary Utilities\vspace{0.3cm}}

\author[Mario Ghossoub and Michael B.\ Zhu]{Mario Ghossoub\vspace{0.1cm}\\ University of Waterloo\vspace{0.8cm}\\ Michael B.\ Zhu\vspace{0.1cm}\\University of Waterloo\vspace{0.8cm}\\\today}

\address{{\bf Mario Ghossoub}: University of Waterloo -- Department of Statistics and Actuarial Science -- 200 University Ave.\ W.\ -- Waterloo, ON, N2L 3G1 -- Canada}
\email{\href{mailto:mario.ghossoub@uwaterloo.ca}{mario.ghossoub@uwaterloo.ca}\vspace{0.2cm}}

\address{{\bf Michael B.\ Zhu}:  University of Waterloo -- Department of Statistics and Actuarial Science -- 200 University Ave.\ W.\ -- Waterloo, ON, N2L 3G1 -- Canada}
\email{\href{mailto:mbzhu@uwaterloo.ca}{mbzhu@uwaterloo.ca}\vspace{0.4cm}}

\thanks{\textit{Key Words and Phrases:} Pareto Efficiency, Schur Concavity, Monetary Utilities, Risk Aversion, Risk Sharing.\vspace{0.2cm}}

\thanks{We are grateful to Patrick Beissner, Felix Liebrich, Frank Riedel, and Ruodu Wang and  for comments and suggestions. We thank audiences at the 2024 workshop on the Foundations and Applications of Decentralized Risk Sharing (FADeRiS 2024) at the University of Ulm and the 2024 workshop on Optimal Transport and Distributional Robustness at the Banff International Research Station. Mario Ghossoub acknowledges financial support from the Natural Sciences and Engineering Research Council of Canada (NSERC Grant No.\ 2018-03961 and 2024-03744). Michael B.\ Zhu acknowledges financial support from the Society of Actuaries through the Hickman Scholars Program.}

\begin{document}
\sloppy

\maketitle

\begin{abstract}
We study Pareto efficiency in a pure-exchange economy where agents' preferences are represented by risk-averse monetary utilities. These coincide with law-invariant monetary utilities, and they can be shown to correspond to the class of monotone, (quasi-)concave, Schur concave, and translation-invariant utility functionals. This covers a large class of utility functionals, including a variety of law-invariant robust utilities. We show that Pareto optima exist and are comonotone, and we provide a crisp characterization thereof in the case of law-invariant positively homogeneous monetary utilities. This characterization provides an easily implementable algorithm that fully determines the shape of Pareto-optimal allocations. Additionally, for positively homogeneous law-invariant monetary utilities, we show the existence of competitive equilibria and establish the first and second welfare theorems. In the special case of law-invariant comonotone-additive monetary utility functionals (concave Yaari-Dual utilities), we provide a closed-form characterization of Pareto optima. As an application, we examine risk-sharing markets where all agents evaluate risk through law-invariant coherent risk measures, a widely popular class of risk measures. In a numerical illustration, we characterize Pareto-optimal risk-sharing for some special types of coherent risk measures.
\end{abstract}

\bigskip
\section{Introduction}
\label{Intro}

At their core, risk sharing markets exist because individuals have different levels of risk aversion, whence Pareto-improving exchange ensues. In a context of choice under objective uncertainty, risk aversion is defined as consistency with second-order stochastic dominance (SSD), as in \cite{rothschild1978increasing}, for instance. Preferences that are SSD preserving are known as strongly risk averse preferences, and their study in the literature is vast. The study of optimal allocations in pure-exchange economies with risk-averse agents has its roots in the seminal work of \cite{borch1962} and \cite{wilson1968theory} in the framework of Expected Utility Theory (EUT), who show that each individual's optimal allocation must be a non-decreasing deterministic function of the aggregate endowment in the market. In particular, optimal allocations are \emph{comonotone}. This important property of allocations has been shown to extend to more general models of preferences. A landmark result in this direction is the \textit{comonotone improvement theorem} of \cite{landsberger1994co}, who show that for any given allocation of an aggregate risk between two agents, a Pareto-improving comonotone feasible allocation is always possible. The only requirement is that the agents' preferences must exhibit strong risk aversion, or more generally, be \textit{Schur concave}, i.e., consistent with the concave stochastic order. While the original result was established for a discrete state space, the comonotone improvement result was extended by \cite{dana2003modelling} to atomless probability spaces, by \cite{ludkovski2008comonotonicity} to more than two agents and unbounded random variables, and by \cite{CarlierDanaGalichon2012} to vectors of random variables. Recently, \cite{denuitetal2023comonotonicity} provide a proof of the comonotone improvement result in the general case, using a constructive algorithmic approach. Comonotonicity of allocations is important enough to warrant the study of the so-called \textit{comonotone market}, an incomplete market in which only comonotone allocations are available (see, e.g., \cite{boonen2021competitive}).

\medskip

The comonotone improvement theorem laid the groundwork for an extensive literature focusing on the characterization of Pareto-optimal allocations under different models of agents' preferences. The original result of \cite{borch1962} includes an explicit formula for optimal allocations as a function of each agent's decreasing marginal utility of wealth (the so-called Borch rule). Allocations in markets with Knightian uncertainty, as modeled through either the Choquet Expected Utility model of \cite{schmeidler1989subjective} or the Maxmin Expected Utility model of \cite{gilboa1989maxmin}, are studied by \cite{chateauneuf2000optimal}, \cite{dana2004ambiguity}, \cite{tsanakas2006risk}, \cite{decastro2011ambiguity}, and \cite{beissner2023optimal}, for instance. Through the relationship of these models with the classical EUT framework, it is possible to characterize the shape of Pareto-optimal allocations. However, for more general models of preferences such as the popular class of monetary utilities (i.e., monotone, concave, and translation-invariant utility functionals), explicit characterizations of Pareto-optimal allocations are more difficult to obtain. Instead, the focus has been to show the existence of Pareto-optimal allocations, within the set of comonotone allocations. A seminal result in this direction is due to \cite{JouiniSchachermayerTouzi2008}, who prove existence of comonotone allocations between two agents whose preferences are represented by law-invariant monetary utilities. In particular, given translation invariance, identifying Pareto optima reduces to solving the \emph{sup-convolution} of preferences, as observed by \cite{Barrieu2005}, for instance. \cite{filipovic2008optimal} extend this result to markets with more than two agents, as well as to more general spaces of random variables that allow for unbounded endowments. Further results concerning existence of Pareto optima include \cite{Acciaio2007} for non-monotone preferences, \cite{MastrogiacomoRosazza} for quasi-concave utilities, and \cite{RavanelliSvindland2014} for a class of law-invariant variational preferences. Recently, it has been shown that for SSD-preserving and translation invariant preferences, the assumption of concavity can be relaxed (e.g., \cite{mao2020risk} and \cite{liebrich2021risk}). Pareto-optimal allocations have also be proven to exist in more complicated market models, including markets with a portfolio of multiple assets (e.g., \cite{KieselRuschendorf2010} and \cite{KieselRuschendorf2014}) and markets with predetermined sets of admissible endowments (e.g., \cite{LiebrichSvindland2019}). However, results concerning the shape of Pareto-optimal allocations are more rare in this strand of the literature, and require more stringent assumptions on agents' preferences. Notably, \cite{embrechts2018quantile} and \cite{liu2022inf} provide closed-form characterizations of Pareto-optimal allocations when preferences are given by quantile-based risk measures such as the expected shortfall; and \cite{liu2020weighted} extends these results to distortion risk measures.

\medskip

In this paper, our main result (Theorem \ref{thm:cpo_coherent}) provides a characterization of Pareto optima for law-invariant and positively homogeneous monetary utilities (hence SSD preserving). Our characterization relies on the dual representation of these functionals in the spirit of \cite{Kusuoka2001}, who shows that these functionals can be expressed as an infimum of expectations over a certain set of probability measures. This representation is generalized to concave Schur-concave functionals by \cite{dana2005representation}, who also provides a representation in terms of the dual utilities of \cite{yaari}. The key to our result is expressing the sup-convolution problem in terms of the dual representation of preferences. The supremum from the sup-convolution and the infimum from the dual representation can be exchanged, and the resulting expression simplifies to yield our characterization. Notably, this transforms the domain of the optimization problem from the space of feasible allocations to the space of distortion functions. Once an appropriate problem is solved, the structure of the (comonotone) Pareto-optimal allocations is determined explicitly.

\medskip

In the context of risk measures, the dual concept to a positively homogeneous monetary utility is the notion of a coherent risk measure introduced by \cite{artzner1999coherent}. Indeed, these notions are equivalent up to a change in sign: if $U$ is a positively homogeneous monetary utility, then $\rho := -U$ is a coherent risk measure. Our result therefore also provides a characterization of Pareto optima in the case of a risk-sharing problem among multiple agents who evaluate risk via law-invariant coherent risk measures. As a special case, we show that when each risk measure is comonotone-additive, we recover the closed-form characterization obtained by \cite{liu2020weighted}. In a numerical illustration, we examine Pareto-optimal risk-sharing allocations in a market where a regulatory entity imposes capital requirements to all agents. In such a market, we apply an algorithm based on our main result to characterize the unique comonotone Pareto-optimal allocation. 

\medskip

The remainder of this paper is structured as follows. Section \ref{sec:formulation} introduces the optimal allocation problem. Section \ref{sec:comonotone} provides some background on the comonotone improvement result for Schur-concave functionals, and motivates the notion of a comonotone market. Our main characterization result for law-invariant and positively homogeneous monetary utilities is provided in Section \ref{sec:explicit}. Section \ref{sec:risk_sharing} provides an application to risk-sharing markets with coherent risk measures, and includes some numerical illustrations. Section \ref{sec:conclusion} concludes. The proofs of our main results, as well as some related analysis, can be found in the \hyperlink{LinkToAppendix}{Appendix}.

\bigskip
\section{Problem Formulation}
\label{sec:formulation}

Let $\mathcal{X}:=L^\infty\left(\Omega,\mathcal{F},\mathbb{P}\right)$ be the set of essentially bounded random variables on an atomless probability space $\left(\Omega,\mathcal{F},\mathbb{P}\right)$. There are $n\in\mathbb{N}$ agents wishing to reallocate their initial endowments among themselves without central authority involvement. For each $i \in \mathcal{N} :=\{1,\ldots,n\}$, let $X_i \in \mathcal{X}$ denote the initial endowment of the $i$-th agent. We consider a one-period economy, where all financial gains and losses are realized at the end of the period. Denote the aggregate endowment by $S:=\sum_{i=1}^nX_i$. Since $S$ is a sum of essentially bounded random variables, it is also essentially bounded by some constant $M<\infty$.

At the end of the period, the aggregate endowment $S$ is redistributed among the agents in the market. For each $i \in \mathcal{N}$, we denote the end-of-period, post-transfer payout of agent $i$ by $Y_i$. Therefore, the \textit{ex ante} admissible set of decision variables, henceforth referred to as the set of allocations, is given by
\begin{equation*}
\mathcal{A}:=\left\{\left\{Y_i\right\}_{i=1}^{n}\in\mathcal{X}^n:\sum_{i=1}^{n}Y_i=S\right\}.
\end{equation*}

We assume that each agent $i \in \mathcal{N}$ has a preference $\succ_i$ on $\mathcal{X}$ that admits a representation by a utility functional $U_i:\mathcal{X}\to\R$. We recall below some standard properties of such functionals.

\begin{definition}
    A utility functional $U:\mathcal{X}\to\R$ is said to be:

    \begin{itemize}
        \item \emph{Monotone} if $U(Z_1)\le U(Z_2)$, for all $Z_1,Z_2\in\mathcal{X}$ such that $Z_1\le Z_2$.
        \item \emph{Translation invariant} if $U(Z+c)=U(Z)+c$, for all $Z\in\mathcal{X}$ and $c\in\R$.
        \item \emph{Concave} if for all $Z_1,Z_2\in\mathcal{X}$ and $t\in[0,1]$,
            \[
                t \, U(Z_1) + (1-t) \, U(Z_2)\le U(t \, Z_1 + (1-t) \, Z_2)\,.
            \]
        \item \emph{Positively Homogeneous} if $U(t \, Z) = t \, U(Z)$, for all $Z\in\mathcal{X}$ and $t\ge0$.
        \item \emph{Law-invariant} if for all $Z_1,Z_2\in\mathcal{X}$ with the same distribution under $\Pr$, we have $U(Z_1)=U(Z_2)$.
    \end{itemize}
\end{definition}

\smallskip

\begin{definition}
\label{defn:ir_po}
    An allocation $\left\{Y^*_i\right\}_{i=1}^{n}\in\mathcal{A}$ is said to be:

    \begin{itemize}
        \item {\bf Individually Rational} (IR) if it incentivizes the parties to participate in the market, that is, 
            \begin{equation*}
                U_i\left(Y^*_i\right)\geq U_i\left(X_i\right),\quad \forall \, i \in \mathcal{N}.
            \end{equation*}
        
        \item {\bf Pareto Optimal} (PO) if it is IR and there does not exist any other IR allocation $\left\{Y_i\right\}_{i=1}^{n}$ such that
        \begin{equation*}
            U_i\left(Y_i\right)\geq U_i\left(Y^*_i\right),\quad \forall \, i \in \mathcal{N},
        \end{equation*}
        with at least one strict inequality.

        \item {\bf Weakly Pareto Optimal} if it is IR and there does not exist any other IR allocation $\left\{Y_i\right\}_{i=1}^{n}$ such that
    \begin{equation*}
        U_i\left(Y_i\right)>U_i\left(Y^*_i\right),\quad \forall \, i \in \mathcal{N}.
    \end{equation*}
    \end{itemize}
\end{definition}

It follows immediately that if an allocation is Pareto optimal, then it is weakly Pareto optimal. The converse holds under fairly mild conditions on the monotonicity of the preferences. For example, in \cite{xia2004multi}, these notions are equivalent in the expected-utility setting when utility functions are strictly increasing. For our purposes, the following condition will suffice. A similar condition can be found in \cite{RavanelliSvindland2014}, where it is used for the same purpose.

\begin{assumption}
    \label{as:increasing_wrt_premium}
    For all $i\in\mathcal{N}$ and $Z\in\mathcal{X}$, the function 
    \begin{align*}
    \mathbb{R} &\to \mathbb{R}\\
    c & \mapsto U_i(Z+c)
    \end{align*}
    is continuous, strictly increasing, and satisfies $\underset{c\to\infty}\lim\,U_i(Z+c)=\infty$.
\end{assumption}

\smallskip

\begin{lemma}
    \label{lem:po_strict}
    Under Assumption \ref{as:increasing_wrt_premium}, an allocation $\{Y_i^*\}_{i=1}^n\in\mathcal{A}$ is PO if and only if it is weakly Pareto-optimal.
\end{lemma}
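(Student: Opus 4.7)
The forward direction is immediate from the definitions: every Pareto-optimal allocation is by definition IR and forbids a weakly dominating alternative, so in particular it forbids a strictly dominating alternative. The content is the converse.

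Suppose, for contradiction, that $\{Y_i^*\}_{i=1}^n$ is weakly PO but not PO. Then there exists an IR allocation $\{Y_i\}_{i=1}^n \in \mathcal{A}$ with $U_i(Y_i) \ge U_i(Y_i^*)$ for every $i \in \mathcal{N}$ and at least one strict inequality, say $U_1(Y_1) > U_1(Y_1^*)$. My plan is to perturb $\{Y_i\}$ by taking a small cash amount away from agent $1$ and distributing it equally among the others, and then to invoke Assumption \ref{as:increasing_wrt_premium} to conclude that the perturbed allocation is IR and strictly dominates $\{Y_i^*\}$, contradicting weak Pareto optimality.

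Concretely, since the map $c \mapsto U_1(Y_1 + c)$ is continuous at $c=0$ with $U_1(Y_1) > U_1(Y_1^*)$, there exists $\varepsilon > 0$ such that
\[
U_1\bigl(Y_1 - (n-1)\varepsilon\bigr) > U_1(Y_1^*).
\]
Define $\tilde Y_1 := Y_1 - (n-1)\varepsilon$ and $\tilde Y_i := Y_i + \varepsilon$ for $i = 2, \ldots, n$. Then $\sum_{i=1}^n \tilde Y_i = \sum_{i=1}^n Y_i = S$, so $\{\tilde Y_i\}_{i=1}^n \in \mathcal{A}$. By the strictly increasing property of $c \mapsto U_i(Y_i + c)$, we have $U_i(\tilde Y_i) > U_i(Y_i) \ge U_i(Y_i^*)$ for each $i \ge 2$, and by construction $U_1(\tilde Y_1) > U_1(Y_1^*)$. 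Since $\{Y_i^*\}$ is itself IR (being weakly PO), we further obtain $U_i(\tilde Y_i) > U_i(Y_i^*) \ge U_i(X_i)$ for every $i$, so $\{\tilde Y_i\}$ is IR. This strictly dominates $\{Y_i^*\}$ componentwise, contradicting weak Pareto optimality.

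The only delicate step is the existence of $\varepsilon$, which is a direct consequence of continuity and strict monotonicity of the cash-shift map in Assumption \ref{as:increasing_wrt_premium}; the limit condition $\lim_{c\to\infty} U_i(Z+c) = \infty$ is not needed for this argument. The remainder is a routine bookkeeping check that the perturbed allocation remains in $\mathcal{A}$ and preserves individual rationality, both of which follow directly from translating cash among agents.
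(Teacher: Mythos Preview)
Your proof is correct and follows essentially the same approach as the paper's: both take cash from the strictly-improved agent and redistribute it to the others, invoking continuity and strict monotonicity of the cash-shift map to obtain a strictly dominating allocation. The only cosmetic difference is the parametrization of the transfer (you remove $(n-1)\varepsilon$ and give $\varepsilon$, the paper removes $\varepsilon$ and gives $\varepsilon/(n-1)$); you also make the IR check for $\{\tilde Y_i\}$ explicit, which the paper leaves implicit.
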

\begin{proof}
    Clearly, Pareto optimality implies weak Pareto optimality. For the reverse direction, suppose that $\{Y_i^*\}_{i=1}^n\in\mathcal{A}$ is not PO. Then there exists an IR allocation $\left\{Y_i\right\}_{i=1}^{n}$ such that
    \begin{equation*}
        U_i\left(Y_i\right)\ge U_i\left(Y^*_i\right),\quad \forall \, i \in \mathcal{N}\,,
    \end{equation*}
    with at least one strict inequality. Without loss of generality, suppose that the inequality is strict for $U_1$. Then by Assumption \ref{as:increasing_wrt_premium}, there exists $\varepsilon>0$ such that
    \[
        U_1\left(Y_1-\varepsilon\right)>U_1\left(Y_1^*\right)\,.
    \]
    Let $\tilde{Y}_1:=Y_1-\varepsilon$ and $\tilde{Y}_i:=Y_i+\left(\frac{1}{n-1}\right)\varepsilon$, for $i=2,\ldots,n$. Then $\sum_{i=1}^n\tilde{Y}_i=\sum_{i=1}^nY_i=S$, and hence $\left\{\tilde{Y}_i\right\}_{i=1}^{n}\in\mathcal{A}$. Furthermore, for $i=2,\ldots,n$, we have
    \[
        U_i\left(\tilde{Y}_i\right)>U_i\left(Y_i\right)\ge U_i\left(Y_i^*\right),
    \]
    implying that $\{Y_i^*\}_{i=1}^n\in\mathcal{A}$ is not weakly Pareto-optimal either.
\end{proof}

Let $\mathcal{IR}$ denote the set of all IR allocations. Then, in particular, $\mathcal{IR} \neq \varnothing$ since it contains the status-quo, i.e., the no-risk-sharing allocation $Y_i=X_i$ under which each agent retains their initial endowment.

Let $\mathcal{PO} \subseteq \mathcal{IR}$ denote the set of all PO allocations. The following provides a useful characterization of the set $\mathcal{PO}$ under the assumption that all $U_i$ are concave. Define the set $\Lambda$ as follows:
\[\Lambda:=\{(\lambda_1,\ldots,\lambda_n)\in\R_+^n\}\setminus\{0\}\,.\]

\noindent For a given vector $\lambda\in\Lambda$, let $\mathcal{S}_\lambda$ be the set of all maximizers for the following sum-maximization problem:
\begin{equation}
    \label{eq:weighted_infconv}
    \sup_{\{Y_i\}_{i=1}^n\in\mathcal{IR}} \ \sum_{i=1}^{n}\lambda_i\,U_i(Y_i)\,.
\end{equation}

\smallskip

\noindent By a classical result, the set $\mathcal{PO}$ coincides with the solutions to problem \eqref{eq:weighted_infconv} for some choice of $\lambda\in\Lambda$. Elements of $\Lambda$ in this context are often referred to as Negishi weights in the literature. Hence, every Pareto-optimal allocation is associated with a vector of Negishi weights by this characterization. 

For completeness, we provide a proof of the following result in the Appendix. The argument of the proof follows the framework of \cite[Proposition 6.3.3]{dana2003financial}, who show this result without the consideration of individual rationality. The included proof shows that the individual rationality constraint is compatible with the classical methodology.

\begin{proposition}
\label{prop:weighted_infconv}
Suppose that Assumption \ref{as:increasing_wrt_premium} holds and that $U_i$ is monotone and concave for all $i\in\mathcal{N}$. Then $\mathcal{PO}=\underset{\lambda\in\Lambda}\bigcup\mathcal{S}_\lambda$.
\end{proposition}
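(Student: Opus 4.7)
The plan is to prove the two inclusions $\bigcup_{\lambda\in\Lambda}\mathcal{S}_\lambda\subseteq\mathcal{PO}$ and $\mathcal{PO}\subseteq\bigcup_{\lambda\in\Lambda}\mathcal{S}_\lambda$ separately. The first is a short argument that combines the structure of the weighted problem \eqref{eq:weighted_infconv} with Lemma \ref{lem:po_strict}; the second is a separating-hyperplane argument in $\R^n$ that exploits the concavity of each $U_i$.

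For the easy inclusion, suppose $\{Y_i^*\}_{i=1}^n\in\mathcal{S}_\lambda$ for some $\lambda\in\Lambda$. Then $\{Y_i^*\}_{i=1}^n\in\mathcal{IR}$ by definition. I would first show that $\{Y_i^*\}_{i=1}^n$ is weakly Pareto-optimal: if an IR allocation $\{Y_i\}_{i=1}^n$ existed with $U_i(Y_i)>U_i(Y_i^*)$ for every $i\in\mathcal{N}$, then since at least one component of $\lambda$ is strictly positive, one would obtain $\sum_{i=1}^n\lambda_i\,U_i(Y_i)>\sum_{i=1}^n\lambda_i\,U_i(Y_i^*)$, contradicting the maximality of $\{Y_i^*\}_{i=1}^n$. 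Lemma \ref{lem:po_strict}, available because Assumption \ref{as:increasing_wrt_premium} is in force, then upgrades weak Pareto optimality to Pareto optimality.

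For the reverse inclusion, fix $\{Y_i^*\}_{i=1}^n\in\mathcal{PO}$ and set $v^*:=\bigl(U_1(Y_1^*),\ldots,U_n(Y_n^*)\bigr)\in\R^n$. I would introduce the attainable (downward-closed) utility set
\[
V:=\bigl\{v\in\R^n:\exists\,\{Y_i\}_{i=1}^n\in\mathcal{IR}\text{ with }v_i\le U_i(Y_i)\text{ for every }i\in\mathcal{N}\bigr\},
\]
which is convex, since concavity of each $U_i$ combined with convexity of $\mathcal{IR}$ (the latter following from the affine structure of $\mathcal{A}$ together with concavity of the $U_i$'s) passes to $V$. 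Pareto optimality of $\{Y_i^*\}_{i=1}^n$ (which implies weak Pareto optimality) forces $V\cap\bigl(v^*+(0,\infty)^n\bigr)=\varnothing$. A strict separation of the nonempty open convex set $v^*+(0,\infty)^n$ from $V$ then delivers a nonzero $\lambda\in\R^n$ with $\lambda\cdot v\le\lambda\cdot v^*$ for every $v\in V$. The nonnegativity of $\lambda$ follows by observing that $v^*+(0,\infty)^n$ is unbounded in every positive coordinate direction, so a negative component of $\lambda$ would drive the separating functional to $-\infty$ on that side and violate the separation. Hence $\lambda\in\Lambda$, and since $v^*\in V$, the inequality translates into $\{Y_i^*\}_{i=1}^n\in\mathcal{S}_\lambda$.

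The main obstacle I anticipate is extracting the nonnegativity and nontriviality of $\lambda$ from the separation theorem, which only guarantees a nonzero vector a priori; the downward-closed construction of $V$ together with the unboundedness of $v^*+(0,\infty)^n$ in positive coordinate directions are the tools that pin down $\lambda\in\R_+^n\setminus\{0\}$. The convexity of $V$, the translation of Pareto optimality into the disjointness condition $V\cap(v^*+(0,\infty)^n)=\varnothing$, and the upgrade from weak to strong Pareto optimality via Lemma \ref{lem:po_strict} and Assumption \ref{as:increasing_wrt_premium} are comparatively routine once the framework is set up.
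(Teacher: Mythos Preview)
Your proposal is correct and follows essentially the same separating-hyperplane strategy as the paper's proof: both construct the downward-closed attainable utility set in $\R^n$, show it is disjoint from the ``strictly better'' cone at $v^*$ by (weak) Pareto optimality, separate, and then extract $\lambda\ge 0$ from the geometry. The only cosmetic differences are that the paper takes an explicit convex hull (which you correctly observe is unnecessary given concavity of the $U_i$ and convexity of $\mathcal{IR}$), uses the closed orthant minus its vertex rather than your open orthant, and derives $\lambda\ge 0$ from the downward-closedness of $V$ rather than from the upward unboundedness of the cone; none of these affect the substance of the argument.
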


\bigskip
\section{Comonotone Allocations}
\label{sec:comonotone}

While more explicit characterizations of solutions to \eqref{eq:weighted_infconv} are difficult to determine in practice, it is well known that Pareto-optimal allocations are comonotone\footnote{A random vector $\{Z_i\}_{i=1}^n$ is said to be \emph{comonotone} if $\left[Z_i(\omega_1)-Z_j(\omega_2)\right]\left[Z_i(\omega_1)-Z_j(\omega_2)\right]\ge0$, for all $\omega_1,\omega_2\in \Omega$ and $i,j\in\{1,\dots,n\}$.}
when each agent's preference preserves the concave order (i.e., preferences are Schur concave). This classical result was first obtained by  \cite{landsberger1994co} in the two-agent case over a discrete state space, and later extended to continuous random variables by \cite{dana2003modelling} and  \cite{filipovic2008optimal}.

The comonotone improvement theorem motivates the notion of a comonotone market. The latter is a special incomplete market where the only admissible allocations are those that are comonotone with the aggregate endowment. This market was previously studied by  \cite{boonen2021competitive}, who provide characterizations of competitive equilibria when each agent's preference admits a representation in terms of the Rank-Dependent Expected-Utility (RDU) model of \cite{quiggin82}. In this section, we examine the relationship between Pareto optima in a comonotone market and those in the general case. We find that when preferences are SSD preserving, each Pareto-optimal allocation in the comonotone market is also Pareto-optimal in the original unconstrained market, therefore providing justification for the existence of the comonotone market itself.

\medskip
\subsection{Schur Concave and SSD-Preserving Maps}
\label{subsec:convex}

First, we recall some standard definitions for the concave and increasing concave orders. For an in-depth overview of the mathematical properties of these orders, we refer to \cite[Section 3.A]{shaked2007stochastic}.

\begin{definition}[Concave Order]
\label{defn:concave_order}
    For all random variables $Z_1,Z_2\in\mathcal{X}$, we say that $Z_2$ dominates $Z_1$ in the concave order, and we write
        \[
            Z_1\ccv Z_2\,,
        \]
    if and only if for every concave function $\phi$,
        \[
            \E[\phi(Z_1)]\le\E[\phi(Z_2)]\,,
        \]
    when the above expectations are defined. If, in addition, the above inequality is strict for every strictly concave function $\phi$, then $Z_2$ is said to strictly dominate $Z_1$ in the concave order, denoted by $Z_1 \stccv Z_2$.
\end{definition}

\smallskip

\begin{definition}[Schur Concavity]
    A functional $U:\mathcal{X}\to\R$ is Schur concave if for all $Z_1,Z_2\in\mathcal{X}$ such that $Z_1\ccv Z_2$, we have
        \[
            U(Z_1)\le U(Z_2)\,.
        \]
    Similarly, $U$ is strictly Schur concave if for all $Z_1,Z_2\in\mathcal{X}$ such that $Z_1\stccv Z_2$, we have
        \[
            U(Z_1)<U(Z_2)\,.
        \]
\end{definition}

\smallskip

\begin{definition}[Increasing Concave Order]
\label{defn:inc_concave_order}
    For all random variables $Z_1,Z_2$ on a probability space $(\Omega,\mathcal{F},\Pr)$, we say that $Z_2$ dominates $Z_1$ in the increasing concave order, and we write
    \[Z_1\icv Z_2\,,\]
    if and only if for every increasing concave function $\phi$,
    \[\E[\phi(Z_1)]\le\E[\phi(Z_2)]\,,\]
    when the above expectations are defined. If, in addition, the above inequality is strict for every increasing strictly concave function $\phi$, then $Z_2$ is said to strictly dominate $Z_1$ in the concave order, denoted by $Z_1\sticv Z_2$.
\end{definition}

In economic theory, the increasing concave order is a classical notion of risk aversion, and it is commonly referred to as second-order stochastic dominance. A preference functional that is consistent with respect to this order is said to be strongly risk averse. The literature on the behaviour-theoretic foundations of strong risk aversion is vast (e.g., \cite{rothschild1978increasing} and \cite{quiggin2012generalized}). We refer to these strongly risk averse maps as SSD-preserving maps, as is common in the literature. For more on preferences modelled directly by SSD-preserving maps, see \cite{dana2003modelling} for instance.

\begin{definition}[SSD Consistency]
    A functional $U:\mathcal{X}\to\R$ is SSD consistent (or SSD preserving) if for all $Z_1,Z_2\in\mathcal{X}$ such that $Z_1\icv Z_2$, we have
        \[
            U(Z_1)\le U(Z_2)\,.
        \]
    Similarly, $U$ is strictly SSD preserving if for all $Z_1,Z_2\in\mathcal{X}$ such that $Z_1\sticv Z_2$, we have
        \[
            U(Z_1)<U(Z_2)\,.
        \]
\end{definition}

For more on the relationship between the concave and increasing concave orders, we refer to \cite{shaked2007stochastic} and \cite{dana2005representation}. In particular, the following two results will be relevant for the present paper. The first result provides an equivalence between Schur concavity and SSD consistency for monotone functionals. The next result shows that given concavity and an additional regularity assumption, Schur concavity and SSD consistency are equivalent to law-invariance. Together, these results characterize monotone, concave, and law-invariant functionals as a class of strongly risk averse preferences.

\begin{proposition}
\label{prop:ssd_vs_schur}
    A functional $U:\mathcal{X}\to\R$ is SSD preserving if and only if it is monotone and Schur concave.
\end{proposition}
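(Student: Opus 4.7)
The plan is to prove the two implications separately. The direction $(\Rightarrow)$ will be essentially immediate. For monotonicity, observe that $Z_1 \le Z_2$ pointwise forces $\E[\phi(Z_1)] \le \E[\phi(Z_2)]$ for every increasing $\phi$, and in particular for every increasing concave $\phi$, so $Z_1 \icv Z_2$; S.S.D.\ preservation then yields $U(Z_1) \le U(Z_2)$. For Schur concavity, every increasing concave $\phi$ is in particular concave, so the class of test functions defining $\icv$ is a subset of those defining $\ccv$; hence $Z_1 \ccv Z_2$ trivially implies $Z_1 \icv Z_2$, and again $U(Z_1) \le U(Z_2)$.

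The real content is in $(\Leftarrow)$. First, I would record a free preliminary observation: Schur concavity alone already forces law-invariance of $U$, since $Z_1 \stackrel{d}{=} Z_2$ makes both $Z_1 \ccv Z_2$ and $Z_2 \ccv Z_1$ hold trivially. Given $Z_1 \icv Z_2$, the core step is then to produce a Strassen-type coupling: random variables $\tilde{Z}_1 \stackrel{d}{=} Z_1$, $\tilde{Z}_2 \stackrel{d}{=} Z_2$, and an auxiliary $W$, all defined on $(\Omega,\mathcal{F},\Pr)$, such that $\tilde{Z}_1 \le W$ almost surely and $W \ccv \tilde{Z}_2$. Once this is in hand, chaining law-invariance, monotonicity, and Schur concavity delivers
\[ U(Z_1) \;=\; U(\tilde{Z}_1) \;\le\; U(W) \;\le\; U(\tilde{Z}_2) \;=\; U(Z_2), \]
which is the desired inequality.

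The main obstacle is producing the intermediate variable $W$. This is a classical decomposition theorem for the increasing concave order, a variant of Strassen's theorem (see, e.g., \cite{shaked2007stochastic}, Section 4.A), and on the atomless probability space at hand it can be realized explicitly via a quantile coupling of $Z_1$ and $Z_2$. The heuristic behind the decomposition is that the increasing concave order bundles two sources of improvement: a pointwise-improvement component, captured by the step $\tilde{Z}_1 \le W$, and a risk-reduction (concave-order) component, captured by the step $W \ccv \tilde{Z}_2$. Splitting $\icv$ into these two steps is precisely what allows monotonicity and Schur concavity—each of which addresses exactly one of the two components—to combine and give back full S.S.D.\ preservation.
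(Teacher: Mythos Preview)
Your proposal is correct. The paper itself does not give an argument for this proposition but merely cites \cite[Proposition 2.1]{dana2005representation}; your outline is precisely the standard proof behind that citation (law-invariance from Schur concavity, then a Strassen-type splitting of $\icv$ into a pointwise step and a concave-order step, chained via monotonicity and Schur concavity).

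One small remark on the decomposition in the $(\Leftarrow)$ direction: the version most commonly quoted from Strassen (and the one appearing in Shaked--Shanthikumar, Theorem~4.A.5, after dualizing from $\le_{icx}$) is actually the reverse order, namely
\[
\tilde Z_1 \ccv W \quad\text{and}\quad W \le \tilde Z_2 \ \text{a.s.},
\]
rather than the order $\tilde Z_1 \le W$ and $W \ccv \tilde Z_2$ that you state. Both orderings yield the same chain $U(Z_1)=U(\tilde Z_1)\le U(W)\le U(\tilde Z_2)=U(Z_2)$, simply with the roles of ``monotonicity'' and ``Schur concavity'' swapped; so your argument goes through unchanged. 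If you want to cite Section~4.A of Shaked--Shanthikumar directly, it is safest to use the standard ordering above, since the existence of a \emph{bounded} intermediate $W\in L^\infty$ is then immediate (its essential supremum is controlled by that of $Z_2$ and its essential infimum by that of $Z_1$), whereas in your ordering the boundedness of $W$ needs a short additional check.
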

\begin{proof}
    See \cite[Proposition 2.1]{dana2005representation}.
\end{proof}

\begin{proposition}
\label{prop:schur_law_invariant}
    Suppose that a functional $U:\mathcal{X}\to\R$ is concave and upper semicontinuous with respect to the norm topology on $L^\infty$. Then Schur concavity of $U$ is equivalent to law-invariance of $U$.
\end{proposition}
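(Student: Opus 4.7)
The proof splits naturally into two directions. The easy direction is that Schur concave implies law invariant: if $Z_1 \stackrel{d}{=} Z_2$, then $\mathbb{E}[\phi(Z_1)] = \mathbb{E}[\phi(Z_2)]$ for every concave $\phi$, so both $Z_1 \ccv Z_2$ and $Z_2 \ccv Z_1$, and Schur concavity forces $U(Z_1) = U(Z_2)$. This half uses neither concavity nor upper semi-continuity of $U$.

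For the converse direction, the plan is to exploit the classical Hardy--Littlewood--P\'olya--Chong characterization of the concave order on atomless probability spaces: $Z_1 \ccv Z_2$ is equivalent to $Z_2$ lying in the appropriate closed convex hull of the rearrangement class $R(Z_1) := \{\tilde Z_1 \in \mathcal{X} : \tilde Z_1 \stackrel{d}{=} Z_1\}$. Atomlessness of $(\Omega,\mathcal{F},\mathbb{P})$ is critical here and enters precisely at this step. Consequently, whenever $Z_1 \ccv Z_2$, one can extract an approximating sequence $W_n = \sum_k \alpha_k^{(n)} Z_1^{(n,k)}$, with $Z_1^{(n,k)} \in R(Z_1)$, $\alpha_k^{(n)} \geq 0$, and $\sum_k \alpha_k^{(n)} = 1$, that converges to $Z_2$ in a suitable sense.

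Once such an approximation is in hand, the rest is short. Concavity of $U$ gives $U(W_n) \geq \sum_k \alpha_k^{(n)} U(Z_1^{(n,k)})$, and law invariance collapses the right-hand side to $U(Z_1)$ since every $Z_1^{(n,k)} \stackrel{d}{=} Z_1$. Hence $U(W_n) \geq U(Z_1)$ for every $n$, and upper semi-continuity yields $U(Z_2) \geq \limsup_n U(W_n) \geq U(Z_1)$, which is exactly Schur concavity applied to the pair $(Z_1,Z_2)$.

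The main obstacle is topological. The natural mode of convergence for the Hardy--Littlewood--P\'olya step on $L^\infty$ is $\sigma(L^\infty, L^1)$, which is strictly weaker than the $L^\infty$-norm in which upper semi-continuity is assumed, so the passage to the limit is not justified a priori. The approximating sequence $\{W_n\}$ is, however, uniformly bounded by $\|Z_1\|_\infty$, which opens two ways to close the gap. The first is to invoke the by-now-standard fact that a law-invariant, concave, norm-upper-semicontinuous functional on $L^\infty$ is automatically $\sigma(L^\infty,L^1)$-upper-semicontinuous on norm-bounded sets, in the spirit of Jouini--Schachermayer--Touzi and Filipovic--Svindland. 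The second is to build a norm-convergent approximation directly, via conditional expectations along a refining sequence of finite $\sigma$-algebras together with rearrangements inside each cell (again leveraging atomlessness). Either route suffices to pass to the limit and complete the proof.
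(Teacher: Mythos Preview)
Your argument is correct. The easy direction (Schur concave $\Rightarrow$ law-invariant) is actually simpler than what the paper does: the paper routes this through Grechuk's result (norm u.s.c.\ $\Rightarrow$ $\sigma(L^\infty,L^1)$ u.s.c.\ under concavity and Schur concavity) followed by Dana's equivalence, whereas your two-line observation that equidistributed variables dominate each other in the concave order needs no hypotheses at all.

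For the hard direction, the two proofs coincide in substance. The paper simply cites Jouini--Schachermayer--Touzi to upgrade norm u.s.c.\ to $\sigma(L^\infty,L^1)$ u.s.c.\ under law-invariance and concavity, and then cites Dana's Theorem~4.1 for the equivalence. Your proof unpacks Dana's argument explicitly via the Hardy--Littlewood--P\'olya characterization of the concave order (i.e., $Z_2$ lies in the $\sigma(L^\infty,L^1)$-closed convex hull of the rearrangement class of $Z_1$), and then invokes the same JST automatic-continuity step to pass to the limit. So your ``option~(a)'' is exactly the paper's route with the black box opened. Your ``option~(b)'' would give a genuinely self-contained alternative avoiding JST, but as written it is only a sketch; making it rigorous requires showing that one can approximate in $L^\infty$-norm while preserving the concave order at each stage, which takes some additional work (e.g., reducing to simple functions and using the finite-dimensional Hardy--Littlewood--P\'olya/Birkhoff theorem on each level set).
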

\begin{proof}
    The result follows directly from the following three equivalences.    
    By \cite[Corollary 3.3]{grechuk2012schur}, for a concave Schur-concave functional, norm upper semicontinuity is equivalent to $\sigma(L^\infty,L^1)$ upper semicontinuity.
    By \cite[Theorem 4.1]{dana2005representation}, for a concave and $\sigma(L^\infty,L^1)$ upper semicontinuous functional, Schur concavity and law-invariance are equivalent.
    Finally, by \cite[Theorem 2.2]{JouiniSchachermayerTouzi2006}, for a concave law-invariant functional, $\sigma(L^\infty,L^1)$ upper semicontinuity and norm upper semicontinuity are equivalent.
\end{proof}

\smallskip

\begin{corollary}
    \label{cor:ssd_law_invariant}
    Suppose that a functional $U:\mathcal{X}\to\R$ is monotone, concave, and upper semicontinuous with respect to the norm topology on $L^\infty$. Then $U$ is SSD preserving if and only if it is law-invariant.  
\end{corollary}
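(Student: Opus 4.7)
The plan is to obtain the corollary as a direct chaining of the two immediately preceding results, exploiting the hypotheses in the order in which they become relevant. First, I would invoke Proposition \ref{prop:ssd_vs_schur}, which asserts that a functional is S.S.D.\ preserving if and only if it is monotone and Schur concave. Since monotonicity of $U$ is already part of the corollary's hypothesis, this equivalence collapses to: $U$ is S.S.D.\ preserving if and only if $U$ is Schur concave.

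Next, I would apply Proposition \ref{prop:schur_law_invariant}, which states that for a functional that is both concave and upper semicontinuous with respect to the norm topology on $L^\infty$, Schur concavity and law-invariance are equivalent. Both of these additional structural assumptions on $U$ are part of the corollary's hypothesis, so the proposition applies without further verification. Chaining the two equivalences gives $U$ is S.S.D.\ preserving $\iff$ $U$ is Schur concave $\iff$ $U$ is law-invariant, which is exactly the desired statement.

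Because the entire argument is a two-step composition of already-established equivalences under matching hypotheses, there is no real obstacle to surmount; the only care required is to make sure that, at each step, the hypotheses assumed in the cited proposition are available from the corollary's assumptions (monotonicity for Proposition \ref{prop:ssd_vs_schur}; concavity and norm upper semicontinuity for Proposition \ref{prop:schur_law_invariant}), which is indeed the case. The proof should therefore be at most a few lines long.
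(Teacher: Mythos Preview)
Your proposal is correct and follows exactly the same approach as the paper, which simply states that the result is a direct corollary of Propositions \ref{prop:ssd_vs_schur} and \ref{prop:schur_law_invariant}.
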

\begin{proof}
    Direct corollary of Propositions \ref{prop:ssd_vs_schur} and \ref{prop:schur_law_invariant}.
\end{proof}

\medskip
\subsection{Pareto Optimality in Comonotone Markets}
\label{subsec:comonotone}

Denote the set of all comonotone allocations by
    \begin{equation*}
        \mathcal{A}_C:=\Big\{\left\{Y_i\right\}_{i=1}^{n}\in\mathcal{A}:
            \{Y_i\}_{i=1}^n\mbox{ is comonotone}\Big\}. 
    \end{equation*}

\smallskip

The notion of Pareto optimality in this comonotone market is defined below.

\begin{definition}
\label{defn:cpo}
    An allocation $\left\{Y^*_i\right\}_{i=1}^{n}\in\mathcal{A}_C$ is \emph{Comonotone Pareto Optimal} (CPO) if it is IR and there does not exist any other IR allocation $\left\{Y_i\right\}_{i=1}^{n}\in\mathcal{A}_C$ such that
        \begin{equation*}
            U_i\left(Y_i\right)\geq U_i\left(Y^*_i\right),
                \quad\forall\,i\in\mathcal{N},
        \end{equation*}
    with at least one strict inequality.
\end{definition}

That is, CPO allocations are those that are not dominated by any other \emph{comonotone} allocation, and are therefore Pareto optimal in the (restricted) comonotone market. However, we will show that given SSD consistency, CPO allocations are also Pareto optimal in the original market. Hence, there is no ambiguity in referring to these allocations as ``comonotone Pareto optimal''.

Let $\mathcal{CPO}$ denote the set of all CPO allocations. Analogous to the result of Proposition \ref{prop:weighted_infconv}, we can obtain the following characterization for the set $\mathcal{CPO}$. For a given $\lambda\in\Lambda$, let $\mathcal{CS}_\lambda$ be the set of all maximizers of the following sum-maximization problem:
    \begin{equation*}
        \sup_{\left\{Y_i\right\}_{i=1}^{n}\in\mathcal{IR}\cap\mathcal{A}_C}
            \ \sum_{i=1}^{n}\lambda_i\,U_i(Y_i).
    \end{equation*}

\begin{corollary}
\label{cor:comonotone_characterization}
     Suppose Assumption \ref{as:increasing_wrt_premium} holds. If $U_i$ is monotone and concave for all $i\in\mathcal{N}$, then $\mathcal{CPO}=\underset{\lambda\in\Lambda}\bigcup\mathcal{CS}_\lambda$.
\end{corollary}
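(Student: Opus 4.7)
The plan is to mirror the argument for Proposition \ref{prop:weighted_infconv}, with the general allocation set $\mathcal{A}$ replaced by the comonotone allocation set $\mathcal{A}_C$. The only substantive ingredient one must verify for the proof to go through is that $\mathcal{A}_C$ (and hence $\mathcal{IR} \cap \mathcal{A}_C$) is convex, since the original argument relies on a separation theorem applied to a convex set of attainable utility vectors.

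First I would record the convexity of $\mathcal{A}_C$. Given $\{Y_i\}_{i=1}^n, \{Y_i'\}_{i=1}^n \in \mathcal{A}_C$, both are comonotone allocations of $S$, so (working on an atomless space) each $Y_i$ and $Y_i'$ is a non-decreasing function of $S$. For any $t \in [0,1]$, the map $s \mapsto t\,f_i(s)+(1-t)\,g_i(s)$ is non-decreasing whenever $f_i$ and $g_i$ are non-decreasing; hence $\{tY_i+(1-t)Y_i'\}_{i=1}^n$ is a comonotone allocation of $S$, and membership in $\mathcal{IR}$ is preserved by concavity of each $U_i$. Therefore $\mathcal{IR} \cap \mathcal{A}_C$ is a convex subset of $\mathcal{X}^n$.

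For the easy inclusion $\bigcup_{\lambda \in \Lambda}\mathcal{CS}_\lambda \subseteq \mathcal{CPO}$, suppose $\{Y_i^*\}_{i=1}^n \in \mathcal{CS}_\lambda$ for some $\lambda \in \Lambda$. If $\{Y_i^*\}$ were not CPO, then (using Lemma \ref{lem:po_strict} applied inside the comonotone market, whose proof only uses Assumption \ref{as:increasing_wrt_premium} plus the trivial observation that adding a constant to one component of a comonotone allocation preserves comonotonicity) there would exist $\{Y_i\}_{i=1}^n \in \mathcal{IR} \cap \mathcal{A}_C$ with $U_i(Y_i) > U_i(Y_i^*)$ for every $i$, contradicting the maximality of $\{Y_i^*\}$ in $\mathcal{CS}_\lambda$.

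For the non-trivial inclusion $\mathcal{CPO} \subseteq \bigcup_{\lambda \in \Lambda}\mathcal{CS}_\lambda$, let $\{Y_i^*\}_{i=1}^n \in \mathcal{CPO}$ and consider the set
\[
    \mathcal{V} := \left\{ \bigl(U_1(Y_1)-U_1(Y_1^*),\ldots,U_n(Y_n)-U_n(Y_n^*)\bigr) : \{Y_i\}_{i=1}^n \in \mathcal{IR}\cap\mathcal{A}_C \right\} \subseteq \mathbb{R}^n.
\]
Concavity of each $U_i$ together with convexity of $\mathcal{IR} \cap \mathcal{A}_C$ imply that $\mathcal{V}$ is convex. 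By Lemma \ref{lem:po_strict} (again, for the comonotone market), CPO and weak CPO coincide, so $\mathcal{V}$ is disjoint from the strictly positive orthant $\mathbb{R}_{++}^n$. A standard separation theorem then yields $\lambda \in \mathbb{R}^n \setminus \{0\}$ with $\lambda \cdot v \leq 0$ for all $v \in \mathcal{V}$ and $\lambda \cdot w \geq 0$ for all $w \in \mathbb{R}_{++}^n$; the latter forces $\lambda \in \Lambda$. Unravelling the definition of $\mathcal{V}$ gives $\sum_i \lambda_i U_i(Y_i) \leq \sum_i \lambda_i U_i(Y_i^*)$ for every $\{Y_i\} \in \mathcal{IR} \cap \mathcal{A}_C$, i.e.\ $\{Y_i^*\} \in \mathcal{CS}_\lambda$. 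The main (and essentially only) obstacle is verifying convexity of $\mathcal{A}_C$; once this is in hand, the rest of the argument is a direct transcription of the proof of Proposition \ref{prop:weighted_infconv}.
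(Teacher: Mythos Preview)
Your approach is essentially identical to the paper's, which simply says the proof is the same as that of Proposition~\ref{prop:weighted_infconv} with $\mathcal{IR}$ replaced by $\mathcal{IR}\cap\mathcal{A}_C$, noting only that convex combinations of comonotone allocations are comonotone. You correctly identify this convexity of $\mathcal{A}_C$ as the one extra ingredient and verify it.

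There is, however, a small slip in your separation step: the set $\mathcal{V}$ as you define it is \emph{not} convex in general. Concavity of each $U_i$ together with convexity of $\mathcal{IR}\cap\mathcal{A}_C$ only guarantees that the utility vector of $\{tY_i+(1-t)Y_i'\}$ \emph{dominates} the convex combination $t\,v+(1-t)\,v'$ componentwise; it need not equal it, so $tv+(1-t)v'$ need not lie in $\mathcal{V}$. The fix is routine and is precisely what the paper does in the proof of Proposition~\ref{prop:weighted_infconv}: pass to the downward closure $\mathcal{V}-\R_+^n$ (equivalently, take the convex hull and then argue via concavity that every point of the hull is dominated by an achievable utility vector). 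That set \emph{is} convex, remains disjoint from $\R_{++}^n$ by weak CPO, and the separation argument then goes through as you outline.
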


\begin{proof}
    The proof is identical to that of Proposition \ref{prop:weighted_infconv}, with the set $\mathcal{IR}$ replaced by $\mathcal{IR}\cap\mathcal{A}_C$. Note that convex combinations of comonotone allocations are also comonotone.
\end{proof}

\smallskip

\begin{proposition}
\label{prop:comonotone_improvement}
    For each $\{Y_i\}_{i=1}^n\in\mathcal{A}$ there exists a comonotone allocation $\left\{\tilde{Y}_i\right\}_{i=1}^n\in\mathcal{A}_C$ such that
        \[
            Y_i\ccv\tilde{Y}_i,
                \quad\forall i\in\mathcal{N}\,.
        \]
    If, in addition, $\{Y_i\}_{i=1}^n\not\in\mathcal{A}_C$, then the comonotone allocation $\left\{\tilde{Y}_i\right\}_{i=1}^n\in\mathcal{A}_C$ can be taken such that
        \[
            Y_j\stccv\tilde{Y}_j\,,
        \]
    for some $j\in\mathcal{N}$.
\end{proposition}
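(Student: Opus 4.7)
The plan is to invoke the classical comonotone improvement theorem, following the two-step construction of \cite{landsberger1994co} extended to $n$ agents on an atomless probability space as in \cite{ludkovski2008comonotonicity} and \cite{denuitetal2023comonotonicity}. Concretely, I would first establish the two-agent case. Given $(Y_1,Y_2)\in\mathcal{A}$ with $Y_1+Y_2=S$, set $h_i(s):=\E[Y_i\mid S=s]$, so that $h_1(S)+h_2(S)=S$ and, by the conditional Jensen inequality, $Y_i\ccv h_i(S)$ for $i=1,2$. The remaining task is to rearrange $(h_1,h_2)$ into non-decreasing functions $(f_1,f_2)$ with $f_1+f_2=\mathrm{id}$ such that $h_i(S)\ccv f_i(S)$. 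This is the Landsberger--Meilijson rearrangement step: exploiting atomlessness of $(\Omega,\mathcal{F},\Pr)$, one measure-preservingly reshuffles the joint values of $h_1(S)$ and $h_2(S)$ so that both coordinates become non-decreasing while their pointwise sum is preserved, and the componentwise concave-order improvement follows from the fact that a monotone rearrangement of two functions summing to a fixed function reduces dispersion in each coordinate.

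For the $n$-agent case I would appeal directly to the simultaneous-sort construction of \cite{denuitetal2023comonotonicity}, which produces, in a single step from the joint conditional law of $(Y_1,\ldots,Y_n)$ given $S$, a comonotone allocation $\{\tilde{Y}_i\}_{i=1}^n$ of $S$ satisfying $Y_i\ccv\tilde{Y}_i$ for every $i$. The main obstacle with the more na\"ive alternative---iterated pairwise improvement---is that a pairwise rearrangement with respect to $Y_i+Y_j$ need not preserve comonotonicity with a previously-treated component, so termination of the procedure is not automatic. The simultaneous construction bypasses this difficulty; alternatively one can run the compactness argument of \cite{filipovic2008optimal}, extracting a weak limit of the pairwise-improved sequence that is shown to lie in $\mathcal{A}_C$ and to concave-order dominate the original componentwise.

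Finally, for the strict-improvement claim: if $\{Y_i\}_{i=1}^n\notin\mathcal{A}_C$, then either some $Y_j$ fails to be $\sigma(S)$-measurable, in which case strict conditional Jensen gives $Y_j\stccv\E[Y_j\mid S]\ccv\tilde{Y}_j$; or every $Y_j$ is already $\sigma(S)$-measurable but at least one is a non-monotone function of $S$, in which case the rearrangement step strictly modifies the law of that component and yields strict concave-order improvement for that index. In either case one obtains $Y_j\stccv\tilde{Y}_j$ for some $j\in\mathcal{N}$, which is the claimed strict dominance.
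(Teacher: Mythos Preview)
Your proposal is correct and coincides with the paper's approach: the paper gives no independent argument but simply cites \cite[Theorem~3.1]{CarlierDanaGalichon2012} and \cite[Theorem~3.1]{denuitetal2023comonotonicity}, and your sketch outlines exactly the construction in those references (conditional expectation plus Landsberger--Meilijson rearrangement, or the simultaneous sort for $n$ agents). The only step you pass over quickly is the strict-improvement claim in the case where every $Y_i$ is already $\sigma(S)$-measurable but non-monotone in $S$; the assertion that the rearrangement ``strictly modifies the law'' of some component is not automatic---one needs that concave-order equality with a non-decreasing rearrangement forces the original to already be non-decreasing---and this is precisely what the cited works establish in detail.
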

\begin{proof}
    See, e.g., \cite[Theorem 3.1]{CarlierDanaGalichon2012} or \cite[Theorem 3.1]{denuitetal2023comonotonicity}.
\end{proof}

\smallskip

\begin{corollary}
\label{cor:comonotone_improvement_utilities}
    Suppose that $U_i$ is Schur concave, for each $i\in\mathcal{N}$. Then for each $\{Y_i\}_{i=1}^n\in\mathcal{A}$ there exists a comonotone allocation $\left\{\tilde{Y}_i\right\}_{i=1}^n\in\mathcal{A}_C$ such that
        \[
            U_i(Y_i)\le U_i(\tilde{Y}_i),
                \quad\forall i\in\mathcal{N}\,.
        \]

        \smallskip

        \noindent   If, in addition, $U_i$ is strictly Schur concave for each $i\in\mathcal{N}$ and $\{Y_i\}_{i=1}^n\not\in\mathcal{A}_C$, then there exists a comonotone allocation $\left\{\tilde{Y}_i\right\}_{i=1}^n\in\mathcal{A}_C$ such that
        \[
            U_j(Y_j)<U_j(\tilde{Y}_j)\,,
        \]
    for some $j\in\mathcal{N}$.
\end{corollary}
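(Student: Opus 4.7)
The plan is to apply Proposition \ref{prop:comonotone_improvement} essentially as a black box, and then translate the concave-order domination into utility domination via the definition of Schur concavity. This is a short chain of two implications, so I do not anticipate a genuine obstacle; the only thing to be careful about is bookkeeping the strict version.

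First, I would invoke Proposition \ref{prop:comonotone_improvement}: given any $\{Y_i\}_{i=1}^n\in\mathcal{A}$, there exists a comonotone allocation $\{\tilde Y_i\}_{i=1}^n\in\mathcal{A}_C$ with $Y_i\ccv\tilde Y_i$ for every $i\in\mathcal{N}$. Then, because $U_i$ is Schur concave by hypothesis, the very definition of Schur concavity yields $U_i(Y_i)\le U_i(\tilde Y_i)$ for each $i\in\mathcal{N}$, which is the first assertion.

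For the second assertion, I would use the strengthened conclusion of Proposition \ref{prop:comonotone_improvement}: if $\{Y_i\}_{i=1}^n\notin\mathcal{A}_C$, then the comonotone allocation $\{\tilde Y_i\}_{i=1}^n$ can be chosen so that, in addition to $Y_i\ccv\tilde Y_i$ for all $i$, we have $Y_j\stccv\tilde Y_j$ for some $j\in\mathcal{N}$. Since $U_j$ is now assumed to be strictly Schur concave, the strict concave-order inequality implies $U_j(Y_j)<U_j(\tilde Y_j)$, giving the desired strict improvement. The remaining agents $i\ne j$ continue to satisfy $U_i(Y_i)\le U_i(\tilde Y_i)$ from the weak part.

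The main (and only) subtlety to flag is making sure that the \emph{same} comonotone allocation $\{\tilde Y_i\}$ simultaneously witnesses the weak inequality for all agents and the strict inequality for some particular agent $j$; this is exactly what the two clauses of Proposition \ref{prop:comonotone_improvement} jointly guarantee, so no extra construction is needed. Beyond this, the proof is a one-line invocation plus the definition of (strict) Schur concavity.
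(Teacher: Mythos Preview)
Your proposal is correct and follows exactly the paper's own approach: the paper's proof is the single line ``Direct consequence of Proposition \ref{prop:comonotone_improvement} and Definition \ref{defn:concave_order},'' which is precisely the two-step chain (comonotone improvement in the concave order, then the definition of (strict) Schur concavity) that you spell out.
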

\begin{proof}
    Direct consequence of Proposition \ref{prop:comonotone_improvement} and Definition \ref{defn:concave_order}.
\end{proof}

The following results illustrate the relationship between PO allocations and CPO allocations, given Schur concavity of preferences.

\begin{theorem}
\label{thm:PO_vs_CPO}
    Suppose that each $U_i$ is concave and SSD preserving, and that Assumption \ref{as:increasing_wrt_premium} holds. Then $\mathcal{PO}\ne\varnothing$ if and only if $\mathcal{CPO}\ne\varnothing$. Furthermore, when Pareto optima and comonotone Pareto optima exist, we have for all $\lambda\in\Lambda$,
        \begin{equation}
        \label{eq:inf_convs_equal}
            \sup_{\{Y_i\}_{i=1}^{n}\in\mathcal{IR}}
                \left\{\sum_{i=1}^{n}\lambda_i\,U_i(Y_i)\right\}
                =\sup_{\{Y_i\}_{i=1}^{n}\in\mathcal{IR}\cap\mathcal{A}_C}
                \left\{\sum_{i=1}^{n}\lambda_i\,U_i(Y_i)\right\}\,.
        \end{equation}
\end{theorem}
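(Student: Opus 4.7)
The plan is to first establish the variational identity \eqref{eq:inf_convs_equal} by a direct application of comonotone improvement, and then deduce the existence equivalence as a corollary via the Negishi-weight characterizations in Proposition \ref{prop:weighted_infconv} and Corollary \ref{cor:comonotone_characterization}.

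For \eqref{eq:inf_convs_equal}, the inequality $\geq$ is immediate, since $\mathcal{IR}\cap\mathcal{A}_C \subseteq \mathcal{IR}$. For the reverse inequality, I would start from an arbitrary allocation $\{Y_i\}_{i=1}^n \in \mathcal{IR}$. Since each $U_i$ is S.S.D.\ preserving, Proposition \ref{prop:ssd_vs_schur} gives that each $U_i$ is Schur concave, so Corollary \ref{cor:comonotone_improvement_utilities} produces a comonotone allocation $\{\tilde{Y}_i\}_{i=1}^n \in \mathcal{A}_C$ with $U_i(\tilde{Y}_i) \geq U_i(Y_i)$ for every $i$. The individual rationality of $\{Y_i\}$ then propagates: $U_i(\tilde{Y}_i) \geq U_i(Y_i) \geq U_i(X_i)$, so $\{\tilde{Y}_i\} \in \mathcal{IR}\cap\mathcal{A}_C$. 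Since all $\lambda_i \geq 0$, this implies $\sum_i \lambda_i U_i(\tilde{Y}_i) \geq \sum_i \lambda_i U_i(Y_i)$, and taking suprema yields the $\leq$ inequality in \eqref{eq:inf_convs_equal}.

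For the existence equivalence, I would argue each direction separately. If $\mathcal{PO} \neq \varnothing$, then by Proposition \ref{prop:weighted_infconv} there exists $\lambda \in \Lambda$ and $\{Y_i^*\} \in \mathcal{S}_\lambda$. Applying comonotone improvement to $\{Y_i^*\}$ yields $\{\tilde{Y}_i\} \in \mathcal{IR}\cap\mathcal{A}_C$ with $\sum_i \lambda_i U_i(\tilde{Y}_i) \geq \sum_i \lambda_i U_i(Y_i^*)$, and the latter equals the supremum of $\sum_i \lambda_i U_i(\cdot)$ over $\mathcal{IR}$, which by \eqref{eq:inf_convs_equal} is the same as its supremum over $\mathcal{IR}\cap\mathcal{A}_C$. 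Hence $\{\tilde{Y}_i\} \in \mathcal{CS}_\lambda \subseteq \mathcal{CPO}$ by Corollary \ref{cor:comonotone_characterization}. Conversely, if $\mathcal{CPO} \neq \varnothing$, Corollary \ref{cor:comonotone_characterization} furnishes $\lambda \in \Lambda$ and $\{Y_i^*\} \in \mathcal{CS}_\lambda$, and \eqref{eq:inf_convs_equal} directly gives $\{Y_i^*\} \in \mathcal{S}_\lambda \subseteq \mathcal{PO}$.

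The potential obstacle is really just bookkeeping: making sure the comonotone improvement from Corollary \ref{cor:comonotone_improvement_utilities} (which requires Schur concavity) is legitimately available under the stated hypotheses, and that individual rationality is preserved under improvement. Both are painless once one invokes Proposition \ref{prop:ssd_vs_schur} to convert S.S.D.\ preservation into monotonicity plus Schur concavity. No compactness, fixed-point, or duality machinery beyond what is already developed in Section \ref{sec:comonotone} is needed.
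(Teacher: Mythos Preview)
Your proposal is correct and follows essentially the same approach as the paper: both arguments hinge on applying comonotone improvement (Corollary \ref{cor:comonotone_improvement_utilities}) to pass from $\mathcal{IR}$ to $\mathcal{IR}\cap\mathcal{A}_C$ without loss of utility, checking that individual rationality is preserved, and then invoking the Negishi-weight characterizations (Proposition \ref{prop:weighted_infconv} and Corollary \ref{cor:comonotone_characterization}) to translate between existence of optima and attainment of the suprema. Your organization is in fact slightly cleaner than the paper's: you establish \eqref{eq:inf_convs_equal} unconditionally first (as an identity of suprema, regardless of whether maximizers exist), whereas the paper proves it separately within each existence direction and resorts to a contradiction argument for the $\mathcal{CS}_\lambda\neq\varnothing\Rightarrow\mathcal{S}_\lambda\neq\varnothing$ step.
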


We obtain the following result as a corollary, which describes the relationship between PO and CPO in this case.

\begin{corollary}
\label{cor:po_ac_cpo}
    Suppose that each $U_i$ is concave and SSD preserving, and Assumption \ref{as:increasing_wrt_premium} holds. Then $\mathcal{PO}\cap\mathcal{A}_C=\mathcal{CPO}$. If, in addition, each $U_i$ is strictly SSD preserving, then $\mathcal{PO}=\mathcal{CPO}$.
\end{corollary}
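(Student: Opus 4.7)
\emph{Proof plan.} My strategy is to reduce both equalities to tools already established in the paper: the Negishi-weight characterizations (Proposition \ref{prop:weighted_infconv} and Corollary \ref{cor:comonotone_characterization}), the identity of sup-values from Theorem \ref{thm:PO_vs_CPO}, and the strict comonotone improvement from Proposition \ref{prop:comonotone_improvement}.

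For the first equality $\mathcal{PO}\cap\mathcal{A}_C=\mathcal{CPO}$, I will first observe that S.S.D.\ preservation forces monotonicity (via Proposition \ref{prop:ssd_vs_schur}), so Proposition \ref{prop:weighted_infconv} and Corollary \ref{cor:comonotone_characterization} apply and yield $\mathcal{PO}=\bigcup_{\lambda\in\Lambda}\mathcal{S}_\lambda$ and $\mathcal{CPO}=\bigcup_{\lambda\in\Lambda}\mathcal{CS}_\lambda$. Theorem \ref{thm:PO_vs_CPO} then says that, for every $\lambda\in\Lambda$, the sup of $\sum_i\lambda_iU_i(Y_i)$ over $\mathcal{IR}$ coincides with the sup over $\mathcal{IR}\cap\mathcal{A}_C$. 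From this I will deduce $\mathcal{S}_\lambda\cap\mathcal{A}_C=\mathcal{CS}_\lambda$: a comonotone allocation maximizes the weighted sum on the smaller set iff it attains the common sup value iff it maximizes on the larger set. Taking a union over $\lambda\in\Lambda$ yields the claimed equality.

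For the second equality, under strict S.S.D.\ preservation, it suffices to prove $\mathcal{PO}\subseteq\mathcal{A}_C$, because the first part then gives $\mathcal{PO}=\mathcal{PO}\cap\mathcal{A}_C=\mathcal{CPO}$. I will argue by contradiction: if $\{Y_i^*\}\in\mathcal{PO}$ fails to be comonotone, Proposition \ref{prop:comonotone_improvement} produces a comonotone allocation $\{\tilde Y_i\}\in\mathcal{A}_C$ with $Y_i^*\ccv\tilde Y_i$ for all $i$ and $Y_j^*\stccv\tilde Y_j$ for some $j$. Since every increasing (strictly) concave test function is in particular (strictly) concave, Definitions \ref{defn:concave_order} and \ref{defn:inc_concave_order} yield $\ccv\Rightarrow\icv$ and $\stccv\Rightarrow\sticv$. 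Hence S.S.D.\ preservation gives $U_i(\tilde Y_i)\ge U_i(Y_i^*)$ for all $i$, and strict S.S.D.\ preservation upgrades this to $U_j(\tilde Y_j)>U_j(Y_j^*)$. Individual rationality of $\{\tilde Y_i\}$ follows by chaining with IR of $\{Y_i^*\}$, contradicting the Pareto optimality of $\{Y_i^*\}$.

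The main subtlety is the order-theoretic implication $\stccv\Rightarrow\sticv$, which is precisely what lets the strict version of comonotone improvement interact with the strict version of S.S.D.\ preservation; once this small observation is in place, everything else is bookkeeping around the Negishi characterizations and a direct application of Proposition \ref{prop:comonotone_improvement}.
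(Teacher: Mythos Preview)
Your proposal is correct and follows essentially the same route as the paper's proof. The paper handles the inclusion $\mathcal{PO}\cap\mathcal{A}_C\subseteq\mathcal{CPO}$ by a direct definitional argument (a comonotone allocation that cannot be Pareto-dominated in $\mathcal{IR}$ certainly cannot be dominated in the smaller set $\mathcal{IR}\cap\mathcal{A}_C$), whereas you route both inclusions through the Negishi characterizations and the equality of sup-values from Theorem~\ref{thm:PO_vs_CPO}; for the strict part, both arguments invoke the strict comonotone improvement, and your explicit observation that $\stccv\Rightarrow\sticv$ is exactly the step the paper leaves implicit when it applies Corollary~\ref{cor:comonotone_improvement_utilities} (stated for strict Schur concavity) under a strict S.S.D.\ hypothesis.
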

\begin{proof}
    Suppose that $\{Y_i^*\}_{i=1}^n\in\mathcal{PO}\cap\mathcal{A}_C$. Since this allocation is Pareto optimal, it is not Pareto-dominated by any other allocation. That is, there exists no other allocation $\{\tilde{Y_i}\}_{i=1}^n\in\mathcal{IR}$ such that $U_i(\tilde{Y_i})\ge U_i(Y_i^*)$ for all $i\in\mathcal{N}$, with at least one strict inequality. Hence, there exists no such allocation in $\mathcal{IR}\cap\mathcal{A}_C$ either. Since $\{Y_i^*\}_{i=1}^n$ is a comonotone allocation, this implies that $\{Y_i^*\}_{i=1}^n\in\mathcal{CPO}$, and so $\mathcal{PO}\cap\mathcal{A}_C\subseteq\mathcal{CPO}$.

\smallskip

    For the reverse inclusion, suppose that $\{Y_i^*\}_{i=1}^n\in\mathcal{CPO}$. Then by Theorem \ref{thm:PO_vs_CPO}, there exists $\lambda\in\Lambda$ for which the allocation $\{Y_i^*\}_{i=1}^n$ achieves both infima in \eqref{eq:inf_convs_equal}. Hence, $\{Y_i^*\}_{i=1}^n\in\mathcal{S}_\lambda\subseteq\mathcal{PO}$ by Proposition \ref{prop:weighted_infconv}. Now suppose that each $U_i$ is strictly Schur concave and that $\{Y_i\}_{i=1}^n\in\mathcal{A}\setminus\mathcal{A}_C$. Then by Corollary \ref{cor:comonotone_improvement_utilities}, there exists a comonotone allocation $\{\tilde{Y}_i\}_{i=1}^n$ such that
        \[
            U_i(Y_i)\le U_i(\tilde{Y}_i)\q\forall i\in\mathcal{N}\,,   
        \]
    with at least one strict inequality. Therefore, $\{Y_i\}_{i=1}^n\not\in\mathcal{PO}$, which implies $\mathcal{PO}=\mathcal{PO}\cap\mathcal{A}_C=\mathcal{CPO}$ by the above.
\end{proof}

Define the \emph{utility possibility frontier} to be the set of all $n$-dimensional real vectors that represent the utility achievable by each agent in a Pareto-optimal allocation. That is, 
\[
    \mathcal{UPF}:=\left\{(u_1,\ldots,u_n)\in\R^n:u_i=U_i(Y_i^*), \,\forall i\in\mathcal{N},\mbox{ for some }\{Y_i^*\}_{i=1}^n\in\mathcal{PO}\right\}\,.
\]

\noindent A similar concept can be defined for the comonotone market. Let the \emph{comonotone utility possibility frontier} be the utility achievable by each agent in a comonotone Pareto-optimal allocation:
\[
    \mathcal{CUPF}:=\left\{(u_1,\ldots,u_n)\in\R^n:u_i=U_i(Y_i^*), \, \forall i\in\mathcal{N},\mbox{ for some }\{Y_i^*\}_{i=1}^n\in\mathcal{CPO}\right\}.
\]

\medskip

A consequence of Theorem \ref{thm:PO_vs_CPO} is that for concave Schur-concave utilities, both of these possibility frontiers coincide, as shown by the following.

\medskip

\begin{corollary}
\label{cor:upf}
    Suppose that each $U_i$ is concave and SSD preserving, and that Assumption \ref{as:increasing_wrt_premium} holds. Then $\mathcal{UPF}=\mathcal{CUPF}$.
\end{corollary}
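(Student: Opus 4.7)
The plan is to prove the two inclusions $\mathcal{CUPF}\subseteq\mathcal{UPF}$ and $\mathcal{UPF}\subseteq\mathcal{CUPF}$ separately, relying on Corollary \ref{cor:po_ac_cpo} and Corollary \ref{cor:comonotone_improvement_utilities} as the main inputs. The first inclusion is essentially immediate: by Corollary \ref{cor:po_ac_cpo}, $\mathcal{CPO}=\mathcal{PO}\cap\mathcal{A}_C\subseteq\mathcal{PO}$, so any utility vector realized by a CPO allocation is in particular realized by a PO allocation.

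For the reverse inclusion, fix $(u_1,\ldots,u_n)\in\mathcal{UPF}$ with witness $\{Y_i^*\}_{i=1}^n\in\mathcal{PO}$ satisfying $u_i=U_i(Y_i^*)$. Since S.S.D.\ preservation implies Schur concavity by Proposition \ref{prop:ssd_vs_schur}, Corollary \ref{cor:comonotone_improvement_utilities} supplies a comonotone allocation $\{\tilde{Y}_i\}_{i=1}^n\in\mathcal{A}_C$ with $U_i(Y_i^*)\le U_i(\tilde{Y}_i)$ for every $i\in\mathcal{N}$. Individual rationality of $\{\tilde{Y}_i\}_{i=1}^n$ is inherited via $U_i(\tilde{Y}_i)\ge U_i(Y_i^*)\ge U_i(X_i)$, so if any of these inequalities were strict, $\{\tilde{Y}_i\}_{i=1}^n$ would be an IR allocation Pareto-dominating $\{Y_i^*\}_{i=1}^n$, contradicting its PO status. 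Hence equality holds for every $i$, i.e., $U_i(\tilde{Y}_i)=u_i$.

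It remains to check that $\{\tilde{Y}_i\}_{i=1}^n\in\mathcal{CPO}$, so that the realized utility vector lies in $\mathcal{CUPF}$. If instead there existed $\{Z_i\}_{i=1}^n\in\mathcal{IR}\cap\mathcal{A}_C$ with $U_i(Z_i)\ge U_i(\tilde{Y}_i)$ for every $i$ and a strict inequality for some $i$, then combining with $U_i(\tilde{Y}_i)=U_i(Y_i^*)$ would again contradict Pareto optimality of $\{Y_i^*\}_{i=1}^n$. This completes the argument. There is no real obstacle here; the only conceptual point worth flagging is that although a PO allocation need not be comonotone when strict S.S.D.\ preservation fails (so we cannot invoke the second part of Corollary \ref{cor:po_ac_cpo}), Pareto optimality still forces any comonotone improvement to preserve the utility profile exactly, which is precisely what makes the two frontiers coincide.
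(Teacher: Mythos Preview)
Your proof is correct and follows essentially the same approach as the paper: use $\mathcal{CPO}\subseteq\mathcal{PO}$ from Corollary~\ref{cor:po_ac_cpo} for one inclusion, and for the other apply comonotone improvement (Corollary~\ref{cor:comonotone_improvement_utilities}) to a PO allocation and use Pareto optimality to force equality of utilities. In fact you are slightly more careful than the paper, which does not explicitly verify that the resulting comonotone allocation $\{\tilde{Y}_i\}$ is IR and belongs to $\mathcal{CPO}$; your additional checks fill this small gap cleanly.
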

\begin{proof}
    By Corollary \ref{cor:po_ac_cpo}, we have $\mathcal{CPO}\subseteq\mathcal{PO}$, and so $\mathcal{CUPF}\subseteq\mathcal{UPF}$. For the reverse inclusion, suppose that $\{Y_i^*\}_{i=1}^n\in\mathcal{PO}$. Then by Corollary \ref{cor:comonotone_improvement_utilities}, there exists a comonotone allocation $\left\{\tilde{Y}_i\right\}_{i=1}^n\in\mathcal{A}_C$ such that
        \[
            U_i\left(\tilde{Y}_i\right)\ge U_i(Y_i^*)
                \quad\forall i\in\{1,\ldots,n\}\,.
        \]
    However, since $\{Y_i^*\}_{i=1}^n$ is Pareto optimal, equality must hold for each $i$, implying that $\mathcal{UPF}\subseteq\mathcal{CUPF}$.
\end{proof}

These results provide additional justification for the existence of a comonotone market whenever preferences are SSD preserving. In general, the set of all Pareto-optimal allocations is difficult to characterize. However, whenever Pareto optima exist, comonotone Pareto optima must also exist by Theorem \ref{thm:PO_vs_CPO}. Furthermore, Corollary \ref{cor:po_ac_cpo} implies that these comonotone Pareto optima are indeed Pareto optimal in the unconstrained market. Hence, imposing the restriction that all allocations must be comonotone does not adversely affect the total welfare gain that is possible, as highlighted by Corollary \ref{cor:upf}.

Finally, it is possible to show that $\mathcal{PO}\ne\varnothing$ (equivalently, $\mathcal{CPO}\ne\varnothing)$ under some additional conditions. Notably, by \cite[Theorem 2.5]{filipovic2008optimal}, comonotone Pareto optima exist when preferences are concave, Schur concave, translation invariant, and $\sigma(L^\infty,L^1)$ upper semicontinuous.

More generally, an example of a class of preferences that are concave and SSD preserving is the class of probabilistically sophisticated variational preferences of \cite{maccheroni2006ambiguity}. These preferences admit the representation
    \[
U(X)=\inf_{\mathbb{Q}\in\mathcal{Q}}\left(\E^\mathbb{Q}[u(X)]+\alpha(\mathbb{Q})\right)\,,
    \]
where $u$ is a suitable utility function, $\mathcal{Q}$ is closed under densities with the same distribution, and $\alpha$ is a suitable law-invariant functional. We refer to \cite{RavanelliSvindland2014} for more details on this class of preferences, which includes many common functionals as special cases (e.g., expected utilities and coherent risk measures). It can also be shown under mild assumptions that Pareto-optimal allocations exist when each agent has a probabilistically sophisticated variational preference (e.g., \citealt[Theorem 4.1]{RavanelliSvindland2014}).

\bigskip
\section{Characterization of Pareto Optima for Positively Homogeneous Risk-Averse Monetary Utilities}
\label{sec:explicit}

In this section, we provide the main result of this paper: a characterization of all (comonotone) Pareto-optimal allocations when each agent's preference is represented by a positively homogeneous risk-averse monetary utility. 

\medskip
\subsection{Monetary Utilities}
First, we recall the definition of a monetary utility.
\begin{definition}[Monetary Utility]
\label{defn:monetary}
    A functional $U:\mathcal{X}\to\R$ is a \emph{monetary utility} if it is monotone, concave, and translation invariant.
\end{definition}

\begin{remark}
    It can be shown that under translation invariance, quasi-concavity and concavity are equivalent for monotone functionals \cite[Proposition 5]{delbaen2012monetary}. Hence, concavity in Definition \ref{defn:monetary} can be replaced by quasi-concavity.
\end{remark}

We refer to \cite{delbaen2012monetary} for an extensive overview of monetary utilities. The key property of monetary utilities is translation invariance, which allows for a simplified characterization of $\mathcal{PO}$ and $\mathcal{CPO}$ in terms of solutions of a maximization problem. We note that in contrast to the result of Proposition \ref{prop:weighted_infconv}, the following characterization does not require the functionals to be concave. Let $\One\in\R^n$ denote the vector with $1$ in every entry. When all functionals $U_i$ are translation invariant, we obtain the following result.

\begin{proposition}
    \label{prop:characterization}
    If $U_i$ is translation invariant for all $i\in\mathcal{N}$, then $\mathcal{PO}=\mathcal{S}_\One$.
\end{proposition}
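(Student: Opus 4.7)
The plan is to prove the two inclusions $\mathcal{S}_{\mathbf{1}} \subseteq \mathcal{PO}$ and $\mathcal{PO} \subseteq \mathcal{S}_{\mathbf{1}}$ separately, where the key leverage throughout is translation invariance, which allows cash to be freely redistributed among agents without altering the constraint $\sum_i Y_i = S$.

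For the easier direction $\mathcal{S}_{\mathbf{1}} \subseteq \mathcal{PO}$, I would take $\{Y_i^*\}_{i=1}^n \in \mathcal{S}_{\mathbf{1}}$ and suppose, for contradiction, that some IR allocation $\{Y_i\}_{i=1}^n$ Pareto-dominates it, with $U_i(Y_i) \ge U_i(Y_i^*)$ for every $i$ and at least one strict inequality. Summing these inequalities produces $\sum_i U_i(Y_i) > \sum_i U_i(Y_i^*)$, which contradicts the fact that $\{Y_i^*\}_{i=1}^n$ maximizes the unweighted sum over $\mathcal{IR}$. Since $\mathcal{S}_{\mathbf{1}} \subseteq \mathcal{IR}$ by definition, this shows $\{Y_i^*\}_{i=1}^n \in \mathcal{PO}$.

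The reverse inclusion $\mathcal{PO} \subseteq \mathcal{S}_{\mathbf{1}}$ is where translation invariance does the real work and is the main step of the argument. Suppose $\{Y_i^*\}_{i=1}^n \in \mathcal{PO}$ but $\{Y_i^*\}_{i=1}^n \notin \mathcal{S}_{\mathbf{1}}$. Because $\{Y_i^*\}_{i=1}^n \in \mathcal{IR}$ and is itself a candidate in the sum-maximization problem, failing to achieve the supremum means there exists some $\{Y_i\}_{i=1}^n \in \mathcal{IR}$ with $\sum_i U_i(Y_i) > \sum_i U_i(Y_i^*)$. Setting $\delta_i := U_i(Y_i) - U_i(Y_i^*)$, we have $\Delta := \sum_i \delta_i > 0$, though individual $\delta_i$ may be negative. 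The idea is to rebalance with side payments: define $c_i := -\delta_i + \Delta/n$, so that $\sum_i c_i = 0$, and set $\tilde{Y}_i := Y_i + c_i$. Then $\{\tilde{Y}_i\}_{i=1}^n \in \mathcal{A}$ since $\sum_i \tilde{Y}_i = S$, and translation invariance yields $U_i(\tilde{Y}_i) = U_i(Y_i) + c_i = U_i(Y_i^*) + \Delta/n > U_i(Y_i^*) \ge U_i(X_i)$ for every $i$. Hence $\{\tilde{Y}_i\}_{i=1}^n$ is IR and strictly Pareto-dominates $\{Y_i^*\}_{i=1}^n$, contradicting PO.

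The step I would flag as the crux is the construction of the side-payment vector $(c_i)$: it converts a weak gain in the \emph{aggregate} utility into a simultaneous strict gain for \emph{every} agent, and it is exactly this move that makes the characterization of $\mathcal{PO}$ collapse to the single, unweighted maximization problem $\mathcal{S}_{\mathbf{1}}$, bypassing the need for concavity or any appeal to Proposition \ref{prop:weighted_infconv}. The rest is bookkeeping: verifying feasibility, IR, and that translation invariance is the only property of the $U_i$ that is actually invoked.
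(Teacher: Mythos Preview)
Your proof is correct and follows essentially the same approach as the paper: both directions match, and in the harder inclusion $\mathcal{PO}\subseteq\mathcal{S}_{\One}$ you, like the paper, use translation invariance to convert an aggregate improvement into a strict Pareto improvement via cash side-payments. Your uniform redistribution $c_i=-\delta_i+\Delta/n$ is in fact a bit cleaner than the paper's partition of $\mathcal{N}$ into three index sets with separately chosen transfers, but the underlying idea is identical.
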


This is a well-known result in the literature, and implies that when preferences are translation invariant, the Negishi weights may be taken to be equal. When each of these weights is $1$, finding the Pareto-optimal allocations is equivalent to solving the problem
    \begin{equation*}
        \sup_{\left\{Y_i\right\}_{i=1}^{n}\in\mathcal{IR}}
            \ \sum_{i=1}^{n}U_i(Y_i)\,,
    \end{equation*}
which is also known as the sup-convolution of the preferences $U_i$, for $i\in\mathcal{N}$. A proof of this result can be found in \cite[Proposition 1]{embrechts2018quantile} in the context of risk sharing, without the consideration of individual rationality. For completeness, we provide a proof of Proposition \ref{prop:characterization} in the Appendix. This characterization also applies to comonotone Pareto optima, as per the following result.

\begin{corollary}
\label{cor:comonotone_characterization_translation_invar}
    If $U_i$ is translation invariant for all $i\in\mathcal{N}$, then $\mathcal{CPO}=\mathcal{CS}_\One$.
\end{corollary}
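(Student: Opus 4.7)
The plan is to adapt the proof of Proposition \ref{prop:characterization} by restricting every allocation that appears to the comonotone set $\mathcal{IR}\cap\mathcal{A}_C$. Two elementary facts make this restriction harmless: (i) each $U_i$ is translation invariant, so $U_i(Y_i+c)=U_i(Y_i)+c$ for any $c\in\R$; and (ii) comonotonicity of an $n$-tuple is preserved under adding a vector of constants, since constants are comonotone with every random variable and comonotonicity is characterized pairwise.

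I would first establish the easy inclusion $\mathcal{CS}_\One\subseteq\mathcal{CPO}$. Take $\{Y_i^*\}_{i=1}^n\in\mathcal{CS}_\One$; by construction it is IR. If it were dominated in the comonotone market by some $\{Y_i\}_{i=1}^n\in\mathcal{IR}\cap\mathcal{A}_C$ with $U_i(Y_i)\ge U_i(Y_i^*)$ for all $i$ and strict inequality for at least one index, then summing would give $\sum_i U_i(Y_i)>\sum_i U_i(Y_i^*)$, contradicting maximality of $\{Y_i^*\}_{i=1}^n$ over $\mathcal{IR}\cap\mathcal{A}_C$.

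For the reverse inclusion $\mathcal{CPO}\subseteq\mathcal{CS}_\One$, suppose $\{Y_i^*\}_{i=1}^n\in\mathcal{CPO}$ but $\{Y_i^*\}_{i=1}^n\notin\mathcal{CS}_\One$. Then there exists $\{Y_i\}_{i=1}^n\in\mathcal{IR}\cap\mathcal{A}_C$ with $\varepsilon:=\sum_i\bigl[U_i(Y_i)-U_i(Y_i^*)\bigr]>0$. Define constants $c_i:=\varepsilon/n-\bigl[U_i(Y_i)-U_i(Y_i^*)\bigr]$ and let $\tilde Y_i:=Y_i+c_i$. By construction $\sum_i c_i=0$, so $\sum_i\tilde Y_i=S$ and hence $\{\tilde Y_i\}_{i=1}^n\in\mathcal{A}$; moreover $\{\tilde Y_i\}_{i=1}^n\in\mathcal{A}_C$ by fact (ii) above. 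Translation invariance then yields
\[
U_i(\tilde Y_i)=U_i(Y_i)+c_i=U_i(Y_i^*)+\varepsilon/n>U_i(Y_i^*)\ge U_i(X_i)\,,
\]
for every $i\in\mathcal{N}$, which simultaneously verifies $\{\tilde Y_i\}_{i=1}^n\in\mathcal{IR}\cap\mathcal{A}_C$ and shows that $\{\tilde Y_i\}_{i=1}^n$ strictly dominates $\{Y_i^*\}_{i=1}^n$ in the comonotone market, contradicting $\{Y_i^*\}_{i=1}^n\in\mathcal{CPO}$.

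I do not anticipate any real obstacle: the entire argument is a mechanical transcription of the proof of Proposition \ref{prop:characterization}, and the only new ingredient is the trivial remark that shifting each component of a comonotone allocation by a constant leaves the allocation comonotone. No appeal to Assumption \ref{as:increasing_wrt_premium} or to concavity is needed, consistent with the observation after Proposition \ref{prop:characterization} that translation invariance alone suffices.
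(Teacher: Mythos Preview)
Your proof is correct and follows exactly the route the paper indicates: repeat the argument of Proposition~\ref{prop:characterization} with $\mathcal{IR}$ replaced by $\mathcal{IR}\cap\mathcal{A}_C$, noting that constant shifts preserve comonotonicity. Your single-formula choice $c_i=\varepsilon/n-[U_i(Y_i)-U_i(Y_i^*)]$ is a mild streamlining of the paper's three-set partition $\mathcal{N}_1,\mathcal{N}_2,\mathcal{N}_3$, but the underlying idea---use translation invariance to redistribute the surplus via constants summing to zero---is identical.
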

\begin{proof}
    The proof is identical to that of Proposition \ref{prop:characterization}, with the set $\mathcal{IR}$ replaced by $\mathcal{IR}\cap\mathcal{A}_C$. 
\end{proof}

The property of translation invariance has some convenient implications on the regularity of preferences. Firstly, it is immediate that under translation invariance, Assumption \ref{as:increasing_wrt_premium} is automatically satisfied. Furthermore, monetary utilities are norm continuous, which implies the following result.

\begin{corollary}
\label{cor:monetary_schur_law_invariant}
    A monetary utility $U$ is SSD preserving if and only if it is law-invariant.
\end{corollary}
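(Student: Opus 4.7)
The plan is to reduce the statement to Corollary \ref{cor:ssd_law_invariant}, which already gives the equivalence of S.S.D.\ preservation and law-invariance for functionals that are monotone, concave, and norm upper semicontinuous. A monetary utility is monotone and concave by Definition \ref{defn:monetary}, so the only nontrivial step is to verify that every monetary utility on $L^\infty$ is automatically norm upper semicontinuous (indeed, norm continuous).

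First I would establish the standard fact that a monetary utility $U$ is $1$-Lipschitz with respect to the $L^\infty$ norm. Given any $Z_1,Z_2\in\mathcal{X}$, pointwise we have $Z_1\le Z_2+\|Z_1-Z_2\|_\infty$, so by monotonicity and translation invariance,
\[
U(Z_1)\le U\bigl(Z_2+\|Z_1-Z_2\|_\infty\bigr)=U(Z_2)+\|Z_1-Z_2\|_\infty.
\]
The symmetric argument swapping $Z_1$ and $Z_2$ yields $|U(Z_1)-U(Z_2)|\le\|Z_1-Z_2\|_\infty$, hence $U$ is norm continuous and in particular norm upper semicontinuous.

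Once norm upper semicontinuity is in hand, the hypotheses of Corollary \ref{cor:ssd_law_invariant} are satisfied by $U$, and the equivalence between S.S.D.\ preservation and law-invariance is immediate.

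There is essentially no obstacle here; the content of the corollary is the translation-invariance-driven Lipschitz bound, which is the standard way monetary utilities inherit regularity ``for free.'' The only bookkeeping concern is making sure the invocation of Corollary \ref{cor:ssd_law_invariant} is clearly justified, since that corollary was stated only for functionals with the listed regularity assumptions and does not presuppose translation invariance.
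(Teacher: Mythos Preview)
Your proposal is correct and follows essentially the same approach as the paper: the paper cites the standard $1$-Lipschitz property of monetary utilities (from F\"ollmer--Schied) and then invokes Corollary~\ref{cor:ssd_law_invariant}, while you simply supply the short direct proof of the Lipschitz bound before making the same invocation.
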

\begin{proof}
    Monetary utilities are 1-Lipschitz continuous with respect to the supremum norm on $L^\infty$ (e.g., \citealt[Lemma 4.3]{FollmerSchied2016}). The result then follows from Corollary \ref{cor:ssd_law_invariant}.
\end{proof}

Corollary \ref{cor:monetary_schur_law_invariant} implies that when each $U_i$ is a law-invariant monetary utility, the results of Subsections \ref{subsec:convex} and \ref{subsec:comonotone} apply to describe Pareto optimality through comonotone allocations.

\begin{remark}
\label{rmk:concave_and_trans_invar}
    In the event that each $U_i$ is a law-invariant monetary utility, Propositions \ref{prop:weighted_infconv} and \ref{prop:characterization} imply that
$$\mathcal{PO} = \underset{\lambda\in\Lambda}\bigcup\mathcal{S}_\lambda=\mathcal{S}_\One 
\ \ \hbox{and} \ \ 
\mathcal{CPO}=\underset{\lambda\in\Lambda}\bigcup\mathcal{CS}_\lambda=\mathcal{CS}_\One.$$

\noindent Indeed, it is possible to show that for all $\lambda\in\Lambda$, translation invariance implies that $\mathcal{S}_\lambda\subseteq\mathcal{S}_\One$. 
\end{remark}

\medskip
\subsection{Law-Invariant Positively Homogeneous Monetary Utilities}

In the following, we obtain a more explicit characterization of the set $\mathcal{CPO}$, under the additional assumption that each utility functional is positively homogeneous. Positively homogeneous monetary utilities are well-studied in the insurance and risk management literature, where they are commonly known as coherent risk measures, as introduced by \cite{artzner1999coherent}. We will use the fact that these functionals admit a representation in terms of Choquet integrals. Some preliminaries are provided below.

\begin{definition}
    A set function $\upsilon:\mathcal{F}\to\R$ is a \emph{capacity} if:
    \begin{itemize}
        \item $\upsilon(\varnothing)=0$ and $\upsilon(\Omega)<\infty$; and,
        \medskip
        \item If $A,B\in\mathcal{F}$ are such that $A\subseteq B$, then $\upsilon(A)\le\upsilon(B)$.
    \end{itemize}    
\end{definition}

\begin{definition}
    The \emph{Choquet integral} of $Z\in\mathcal{X}$ with respect to a capacity $\upsilon$ is defined as
    \[\int X\,d\upsilon:=\int_0^\infty\upsilon(X>t)\,dt+\int_{-\infty}^0\left[\upsilon(X>t)-\upsilon(\Omega)\right]\,dt\,.\]
\end{definition}

\begin{definition}
    A function $T:[0,1]\to[0,1]$ is a \emph{distortion function} if is it non-decreasing and satisfies $T(0)=0$ and $T(1)=1$.
\end{definition}

\begin{definition}
    When two random variables $Y,Z\in\mathcal{X}$ have the same distribution, we use the notation $Y\stackrel{d}{\sim}Z$. A subset $H\subseteq\mathcal{X}$ is \emph{law-invariant} if $Y\in\mathcal{H}$ and $Y\stackrel{d}{\sim}Z$ implies $Z\in H$.
\end{definition}

The following representation result is due to \cite{dana2005representation}.

\begin{lemma}
\label{lem:coherent_rep}
    Let $\mathcal{U}:\mathcal{X}\to\R$ be a law-invariant and positively homogeneous monetary utility. Then there exists a $\sigma(L^1,L^\infty)$ closed, convex, law-invariant set $\mathcal{H}\subseteq L_+^1$ such that
        \[
            \E[H]=1\mbox{ for all }H\in\mathcal{H}\,,
        \]
    and
        \[
            \mathcal{U}(Z)=\inf_{H\in\mathcal{H}}\E[HZ]=\inf_{T\in\{\phi_H:H\in\mathcal{H}\}}\int Z\,dT\circ\Pr\,,
        \]

        \smallskip
        
    \noindent where for every $H\in L_+^1$, the function $\phi_H:[0,1]\to[0,1]$ is defined by
            $$\phi_H(x) := \int_0^xF_{H,\Pr}^{-1}(t)\,dt, \ \forall x \in [0,1].$$    
\end{lemma}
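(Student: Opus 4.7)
The plan is to proceed in three stages: obtain a general Fenchel dual representation of $\mathcal{U}$ as an infimum of linear functionals; upgrade the representing set to be law-invariant and $\sigma(L^1,L^\infty)$-closed; then translate the linear representation into the Choquet/distortion one via a Hardy--Littlewood rearrangement argument.

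First, any monetary utility on $L^\infty$ is $1$-Lipschitz for the supremum norm (an immediate consequence of monotonicity and translation invariance), hence norm upper semicontinuous. Combined with law-invariance and the atomlessness of $(\Omega,\mathcal{F},\Pr)$, the Jouini--Schachermayer--Touzi theorem (invoked already in the proof of Proposition \ref{prop:schur_law_invariant}) upgrades this to $\sigma(L^\infty,L^1)$ upper semicontinuity. Together with concavity, translation invariance, and positive homogeneity, the Fenchel--Moreau theorem then yields
\[
\mathcal{U}(Z)=\inf_{H\in\mathcal{H}_0}\E[HZ]
\]
for some convex set $\mathcal{H}_0\subseteq L^1_+$ of densities (with $\E[H]=1$); positive homogeneity forces the convex conjugate to be $\{0,\infty\}$-valued, so no penalty term appears.

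Next, I would upgrade $\mathcal{H}_0$ to a law-invariant, convex, $\sigma(L^1,L^\infty)$-closed set $\mathcal{H}$ without changing the infimum. For $H\in\mathcal{H}_0$ and any $\tilde H\stackrel{d}{\sim} H$, atomlessness supplies a measure-preserving bijection $\tau$ with $\tilde H=H\circ\tau$. A change of variables gives $\E[\tilde H Z]=\E[H\cdot(Z\circ\tau^{-1})]\ge\mathcal{U}(Z\circ\tau^{-1})=\mathcal{U}(Z)$, where the last equality uses law-invariance. Hence enlarging $\mathcal{H}_0$ to its law-invariant orbit preserves the representation, and $\sigma(L^1,L^\infty)$-continuity of $H\mapsto\E[HZ]$ for fixed $Z\in L^\infty$ shows that passing to the law-invariant closed convex hull still preserves it.

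Finally, for each $H\in\mathcal{H}$, the function $\phi_H$ is absolutely continuous, nondecreasing, with $\phi_H(0)=0$ and $\phi_H(1)=\int_0^1 F_{H,\Pr}^{-1}(t)\,dt=\E[H]=1$, hence a distortion function. A direct computation from the layer-cake formula gives
\[
\int Z\,d(\phi_H\circ\Pr)=\int_0^1 F_Z^{-1}(1-u)\,F_{H,\Pr}^{-1}(u)\,du.
\]
The Hardy--Littlewood inequality yields $\E[HZ]\ge\int_0^1 F_Z^{-1}(1-u)F_{H,\Pr}^{-1}(u)\,du$ for every $H\in\mathcal{H}$, and hence $\inf_{H\in\mathcal{H}}\E[HZ]\ge\inf_{H\in\mathcal{H}}\int Z\,d(\phi_H\circ\Pr)$. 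For the reverse inequality, atomlessness supplies, for any fixed $Z$ and any $H\in\mathcal{H}$, a rearrangement $\tilde H\stackrel{d}{\sim} H$ that is anti-comonotone with $Z$; law-invariance of $\mathcal{H}$ places $\tilde H\in\mathcal{H}$, and Hardy--Littlewood becomes an equality, giving $\E[\tilde H Z]=\int Z\,d(\phi_H\circ\Pr)$. Taking infima closes the circle. The main obstacles are the two rearrangement arguments: both rely essentially on atomlessness to realize suitable measure-preserving transformations, and they are the only nontrivial ingredients; the rest reduces to quantile-function bookkeeping and standard duality.
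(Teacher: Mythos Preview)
The paper does not give a proof of this lemma at all: it simply cites \cite[Corollary~4.3]{dana2005representation}. Your three-stage outline is essentially a reconstruction of the standard Kusuoka--Dana argument behind that citation, and the overall architecture (Fenchel--Moreau duality with the Jouini--Schachermayer--Touzi automatic Fatou property, then law-invariance of the dual set, then Hardy--Littlewood to pass to quantiles and distortions) is the right one.

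There is one genuine soft spot. In Stage~2 you assert that for $H,\tilde H\in L^1_+$ with $\tilde H\stackrel{d}{\sim}H$, atomlessness alone supplies a measure-preserving \emph{bijection} $\tau$ with $\tilde H=H\circ\tau$, and then use $\tau^{-1}$ in a change of variables. This is not available on an arbitrary atomless probability space; such transport-by-automorphism results require additional structure (e.g., a standard/Lebesgue space), which the paper does not assume. Fortunately your own Stage~3 already contains the repair: by Hardy--Littlewood, $\inf_{Z'\stackrel{d}{\sim}Z}\E[HZ']=\int_0^1 F_Z^{-1}(1-u)F_H^{-1}(u)\,du$, and this infimum is attained on atomless spaces by an anti-comonotone rearrangement of $Z$ (or, symmetrically, of $H$). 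Combining this with law-invariance of $\mathcal{U}$ gives
\[
\mathcal{U}(Z)=\inf_{H\in\mathcal{H}_0}\inf_{Z'\stackrel{d}{\sim}Z}\E[HZ']=\inf_{H\in\mathcal{H}_0}\int_0^1 F_Z^{-1}(1-u)F_H^{-1}(u)\,du,
\]
which depends on $H$ only through its law; hence enlarging $\mathcal{H}_0$ to its law-invariant $\sigma(L^1,L^\infty)$-closed convex hull leaves the infimum unchanged, with no bijection needed. With this adjustment, your argument goes through and in fact merges Stages~2 and~3 into a single Hardy--Littlewood step, which is closer to how \cite{dana2005representation} actually proceeds.
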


\begin{proof}
    See \cite[Corollary 4.3]{dana2005representation}.
\end{proof}

\smallskip

\begin{remark}
    By \cite[Proposition 5.1]{bellini2021law}, the set $H$ in the representation of Lemma \ref{lem:coherent_rep} can be taken to be in $L_+^\infty$ instead of in $L_+^1$. However, the statement of Lemma \ref{lem:coherent_rep} will suffice for our purposes.
\end{remark}

In a comonotone market, every admissible post-transfer payout $Y_i$ is comonotone with the aggregate endowment $S$. Denote by $\mathcal{X}^\uparrow$ the set of all random variables in $\mathcal{X}$ that are comonotone with $S$. Under this setting, we may restrict the domain of each utility functional to $\mathcal{X}^\uparrow$, which admits the following representation. The proof of this result is provided in the Appendix.

\begin{lemma}
\label{lem:distortion_set_convex}
    Let $U:\mathcal{X}\to\R$ be a law-invariant and positively homogeneous monetary utility. Then for all $Z\in\mathcal{X}^\uparrow$, we have
        \[
            U(Z)=\inf_{T\in\mathcal{T}}\int Z\,dT\circ\Pr\,,
        \]
    where $\mathcal{T}$ is a convex set of convex distortion functions that is sequentially closed under pointwise convergence.
\end{lemma}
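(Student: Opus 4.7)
The plan is to enlarge the set $\mathcal{T}_0 := \{\phi_H : H \in \mathcal{H}\}$ from Lemma \ref{lem:coherent_rep} to a convex, sequentially closed set $\mathcal{T}$ of convex distortion functions, without altering the value of the infimum defining $U(Z)$. A clean way to define $\mathcal{T}$ tautologically is as the collection of all convex distortion functions $T:[0,1]\to[0,1]$ satisfying $\int Z\,dT\circ\Pr\ge U(Z)$ for every $Z\in\mathcal{X}^\uparrow$. The lemma then reduces to the chain
$$U(Z)\;=\;\inf_{T\in\mathcal{T}_0}\int Z\,dT\circ\Pr\;\ge\;\inf_{T\in\mathcal{T}}\int Z\,dT\circ\Pr\;\ge\;U(Z),$$
provided I verify that (i) every $\phi_H$ is a convex distortion function lying in $\mathcal{T}$, (ii) $\mathcal{T}$ is convex, and (iii) $\mathcal{T}$ is sequentially closed under pointwise convergence.

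Point (i) is direct: since $F_{H,\Pr}^{-1}$ is non-decreasing and non-negative, $\phi_H(x)=\int_0^xF_{H,\Pr}^{-1}(t)\,dt$ is a non-decreasing convex function on $[0,1]$ with $\phi_H(0)=0$ and $\phi_H(1)=\E[H]=1$, so it is a convex distortion function; the defining inequality is then immediate from Lemma \ref{lem:coherent_rep}. Point (ii) rests on the observation that for each fixed $Z\in\mathcal{X}^\uparrow$ the map $T\mapsto\int Z\,dT\circ\Pr$ is \emph{linear} in $T$. This is visible from the survival-function expression
$$\int Z\,dT\circ\Pr=\int_0^M T(\Pr(Z>t))\,dt+\int_{-M}^0\bigl[T(\Pr(Z>t))-1\bigr]\,dt,$$
where $M$ is any essential bound on $|Z|$. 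Since convex combinations of convex distortion functions remain convex distortion functions, convex combinations of elements of $\mathcal{T}$ stay in $\mathcal{T}$.

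Point (iii) is the main technical step. First, any pointwise limit of convex distortion functions is again a convex distortion function: monotonicity and convexity pass to pointwise limits, and the endpoints $T_n(0)=0$ and $T_n(1)=1$ are preserved exactly. To propagate the inequality $\int Z\,dT_n\circ\Pr\ge U(Z)$ to the limit, I would apply dominated convergence to the survival-function expression above. Essential boundedness of $Z$ reduces the integration range to the compact interval $[-M,M]$, and $|T_n(\Pr(Z>t))|\le1$ uniformly in $n$, so the constant function $1$ serves as an integrable dominating function; pointwise convergence of $T_n$ to $T$ then yields $\int Z\,dT_n\circ\Pr\to\int Z\,dT\circ\Pr$. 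The main subtlety is ensuring that both tails of the Choquet integral are kept under control by a single domination — the essential bound on $Z$ is precisely what makes this work, and this is the step that relies crucially on working in $L^\infty$.
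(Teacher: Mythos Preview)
Your argument is correct, and it takes a genuinely different route from the paper's proof. The paper constructs $\mathcal{T}$ concretely: it first restricts the densities $H\in\mathcal{H}$ to those that are comonotone with $S$ (i.e., $H\in\mathcal{H}\cap\mathcal{X}^\uparrow$), uses the Hardy--Littlewood inequality to show that this restriction leaves the infimum unchanged, and then exploits comonotonicity of $H_1,H_2$ to verify that $\lambda\phi_{H_1}+(1-\lambda)\phi_{H_2}=\phi_{\lambda H_1+(1-\lambda)H_2}$, so that convexity of $\{\phi_H:H\in\mathcal{H}\cap\mathcal{X}^\uparrow\}$ is inherited from convexity of $\mathcal{H}\cap\mathcal{X}^\uparrow$; finally it takes the pointwise closure. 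Your construction is instead the \emph{maximal} admissible set: you take $\mathcal{T}$ to be all convex distortions $T$ satisfying $\int Z\,dT\circ\Pr\ge U(Z)$ for every $Z\in\mathcal{X}^\uparrow$, and obtain convexity and sequential closure directly from affinity and sequential continuity of $T\mapsto\int Z\,dT\circ\Pr$ (the latter via dominated convergence, exactly as the paper also argues). Your approach is shorter and avoids Hardy--Littlewood and the comonotone-density detour entirely; the paper's approach, on the other hand, yields a smaller and more explicit $\mathcal{T}$ tied to the original representing set $\mathcal{H}$, which can be useful when one wants to describe $\mathcal{T}$ concretely (as in the numerical illustration). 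For the downstream application in Theorem~\ref{thm:cpo_coherent}, only convexity and sequential closure of $\mathcal{T}$ are used, so either construction suffices.
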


To facilitate our characterization result, we first show that all comonotone allocations are translations of a suitable non-negative function of the aggregate endowment. First, define the set $\mathcal{G}$ by the following:
    \begin{equation*}
        \mathcal{G}:=\left\{\left\{g_i\right\}_{i=1}^{n}\,\middle|
            \,g_i:\mathbb{R}_+\rightarrow\mathbb{R}_+\text{ non-decreasing},
            \text{ and }\sum_{i=1}^{n}g_i\left(\cdot\right)=\text{Id}\right\}.
    \end{equation*}
Let $\underline{s}:=\mathrm{ess}\inf S$. Since $S\in L^\infty$, we have $\underline{s}>-\infty$. The following lemma shows that every comonotone allocation can be represented in terms of functions in $\mathcal{G}$.

\begin{lemma}
\label{lem:translation}
    Let $\{Y_i\}_{i=1}^n\in\mathcal{A}_C$. Then there exist functions $\{g_i\}_{i=1}^n\in\mathcal{G}$ and constants $\{c_i\}_{i=1}^n\in\R^n$ such that
        \[
            Y_i=g_i(S-\underline{s})+c_i\quad\forall i\in\mathcal{N}\,,
        \]
    and $\sum_{i=1}^nc_i=\underline{s}$.
\end{lemma}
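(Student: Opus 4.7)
The plan is to invoke the classical comonotone representation theorem to first write each $Y_i$ as a non-decreasing function of the aggregate endowment $S$, and then perform a translation by $\underline{s}$ to recast this representation in the required form.

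More precisely, I would first use the fact that since $\{Y_i\}_{i=1}^n$ is comonotone with $\sum_i Y_i = S$, there exist non-decreasing functions $f_i:\mathbb{R}\to\mathbb{R}$ satisfying $\sum_{i=1}^n f_i(s) = s$ on (a set containing) the support of $S$, such that $Y_i = f_i(S)$ a.s.\ for each $i\in\mathcal{N}$. This is a standard consequence of comonotonicity; a constructive proof proceeds by noting that on the a.s.\ comonotone set, if $S(\omega_1)=S(\omega_2)$ then $Y_i(\omega_1)=Y_i(\omega_2)$ (otherwise comonotonicity would be violated on some other coordinate), and for $S(\omega_1)<S(\omega_2)$ we must have $Y_i(\omega_1)\le Y_i(\omega_2)$. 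This defines $f_i$ on the range of $S$, and one can extend $f_i$ to all of $\mathbb{R}$ by, e.g., setting $f_i$ constant at $f_i(\underline{s})$ on $(-\infty,\underline{s}]$ and extending each $f_i$ linearly with slope $1/n$ beyond $\bar{s}:=\mathrm{ess}\sup S$, preserving both the monotonicity of each $f_i$ and the identity $\sum_i f_i(s)=s$ on $[\underline{s},\infty)$.

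Once such $f_i$ are in hand, I would simply set
\[
    c_i := f_i(\underline{s})
    \qquad\text{and}\qquad
    g_i(t) := f_i(t+\underline{s}) - f_i(\underline{s}), \ t\ge 0.
\]
Each $g_i:\mathbb{R}_+\to\mathbb{R}$ is non-decreasing (since $f_i$ is) and satisfies $g_i(0)=0$, so $g_i\ge 0$ and hence $g_i:\mathbb{R}_+\to\mathbb{R}_+$. Summing over $i$ yields $\sum_i g_i(t) = \sum_i f_i(t+\underline{s}) - \sum_i f_i(\underline{s}) = (t+\underline{s}) - \underline{s} = t$, i.e., $\sum_i g_i = \mathrm{Id}$, so $\{g_i\}_{i=1}^n\in\mathcal{G}$. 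Likewise $\sum_i c_i = \sum_i f_i(\underline{s}) = \underline{s}$. Finally, $g_i(S-\underline{s}) + c_i = f_i(S) - f_i(\underline{s}) + f_i(\underline{s}) = f_i(S) = Y_i$ a.s., as desired.

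The only delicate point — and the main obstacle — is rigorously producing the non-decreasing functions $f_i$ with $\sum_i f_i = \mathrm{Id}$ on the support of $S$. This is a well-established property of comonotone random vectors summing to $S$ (see, e.g., the comonotone decomposition results used in \cite{CarlierDanaGalichon2012} or \cite{denuitetal2023comonotonicity}), so I would either cite this directly or include a short self-contained argument based on the pointwise comonotonicity relation above. The extension of $f_i$ from the range of $S$ to all of $\mathbb{R}_+$ (needed so that $g_i$ is globally defined on $\mathbb{R}_+$) is then routine and can be done while maintaining both monotonicity and the identity constraint.
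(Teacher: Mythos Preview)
Your proposal is correct and follows essentially the same approach as the paper: obtain non-decreasing functions $f_i$ with $Y_i=f_i(S)$ and $\sum_i f_i=\mathrm{Id}$, then set $c_i=f_i(\underline{s})$ and $g_i(t)=f_i(t+\underline{s})-f_i(\underline{s})$. The only difference is that the paper cites \cite[Proposition 4.5]{denneberg1994non} directly, which furnishes increasing 1-Lipschitz functions $f_i:\R\to\R$ satisfying $\sum_i f_i(x)=x$ for all $x\in\R$, thereby sidestepping your extension argument entirely.
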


\smallskip

\begin{proof}
    Let $\{Y_i\}_{i=1}^n$ be a comonotone allocation. By a standard result \cite[Proposition 4.5]{denneberg1994non}, there exists increasing 1-Lipschitz functions $f_i:\R\mapsto\R$ such that $f_i(S)=Y_i$ and $\sum_{i=1}^nf_i(x)=x$ for all $x\in\R$. Let $c_i:=f_i(\underline{s})$, and define the function $g_i$ by
        \begin{align*}
            g_i:\R_+&\to\R_+\\
            x&\mapsto f_i(\underline{s}+x)-f_i(\underline{s})\,.
        \end{align*}
    Then $g_i$ is non-decreasing, and
        \[
\sum_{i=1}^ng_i(x)=\sum_{i=1}^nf_i(\underline{s}+x)-\sum_{i=1}^nf_i(\underline{s})
                =\underline{s}+x-\underline{s}
                =x\,,
        \]
    for all $x\ge0$, which implies that $\{g_i\}_{i=1}^n\in\mathcal{G}$. It follows from the construction of $g_i$ that
        \[
            f_i(x)=g_i(x-\underline{s})+c_i\,,
        \]
    from which we conclude that $Y_i=f_i(S)=g_i(S-\underline{s})+c_i$ for all $i\in\mathcal{N}$.
\end{proof}

\smallskip

Our main result is given below. The interpretation of this result is discussed in Subsection \ref{subsec:algorithm}. 

\begin{theorem}
\label{thm:cpo_coherent}
    Suppose that for each $i\in\mathcal{N}$, the utility functional $U_i$ is a law-invariant and positively homogeneous monetary utility. Let $\mathcal{T}_i$ denote the representing convex set of convex distortions for $U_i$ that is sequentially closed under pointwise convergence, in the sense of Lemma \ref{lem:distortion_set_convex}. Then the following hold:

    \smallskip
    
    \begin{enumerate}[label=(\roman*)]
        \item There exists a solution to the problem
            \begin{equation}
            \label{eq:inf_problem}
                \min_{\{T_i\}_{i=1}^n\in \, \prod_{i=1}^n\mathcal{T}_i}\,\int_0^\infty\max_{i\in\mathcal{N}}\,\{T_i(\Pr(S>\underline{s}+x))\}\,dx\,.
            \end{equation}

            \smallskip
            
        \item A necessary condition for an allocation $\{Y_i^*\}_{i=1}^n$ to be comonotone Pareto optimal is that
                \[
                    Y_i^*=g_i^*(S-\underline{s})+c_i^*\,,
                \]
                
            \noindent where $\{c_i^*\}_{i=1}^n\in\R^n$ is chosen such that $\sum_{i=1}^nc_i^*=\underline{s}$ and $\{g_i^*(S-\underline{s})+c_i^*\}_{i=1}^n\in\mathcal{IR}$, and $\{g_i^*\}_{i=1}^n\in\mathcal{G}$ can be written in terms of the integrals of suitable functions $h_i$.     
            Specifically, for each $i\in\mathcal{N}$, we can write $g_i^*(x)=\int_0^xh_i(z)\,dz$, where each $h_i:\R_+\to[0,1]$ is a function such that for each $(T_1^*,\ldots,T_n^*)$ that solves \eqref{eq:inf_problem} and for almost every $x\in\R_+$, we have
                \[
                    \sum_{i\in L_{x,T_1^*,\ldots,T_n^*}}h_i(x)=1\mbox{ and }\sum_{i\in\mathcal{N}\setminus L_{x,T_1^*,\ldots,T_n^*}}h_i(x)=0\,,
                \]
            where
                \begin{align*}
                    L_{x,T_1^*,\ldots,T_n^*}&:=\left\{i\in\mathcal{N}:
                        T_i^*(\Pr(S>\underline{s}+x))=\max_{j\in\mathcal{N}}\{T_j^*(\Pr(S>\underline{s}+x))\}\right\}\,.
                \end{align*}
    \end{enumerate}
\end{theorem}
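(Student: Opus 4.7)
The plan is to recast the comonotone sup-convolution as a minimax problem whose dual variables are the distortions from Lemma \ref{lem:distortion_set_convex}, and then interchange supremum and infimum via Sion's minimax theorem.

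By Corollary \ref{cor:comonotone_characterization_translation_invar}, a comonotone allocation is CPO if and only if it maximizes $\sum_{i=1}^n U_i(Y_i)$ over $\mathcal{IR}\cap\mathcal{A}_C$. Using Lemma \ref{lem:translation}, I write $Y_i = g_i(S-\underline{s}) + c_i$ with $\{g_i\}\in\mathcal{G}$ and $\sum_i c_i = \underline{s}$; by translation invariance, the objective becomes $\underline{s} + \sum_i U_i(g_i(S-\underline{s}))$, so the $c_i$ only need to secure individual rationality. Since $\sum_i g_i = \text{Id}$ and each $g_i$ is non-decreasing, each $g_i$ is automatically $1$-Lipschitz on $\R_+$, hence $g_i(x) = \int_0^x h_i(z)\,dz$ for some measurable $h_i:\R_+\to[0,1]$ with $\sum_i h_i = 1$ a.e. Because $g_i(S-\underline{s})$ is comonotone with $S$, Lemma \ref{lem:distortion_set_convex} applies, and the layer-cake identity $g_i(S-\underline{s}) = \int_0^\infty h_i(z)\One_{\{S > \underline{s}+z\}}\,dz$ together with comonotone additivity of the Choquet integral (justified by step-function approximation and monotone convergence) yields, writing $p(z) := \Pr(S > \underline{s}+z)$,
\[
    \int g_i(S-\underline{s})\,dT_i\circ\Pr \;=\; \int_0^\infty h_i(z)\,T_i(p(z))\,dz.
\]
Hence the problem reduces to
\[
    V_* \;:=\; \sup_{\{h_i\}}\sum_{i=1}^n\inf_{T_i\in\mathcal{T}_i}\int_0^\infty h_i(z)\,T_i(p(z))\,dz,
\]
over measurable $\{h_i\}$ with $h_i\ge 0$ and $\sum_i h_i = 1$ a.e.

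The core step is the minimax identity $V_* = V^*$, where $V^*$ denotes the value of \eqref{eq:inf_problem}. The direction $V_*\le V^*$ is immediate from the pointwise bound $\sum_i h_i(z)T_i(p(z))\le\max_i T_i(p(z))$. For the reverse inequality, I apply Sion's theorem to the bilinear form $F(T,h):=\sum_i\int_0^\infty h_i(z)T_i(p(z))\,dz$: the product $\prod_i\mathcal{T}_i$ is convex by Lemma \ref{lem:distortion_set_convex} and compact in the product topology of pointwise convergence, since each $\mathcal{T}_i$ consists of non-decreasing $[0,1]$-valued functions (Helly's selection theorem) and is sequentially closed under pointwise convergence by hypothesis; the feasible set of $h$ is convex; $F$ is linear (hence quasi-convex and quasi-concave) in each variable; and $F(\cdot,h)$ is continuous in $T$ under pointwise convergence by dominated convergence on the bounded interval $[0,\bar{s}-\underline{s}]$ with $\bar{s}:=\mathrm{ess}\sup\,S$. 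Sion's theorem then yields the equality, while the same compactness-continuity argument produces a minimizer $\{T_i^*\}$, settling part (i).

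For part (ii), fix any minimizer $\{T_i^*\}$. Every $\{g_i\}\in\mathcal{G}$ with derivatives $\{h_i\}$ satisfies the chain
\[
    \sum_i U_i(g_i(S-\underline{s})) \;=\; \sum_i\inf_{T_i\in\mathcal{T}_i}\int_0^\infty h_i T_i(p)\,dz \;\le\; \sum_i\int_0^\infty h_i T_i^*(p)\,dz \;\le\; \int_0^\infty\max_j T_j^*(p)\,dz \;=\; V^*,
\]
where the last inequality uses $\sum_i h_i = 1$ pointwise. If $h_i$ is supported a.e.\ on $L_z$ with $\sum_{i\in L_z} h_i(z) = 1$, the last inequality is an equality, and since $V_* = V^*$ the first inequality becomes one too, so $\{g_i\}$ attains $V_*$ and the associated allocation (with IR-compatible $c_i^*$, whose existence is guaranteed by $V_* + \underline{s} \ge \sum_i U_i(X_i)$) is CPO. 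Conversely, if $\{Y_i^*\}\in\mathcal{CPO}$ decomposes as $Y_i^* = g_i^*(S-\underline{s})+c_i^*$ with derivatives $\{h_i^*\}$, then both inequalities in the chain are equalities; the rightmost one, combined with $\sum_i h_i^* = 1$ a.e., forces $h_i^*(z) = 0$ whenever $i\notin L_z$. The main obstacle is the minimax interchange: $\prod_i\mathcal{T}_i$ must be equipped with a topology that is simultaneously compact, fine enough to make $F(\cdot,h)$ continuous, and compatible with the convexity required by Sion --- the product topology of pointwise convergence, paired with Helly's selection theorem and the assumed sequential closedness of each $\mathcal{T}_i$, supplies exactly this.
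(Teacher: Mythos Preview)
Your overall strategy matches the paper's: reduce CPO to the sup-convolution via Corollary \ref{cor:comonotone_characterization_translation_invar} and Lemma \ref{lem:translation}, plug in the dual representation of Lemma \ref{lem:distortion_set_convex}, interchange $\sup$ and $\inf$ via Sion's minimax theorem, and read off the structure of optimal derivatives $h_i$ from the pointwise inequality $\sum_i h_i(z)\,T_i^*(p(z)) \le \max_j T_j^*(p(z))$. The forward and converse directions of (ii), and the derivation of (i) from a compactness/lower-semicontinuity argument, are the same in spirit.

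The one substantive difference is which side carries the compactness hypothesis in Sion's theorem, and here your argument has a gap. You assert that $\prod_i\mathcal{T}_i$ is compact in the topology of pointwise convergence by Helly's selection theorem, but Helly delivers only \emph{sequential} compactness; the pointwise topology on $[0,1]^{[0,1]}$ is not metrizable, and the hypothesis that each $\mathcal{T}_i$ is sequentially closed does not by itself yield topological closedness (hence compactness). You also never specify a topology on the feasible set of $\{h_i\}$, so the upper semicontinuity of $F(T,\cdot)$ in $h$ that Sion requires goes unverified. The paper sidesteps both issues by placing compactness on the allocation side instead: it works with $\{g_i\}\in\mathcal{G}$ as $1$-Lipschitz functions on a compact interval under the sup norm, where Arzel\`a--Ascoli gives genuine metric compactness and continuity of the Choquet integral in $g$ is immediate. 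On the distortion side the paper, like you, establishes only sequential continuity in $T$, and explicitly notes in a remark that this suffices by inspecting Komiya's proof of Sion. Your route can be repaired---for instance, convex distortions are continuous on $[0,1]$ and hence determined by their values on a countable dense set, making the relevant space metrizable so that sequential compactness and compactness coincide---but as written the Sion application is not fully justified.
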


\medskip

\begin{remark}
    In the statements of Lemma \ref{lem:translation} and Theorem \ref{thm:cpo_coherent}, we may replace $\underline{s}$ with any lower bound of $S$. For example, if the aggregate endowment $S$ is non-negative, we may take $\underline{s}=0$ for simplicity.
\end{remark}

\smallskip

\begin{corollary}
    Suppose that for each $i\in\mathcal{N}$, the utility functional $U_i$ is a strictly SSD-preserving, law-invariant, and positively homogeneous monetary utility. A necessary condition for an allocation $\{Y_i^*\}_{i=1}^n$ to be Pareto optimal is that
        \[
            Y_i^*=g_i^*(S-\underline{s})+c_i^*\,,
        \]
    where $\sum_{i=1}^nc_i^*=\underline{s}$ and $g_i^*$ are of the form given in Theorem \ref{thm:cpo_coherent}. That is, $g_i^*(x)=\int_0^xh_i(z)\,dz$, where $h_i:\R_+\to[0,1]$ such that for every $(T_1^*,\ldots,T_n^*)$ that solves \eqref{eq:inf_problem} and for almost every $x\in\R_+$,   
        \[
            \sum_{i\in L_{x,T_1^*,\ldots,T_n^*}}h_i(x)=1\mbox{ and }\sum_{i\in\mathcal{N}\setminus L_{x,T_1^*,\ldots,T_n^*}}h_i(x)=0\,,
        \]
    where
        \[
            L_{x,T_1^*,\ldots,T_n^*}:=\left\{i\in\mathcal{N}:
                T_i^*(\Pr(S>\underline{s}+x))=\max_{j\in\mathcal{N}}\{T_j^*(\Pr(S>\underline{s}+x))\}\right\}\,.
        \]
\end{corollary}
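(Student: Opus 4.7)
The plan is to derive the corollary as a direct combination of Theorem \ref{thm:cpo_coherent}(ii) and the identity $\mathcal{PO} = \mathcal{CPO}$ granted by Corollary \ref{cor:po_ac_cpo} under strict S.S.D.\ preservation. Since each $U_i$ is a monetary utility, it is in particular concave and translation invariant. Translation invariance immediately implies that $c \mapsto U_i(Z+c) = U_i(Z) + c$ is continuous, strictly increasing, and unbounded above in $c$, so Assumption \ref{as:increasing_wrt_premium} holds automatically. This disposes of the regularity hypothesis needed to invoke Corollary \ref{cor:po_ac_cpo}.

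The only remaining condition in the second statement of Corollary \ref{cor:po_ac_cpo} is strict Schur concavity of each $U_i$, which I would deduce from strict S.S.D.\ preservation as follows. Every strictly concave increasing function is in particular strictly concave, so by Definitions \ref{defn:concave_order} and \ref{defn:inc_concave_order}, $Z_1 \stccv Z_2$ implies $Z_1 \sticv Z_2$ (the non-strict portion $\ccv \Rightarrow \icv$ is equally immediate). Combining this implication with strict S.S.D.\ preservation of $U_i$ yields $U_i(Z_1) < U_i(Z_2)$ whenever $Z_1 \stccv Z_2$, i.e., $U_i$ is strictly Schur concave. Consequently Corollary \ref{cor:po_ac_cpo} applies and gives $\mathcal{PO} = \mathcal{CPO}$.

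With this identification, the proof is finished by quoting Theorem \ref{thm:cpo_coherent}(ii), whose description of $\mathcal{CPO}$ is word-for-word the description of $\mathcal{PO}$ asserted in the corollary. In particular, the existence of the minimizers $(T_1^*,\ldots,T_n^*)$ of problem \eqref{eq:inf_problem} needed to define the sets $L_x$ is already guaranteed by Theorem \ref{thm:cpo_coherent}(i), so no further optimization argument is required.

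The main obstacle is essentially bookkeeping: making the passage from \emph{strict} S.S.D.\ preservation to \emph{strict} Schur concavity explicit, since Corollary \ref{cor:po_ac_cpo} and Corollary \ref{cor:comonotone_improvement_utilities} are phrased under the latter hypothesis. As sketched above, this step is one line from the definitions; thereafter the corollary is a direct combination of two results already proved in the paper and requires no new ideas.
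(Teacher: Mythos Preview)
Your proposal is correct and follows essentially the same route as the paper: invoke Corollary \ref{cor:po_ac_cpo} to obtain $\mathcal{PO}=\mathcal{CPO}$, then quote Theorem \ref{thm:cpo_coherent}. Your extra bookkeeping (verifying Assumption \ref{as:increasing_wrt_premium} from translation invariance, and deducing strict Schur concavity from strict S.S.D.\ preservation) is careful but in fact unnecessary here, since the second part of Corollary \ref{cor:po_ac_cpo} is already stated under the hypothesis ``strictly S.S.D.\ preserving,'' which matches the corollary's assumption verbatim.
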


\begin{proof}
    By Corollary \ref{cor:po_ac_cpo}, the sets $\mathcal{CPO}$ and $\mathcal{PO}$ coincide. It then follows that the characterization in Theorem \ref{thm:cpo_coherent} applies to all Pareto-optimal allocations.
\end{proof}

\medskip
\subsection{An Algorithmic Approach to Finding Pareto Optima}
\label{subsec:algorithm}

The main advantage of the characterization in Theorem \ref{thm:cpo_coherent} is that it describes an algorithm that can be implemented to find the shape of comonotone Pareto optima. First, problem \eqref{eq:inf_problem} must be solved. This problem can be interpreted in the following manner:

\smallskip

\begin{itemize}
    \item Define the function $\psi:\prod_{i=1}^n\mathcal{T}_i\times\R_+\to\R$ by
    $$\psi\left((T_1,\ldots,T_n),x\right) := \max_{i\in\mathcal{N}}\,\left\{T_i(\Pr(S>\underline{s}+x))\right\}.$$
        That is, for a given vector of distortion functions $(T_1,\ldots,T_n)$ and a positive real number $x$, the function $\psi$ provides the most optimistic assessment of the likelihood of the tail event $\Pr(S>\underline{s}+x)$ among all agents.

\medskip

    \item Define the function $\Psi:\prod_{i=1}^n\mathcal{T}_i\to\R$ by
    $$\Psi\left(T_1,\ldots,T_n\right) := \int_0^\infty\psi(T_1,\ldots,T_n,x)\,dx.$$
        The function $\Psi$ provides an aggregate measure of the most optimistic assessment over all possible tail event likelihoods.

        \medskip
        
    \item Problem \eqref{eq:inf_problem} minimizes the function $\Psi$ over all possible choices of distortion functions. In other words, a solution to \eqref{eq:inf_problem} represents the worst case scenario of the most optimistic assessment of tail event likelihood.

\medskip

\item The shape of a Pareto-optimal comonotone allocation can now be explicitly determined through the following steps:

\smallskip

\begin{itemize}

    \item Determine all possible solutions to \eqref{eq:inf_problem}. 

    \medskip
    
    \item For each solution $(T_1^*,\ldots,T_n^*)$ to \eqref{eq:inf_problem} and each $x\in\R_+$, define the set
        \[
            L_{x,T_1^*,\ldots,T_n^*}:=\left\{i\in\mathcal{N}:
                T_i^*(\Pr(S>\underline{s}+x))
                =\max_{j\in\mathcal{N}}\{T_j^*(\Pr(S>\underline{s}+x))\}\right\}\,.
        \]
        The set $L_{x,T_1^*,\ldots,T_n^*}$ then represents the set of agents who have the most optimistic assessment of the likelihood of the tail event $\Pr(S>\underline{s}+x)$, when the preferences of each agent $i\in\mathcal{N}$ is  fixed as the distortion risk measure with respect to $T_i^*$.

        \medskip

    \item For each $i\in\mathcal{N}$, choose a function $h_i:\R_+\to[0,1]$ such that for every solution $(T_1^*,\ldots,T_n^*)$ to \eqref{eq:inf_problem} and almost every $x\in\R_+$,
        \begin{equation}
            \label{eq:h_condition}
            \sum_{i\in L_{x,T_1^*,\ldots,T_n^*}}h_i(x)=1 \ \text{ and }\sum_{i\in\mathcal{N}\setminus L_{x,T_1^*,\ldots,T_n^*}}h_i(x)=0\,.
        \end{equation}
        The functions $h_i$ represent the post-exchange marginal endowment of each agent $i$.

        \medskip
        
    \item Let $g_i^*(x):=\displaystyle\int_0^xh_i(z)\,dz$, and take
        \[
            Y_i^*:=g_i^*(S-\underline{s})+c_i^*\,,
        \]
        where $\{c_i^*\}_{i=1}^n\in\R^n$ is chosen such that $\underset{i\in\mathcal{N}}{\sum}c_i^*=\underline{s}$ and $\{Y_i^*\}_{i=1}^n\in\mathcal{IR}$.

        \medskip

    \item Reiterate this process for every choice of $h_i$ that satisfies \eqref{eq:h_condition}. The allocation that gives the highest value of aggregate welfare
        \[
            \sum_{i=1}^nU_i(Y_i^*)
        \]
    must be an element of $\mathcal{CS}_\One$, and hence comonotone Pareto optimal.
\end{itemize}    

\end{itemize}

\medskip

In the above algorithm, the only step that does not contain an explicit formula is the minimization of \eqref{eq:inf_problem} itself. Indeed, a closed-form expression for this problem is difficult to obtain. However, the advantage of \eqref{eq:inf_problem} is that the domain of the minimization is a product of sets of distortion functions. In many practical scenarios, this is a more tractable domain for numerical optimization, unlike the domain of all comonotone allocations, as is the case with the original sup-convolution problem $\mathcal{CS}_\One$. This advantage is illustrated through a numerical simulation in Section \ref{sec:risk_sharing}.

\medskip

Finally, note that the form imposed by Theorem \ref{thm:cpo_coherent} only provides a \emph{necessary} condition for Pareto optimality. This condition is \emph{not sufficient} in general, and we provide a counterexample in Appendix \ref{sec:not_sufficient_example}. Therefore, identifying the Pareto-optimal allocations among those suggested by Theorem \ref{thm:cpo_coherent} requires evaluating the utility of each candidate allocation, as described in the final step of the above algorithm. Nonetheless, in practice, the condition imposed on the marginal endowments \eqref{eq:h_condition} often specifies a narrow range of candidate allocations, from which the Pareto-optimal allocations can be easily identified. Our numerical example in Section \ref{sec:risk_sharing} provides one such example, where both the solution to \eqref{eq:inf_problem} and the allocation satisfying \eqref{eq:h_condition} are unique.

\medskip
\subsection{Concave Dual Utilities}
\label{subsec:dual_utilities}

In the special case where each set $\mathcal{T}_i$ is a singleton, the solution to \eqref{eq:inf_problem} is immediate. In fact, these are precisely the dual utility functionals introduced by \cite{yaari}. We show below that if $U_i$ satisfies the conditions of Theorem \ref{thm:cpo_coherent} as well as the additional condition of \emph{comonotone additivity}, then $\mathcal{T}_i$ can be taken to be a singleton. In this case, it is possible to obtain a representation of $\mathcal{CPO}$ in closed form.

\begin{definition}
\label{defn:drm}
    A functional \:$U$ is a \emph{dual utility functional} if there exists a distortion function $T$ such that for all random variables $Z\in\mathcal{X}$,
        \[
            U(Z)=\int Z\,dT\circ\Pr\,.
        \]
\end{definition}

\begin{proposition}{\cite[Theorem 2]{yaari}}
\label{prop:yaari_risk_aversion}
    Yaari's dual utility functional $Z \mapsto \displaystyle\int Z\,dT\circ\Pr$ is concave and Schur-concave if and only if the distortion function $T$ is convex.
\end{proposition}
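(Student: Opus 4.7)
The plan is to route the equivalence through Schmeidler's theorem on Choquet integrals and the Schur-concavity characterization already established in Subsection \ref{subsec:convex}.

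For the forward direction, I would first show that convexity of $T$ makes the capacity $\nu := T \circ \Pr$ supermodular, i.e.\ $\nu(A \cup B) + \nu(A \cap B) \geq \nu(A) + \nu(B)$ for all $A, B \in \mathcal{F}$. Setting $a = \Pr(A \cap B)$, $b = \Pr(A)$, $c = \Pr(B)$, $d = \Pr(A \cup B)$, we have $a \leq \min(b,c) \leq \max(b,c) \leq d$ and $a + d = b + c$. Writing $b = a + \alpha$, $c = a + \beta$, $d = a + \alpha + \beta$ with $\alpha, \beta \geq 0$, convexity of $T$ (increasing differences) yields $T(a+\alpha+\beta) - T(a+\beta) \geq T(a+\alpha) - T(a)$, which rearranges to $T(a) + T(d) \geq T(b) + T(c)$. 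This is supermodularity of $\nu$. By Schmeidler's theorem, the Choquet integral with respect to a supermodular capacity is superadditive; combined with the positive homogeneity of any Choquet integral, this gives concavity of $U$.

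For Schur concavity, note that $U$ is law-invariant, monotone, translation-invariant (hence norm-continuous, since monetary utilities are $1$-Lipschitz for the supremum norm), and now concave by the step above. Corollary \ref{cor:ssd_law_invariant} then forces $U$ to be S.S.D.\ preserving, and Proposition \ref{prop:ssd_vs_schur} in turn gives Schur concavity.

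For the reverse direction, I would use the converse half of Schmeidler's theorem: concavity (equivalently, superadditivity plus positive homogeneity) of the Choquet integral forces the underlying capacity $\nu = T \circ \Pr$ to be supermodular. Using atomlessness of $(\Omega, \mathcal{F}, \Pr)$, I can pick three disjoint events $A_1, A_2, A_3$ with prescribed probabilities $\alpha, \beta, \gamma$ and apply supermodularity to $A_1 \cup A_2$ and $A_1 \cup A_3$ to obtain $T(\alpha) + T(\alpha + \beta + \gamma) \geq T(\alpha + \beta) + T(\alpha + \gamma)$. Varying $\alpha, \beta, \gamma$ gives the general inequality $T(x) + T(w) \geq T(y) + T(z)$ whenever $x \leq y, z \leq w$ with $x + w = y + z$; specializing $y = z = (x+w)/2$ yields midpoint convexity of $T$, which combined with monotonicity (hence Borel measurability) of $T$ upgrades to full convexity.

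The main obstacle is the converse implication: the converse of Schmeidler's theorem is standard but needs to be invoked (or, equivalently, one can produce an explicit mean-preserving spread from any point of non-convexity of $T$ that increases $U$, directly contradicting Schur concavity without invoking concavity). The rest is essentially a careful combination of standard capacity-theoretic facts with the Schur-concavity machinery already in place in Subsection \ref{subsec:convex}.
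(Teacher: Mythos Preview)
The paper does not supply its own proof of this proposition; it is stated with a citation to \cite[Theorem 2]{yaari} and nothing more. Your proposal therefore cannot be compared to a proof in the paper, but it stands on its own as a correct argument. The forward direction via supermodularity of $T\circ\Pr$ and Schmeidler's theorem is standard, and routing Schur concavity through Corollary~\ref{cor:ssd_law_invariant} and Proposition~\ref{prop:ssd_vs_schur} is an efficient use of the machinery already assembled in Subsection~\ref{subsec:convex}. The reverse direction---converse Schmeidler plus the atomless extraction of midpoint convexity of $T$---is also sound; note that you only use the concavity hypothesis there, not Schur concavity, which is harmless since both are assumed. The alternative you sketch at the end (building an explicit mean-preserving spread from a point of non-convexity of $T$ to violate Schur concavity directly) is closer to Yaari's original line of reasoning and would instead exploit the Schur-concavity hypothesis alone.
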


It is well known that concave dual utilities can be characterized as the set of functionals that satisfy some standard properties, including comonotone additivity:

\begin{definition}
    A map $U:\mathcal{X}\to\R$ is \emph{comonotone-additive} if for every pair of comonotone random variables $Z_1,Z_2\in\mathcal{X}$,
        \[
            U(Z_1+Z_2)=U(Z_1)+U(Z_2)\,.
        \]
\end{definition}

In particular, for monetary utilities, comonotone additivity implies positive homogeneity (e.g., \citealt[Lemma 4.83]{FollmerSchied2016}). The following result shows that given the additional property of comonotone additivity, the setting of Theorem \ref{thm:cpo_coherent} reduces precisely to the case of dual utilities.

\begin{proposition}
\label{prop:dual_utility_rep}
{\cite{Kusuoka2001}, \cite[Theorem 2.3]{JouiniSchachermayerTouzi2008}}
    A functional $U$ is a concave dual utility functional if and only if it is a law-invariant comonotone-additive monetary utility.
\end{proposition}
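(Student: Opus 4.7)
The plan is to verify the two implications separately, using classical Choquet integral machinery combined with the earlier results of the paper.

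For the forward direction, I would fix a convex distortion function $T$ and verify each property of $U(Z)=\int Z\,dT\circ\Pr$ directly. Law-invariance is immediate since $\int Z\,dT\circ\Pr=\int_0^1F_Z^{-1}(1-t)\,dT(t)$ depends on $Z$ only through its distribution. Monotonicity of the Choquet integral and the facts $T(0)=0$, $T(1)=1$ yield monotonicity and translation invariance, so $U$ is a monetary utility. Concavity follows from Proposition \ref{prop:yaari_risk_aversion} since $T$ is convex. Finally, comonotone additivity is the well-known comonotone additivity of the Choquet integral with respect to any capacity, applied here to $\nu:=T\circ\Pr$.

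For the converse, suppose $U$ is a law-invariant comonotone-additive monetary utility. Monetary utilities that are comonotone-additive are positively homogeneous, so $U$ falls under the scope of Lemma \ref{lem:coherent_rep}. My preferred route, however, is to invoke the classical Schmeidler/Greco representation theorem: any monotone, comonotone-additive, normalized functional on $\mathcal{X}$ admits a Choquet integral representation $U(Z)=\int Z\,d\nu$ for a unique capacity $\nu$ on $\mathcal{F}$, defined by $\nu(A):=U(\mathbf{1}_A)-U(0)$ (with $U(0)=0$ by translation invariance). Law-invariance of $U$ forces $\nu(A)=\nu(B)$ whenever $\Pr(A)=\Pr(B)$; since $(\Omega,\mathcal{F},\Pr)$ is atomless, every $p\in[0,1]$ is attained as $\Pr(A)$ for some $A\in\mathcal{F}$, so the set-function $\nu$ factors through $\Pr$, i.e., there is a well-defined function $T:[0,1]\to[0,1]$ with $\nu=T\circ\Pr$. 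Monotonicity of $\nu$ and the normalization $\nu(\varnothing)=0$, $\nu(\Omega)=1$ (from $U(\mathbf{1}_\Omega)=1$) show that $T$ is a distortion function. Hence $U(Z)=\int Z\,dT\circ\Pr$.

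It remains to extract convexity of $T$ from concavity of $U$, which is the direction of Proposition \ref{prop:yaari_risk_aversion} that is not automatic: if $T$ were not convex, one could construct two random variables $Z_1,Z_2\in\mathcal{X}$ and $t\in[0,1]$ violating the concavity inequality $tU(Z_1)+(1-t)U(Z_2)\le U(tZ_1+(1-t)Z_2)$. The standard construction uses simple indicator-type random variables on sets whose probabilities witness the failure of convexity of $T$; this is the main technical step I would expect to carry out carefully, and it is contained in the cited reference.

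The main obstacle is precisely the passage from the abstract capacity $\nu$ to a distortion function $T$, which hinges on atomlessness of $(\Omega,\mathcal{F},\Pr)$ together with law-invariance; every other step is a direct invocation of known representation theorems or a routine verification.
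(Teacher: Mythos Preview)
Your argument is correct and follows the standard route in the cited literature. Note, however, that the paper itself does not supply a proof of this proposition: it is stated with attribution to \cite{Kusuoka2001} and \cite[Theorem 2.3]{JouiniSchachermayerTouzi2008} and left unproved. So there is no ``paper's own proof'' to compare against.

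That said, your sketch is essentially the argument one finds in those references. The forward direction is routine. For the converse, your use of the Schmeidler--Greco representation to obtain a Choquet integral with respect to a capacity $\nu(A)=U(\mathbf{1}_A)$, followed by the factorization $\nu=T\circ\Pr$ via law-invariance and atomlessness, is exactly the standard mechanism; Kusuoka's original argument proceeds instead through the dual representation in terms of spectral measures (equivalently, mixtures of Expected Shortfalls), but the two approaches are interchangeable here. Your final step---deducing convexity of $T$ from concavity of $U$ via Proposition~\ref{prop:yaari_risk_aversion}---is the right closing move, and the ``if $T$ is not convex then $U$ fails concavity on suitable indicators'' construction you allude to is indeed the content of that proposition. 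One small point worth making explicit: to conclude that $T$ is non-decreasing (not just that $\nu$ is monotone on $\mathcal{F}$), you need that for any $0\le p\le q\le 1$ there exist \emph{nested} sets $A\subseteq B$ with $\Pr(A)=p$ and $\Pr(B)=q$, which again uses atomlessness.
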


The following result assumes that for each $i\in\mathcal{N}$, the utility functional $U_i$ is a dual utility functional with respect to a convex distortion. That is, there exists a convex distortion function $T_i$ such that
    \[
        U_i(Z)=\int Z\,dT_i\circ\Pr,
    \]
for all random variables $Z\in\mathcal{X}$. In this case, we can fully characterize the set of solutions $\mathcal{CS}_\One$, and hence the set of all comonotone Pareto-optimal allocations $\mathcal{CPO}$.

\begin{corollary}
\label{cor:cpo_drm_characterization}
    Suppose that each $U_i$ is a concave dual utility functional. For each $x\in\R_+$, let
        \[
            L_x:=\left\{i\in\mathcal{N}:
                T_i(\Pr(S>\underline{s}+x))=\max_{j\in\mathcal{N}}\{T_j(\Pr(S>\underline{s}+x))\}\right\}\,,
        \]
    and let $L_x^C:=\mathcal{N}\setminus L_x$. For each $i\in\mathcal{N}$, let $h_i:\R_+\to[0,1]$ be a function such that for almost every $x$,
        \[
            \sum_{i\in L_x}h_i(x)=1\mbox{ and }
                \sum_{i\in L_x^C}h_i(x)=0\,,
        \]
    and let $g_i^*(x):=\int_0^xh_i(z)\,dz$. Let $\{c_i^*\}_{i=1}^n\in\R^n$ be chosen such that   $\sum_{i=1}^nc_i^*=\underline{s}$ and $\{g_i^*(S-\underline{s})+c_i^*\}_{i=1}^n\in\mathcal{IR}$. Then
        \[
            \sup_{\{Y_i\}_{i=1}^{n}\in\mathcal{IR}\cap\mathcal{A}_C}\left\{\sum_{i=1}^{n}U_i\left(Y_i\right)\right\}
            =\sum_{i=1}^{n}U_i\left(g_i^*(S-\underline{s})+c_i^*\right)
            =\underline{s}+\int_0^\infty\max_{i\in\mathcal{N}}\,\{T_i(\Pr(S>\underline{s}+x))\}\,dx\,,
        \]
        
    \noindent and therefore $\{g_i^*(S-\underline{s})+c_i^*\}_{i=1}^n$ is comonotone Pareto optimal. Furthermore, all solutions are of this form. That is, if $\{Y_i^*\}_{i=1}^n\in\mathcal{CS}_\One$, then
        \[
            Y_i^*=g_i^*(S-\underline{s})+c_i^*\,,
        \]
    where $\{c_i^*\}_{i=1}^n\in\R^n$ and $\{g_i^*\}_{i=1}^n\in\mathcal{G}$ are of the form given above.
\end{corollary}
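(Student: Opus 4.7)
The plan is to invoke Theorem \ref{thm:cpo_coherent} and exploit the fact that, under the hypotheses of the corollary, each representing family $\mathcal{T}_i$ of Lemma \ref{lem:distortion_set_convex} may be taken to be the singleton $\{T_i\}$. In that case, the outer optimization problem \eqref{eq:inf_problem} becomes trivial (its feasible set is a single point), so both existence in part (i) and the shape characterization in part (ii) of Theorem \ref{thm:cpo_coherent} apply immediately with $T_i^\ast=T_i$. This yields the structural form $Y_i^\ast=g_i^\ast(S-\underline{s})+c_i^\ast$ with $g_i^\ast$ built from the weight functions $h_i$ concentrated on the sets $L_x$, and it also gives the converse: every element of $\mathcal{CS}_\One=\mathcal{CPO}$ (equality by Corollary \ref{cor:comonotone_characterization_translation_invar}) is of this form.

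The main content to verify beyond this reduction is the explicit value of the sup-convolution. First I would use comonotone additivity of each $U_i$ and translation invariance to write
\[
\sum_{i=1}^n U_i\bigl(g_i^\ast(S-\underline{s})+c_i^\ast\bigr) \;=\; \sum_{i=1}^n c_i^\ast \;+\; \sum_{i=1}^n \int g_i^\ast(S-\underline{s})\,dT_i\circ\Pr \;=\; \underline{s}+\sum_{i=1}^n \int g_i^\ast(S-\underline{s})\,dT_i\circ\Pr.
\]
Since $g_i^\ast(x)=\int_0^x h_i(z)\,dz$ is non-decreasing with $g_i^\ast(0)=0$, and $S-\underline{s}\ge 0$, the layer-cake representation of the Choquet integral combined with a change of variable $t=g_i^\ast(z)$ gives
\[
\int g_i^\ast(S-\underline{s})\,dT_i\circ\Pr \;=\; \int_0^\infty T_i\bigl(\Pr(g_i^\ast(S-\underline{s})>t)\bigr)\,dt \;=\; \int_0^\infty h_i(z)\,T_i\bigl(\Pr(S>\underline{s}+z)\bigr)\,dz.
\]
Summing over $i$ and swapping the integral with the finite sum, the defining property of $\{h_i\}$ (namely, $h_i(z)=0$ for $i\notin L_z$ and $\sum_{i\in L_z} h_i(z)=1$ a.e.) collapses each integrand slice to $\max_{j\in\mathcal{N}} T_j(\Pr(S>\underline{s}+z))$, yielding the stated closed-form value.

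Finally, to establish optimality (i.e.\ that this allocation actually attains the sup-convolution in $\mathcal{CS}_\One$), I would apply Lemma \ref{lem:translation} to represent an arbitrary comonotone allocation as $Y_i=g_i(S-\underline{s})+c_i$ with $\{g_i\}\in\mathcal{G}$ and $\sum c_i=\underline{s}$, repeat the same Choquet/layer-cake computation with generic densities $h_i:=g_i'$ (which satisfy $\sum_i h_i=1$ a.e.\ but without the support restriction to $L_z$), and observe that the pointwise bound $\sum_i h_i(z)T_i(\Pr(S>\underline{s}+z))\le\max_j T_j(\Pr(S>\underline{s}+z))$ is tight precisely when the weights concentrate on $L_z$. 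This both proves optimality and shows that all maximizers have the prescribed form.

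The step I expect to require the most care is the Choquet/layer-cake identity in the second display, since $g_i^\ast$ need only be absolutely continuous (the $h_i$ are just $[0,1]$-valued measurable functions, not necessarily continuous or strict), so the change of variable must be justified via the absolute continuity of $g_i^\ast$ rather than via a smooth inverse. A clean way around this is to invoke the standard identity $\int \varphi(Z)\,d\nu=\int_0^\infty \nu(Z>\varphi^{-1}(t))\,dt$ for non-decreasing $\varphi$ with $\varphi(0)=0$ (interpreting $\varphi^{-1}$ as the right-continuous inverse) and then Fubini on the representation $g_i^\ast(z)=\int_0^\infty h_i(u)\One_{\{u<z\}}\,du$; this bypasses any smoothness issue and keeps the argument self-contained.
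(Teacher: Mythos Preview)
Your proposal is correct and follows the same approach as the paper: reduce to the singleton case $\mathcal{T}_i=\{T_i\}$ and invoke Theorem~\ref{thm:cpo_coherent}. The additional layer-cake computation you outline to extract the explicit value of the sup-convolution is exactly the calculation carried out inside the proof of Theorem~\ref{thm:cpo_coherent} (see the chain of equalities around \eqref{eq:inner_sup}), so while your write-up is more self-contained, it is redundant relative to simply citing that theorem, which is what the paper does.
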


\smallskip

\begin{proof}
    Note that each $U_i$ has the representation
        \[
            U_i(Z)=\inf_{\tilde{T}\in\mathcal{T}_i}\int Z\,d\tilde{T}\circ\Pr\,,
        \]
    where for each $i$, the set $\mathcal{T}_i$ is the singleton set $\{T_i\}$. The result then follows from the proof of Theorem \ref{thm:cpo_coherent}.
\end{proof}

\subsection{Competitive Equilibria}
\label{subsec:equilibrium}

We now consider equilibria in a competitive market where each agent's preference is represented by a law-invariant positively homogeneous monetary utility. Let $\mathcal{P}$ denote the set of all probability measures on $(\Omega,\mathcal{F})$ that are absolutely continuous with respect to $\Pr$. We assume that the price of a random wealth $Z \in \mathcal{X}$ is given by the expectation of this random variable with respect to an element of $\mathcal{P}$, which we call the pricing measure. For a fixed pricing measure $\Q \in \mathcal{P}$, the demand problem faced by each agent $i\in\mathcal{N}$ is therefore

\begin{equation}
\label{eq:demand}
\underset{Y_i\in\mathcal{X}} \max \ U_i(Y_i)\quad s.t.\quad \E^\Q[Y_i]\le\E^\Q[X_i]\,.
\end{equation}

\noindent That is, each agent maximises their utility given the budgetary constraint that the price of their random payoff does not exceed the value of their initial endowment. The definition of a classical Arrow-Debreu equilibrium in this setting is the following.

\begin{definition}
    A pair $((Y_1^*,\ldots,Y_n^*),\Q^*)\in\mathcal{X}^n\times\mathcal{P}$ is \emph{an equilibrium} if
    \begin{enumerate}[label=(\roman*)]
        \item For each agent $i\in\mathcal{N}$, $Y_i^*$ solves the demand problem \eqref{eq:demand} under the pricing measure $\Q^*$.
        \item $\sum_{i=1}^nY_i^*=\sum_{i=1}^nX_i=S$. This condition is often referred to as the market-clearing condition.
    \end{enumerate}
\end{definition}

The following result established the existence of equilibria in our setting, as well as the first and second welfare theorems.

\begin{proposition}
\label{prop:equil_exists}
Suppose that for each $i\in\mathcal{N}$, the utility functional $U_i$ is a law-invariant and positively homogeneous monetary utility. Then there exists an equilibrium $((Y_1^*,\ldots,Y_n^*),\Q^*)\in\mathcal{X}^n\times\mathcal{P}$. Furthermore,

\smallskip

\begin{enumerate}[label=(\roman*)]
\item If $((Y_1^*,\ldots,Y_n^*),\Q^*)$ is an equilibrium, then the allocation $(Y_1^*,\ldots,Y_n^*)$ is Pareto optimal.

\smallskip

\item If $(Y_1^*,\ldots,Y_n^*)$ is a Pareto-optimal allocation, then there exists a pricing measure $\Q^*\in\mathcal{P}$ such that the pair
$$\left(\left(Y_1^*+\E^{\Q^*}[X_1-Y_1^*],\ldots,Y_n^*+\E^{\Q^*}[X_n-Y_n^*]\right),\Q^*\right)$$
is an equilibrium.
\end{enumerate}
\end{proposition}

\medskip

Note that when preferences are translation invariant, Pareto-optimal allocations are only unique up to translation. Specifically, if $(Y_1,\ldots,Y_n^*)$ is a Pareto-optimal allocation and $\{b_i\}_{i=1}^n\in\R^n$ satisfies $\sum_{i=1}^nb_i=0$, then the allocation $(Y_1^*+b_1,\ldots,Y_n^*+b_n)$ is also Pareto optimal. However, for a given pricing measure $\Q^*$, only one of these translations can possibly lead to an equilibrium allocation under the pricing measure $\Q^*$. This allocation is precisely what is given in part (ii) of Proposition \ref{prop:equil_exists}, as shown by the following result.

\begin{lemma}\label{LemmaEqu}
Let $(Y_1^*,\ldots,Y_n^*)$ be a Pareto-optimal allocation, and suppose that $\Q^*$ is a pricing measure such that $((Y_1^* + b_1, \ldots, Y_n^* + b_n),\Q^*)$ is an equilibrium, where $\{b_i\}_{i=1}^n\in\R^n$ satisfies $\sum_{i=1}^n b_i=0$. Then $b_i=\E^{\Q^*}[X_i-Y_i^*]$, for all $i\in\mathcal{N}$.
\end{lemma}

However, a more explicit characterization of the equilibrium pricing measure is difficult to obtain in general. In the special case of Yaari utilities, the pricing measure can be identified in terms of distortion functions, as shown in Section 3.1 of \cite{boonen2021competitive}, for instance. Their characterization result is given below.

\begin{proposition}
\label{prop:equil_yaari}
Suppose that for each $i\in\mathcal{N}$, the utility functional $U_i$ is a dual utility with respect to a convex distortion function $T_i$. Let $(Y_1^*,\ldots,Y_n^*)\in\mathcal{CPO}$. If $((Y_1^*+\E^{\Q^*}[X_1-Y_1^*],\ldots,Y_n^*+\E^{\Q^*}[X_n-Y_n^*]),\Q^*)$ is an equilibrium, then the pricing measure $\Q^*$ must satisfy
    \[
        T_{(n-1)}(\Pr(S>\underline{s}+x))\le\Q^*(S>\underline{s}+x)\le T_{(n)}(\Pr(S>\underline{s}+x)), \ \forall \, x\in\R_+\,,
    \]

\noindent where for each $t\in[0,1]$, $T_{(k)}(t)$ denotes the $k$-th smallest value of $T_i(t)$ over all $i\in\mathcal{N}$.
\end{proposition}

\medskip

For completeness, we provide a self-contained proof of Proposition \ref{prop:equil_yaari} in Appendix \ref{sec:proofs_equil}. However, this result does not easily extend to the more general setting of law-invariant positively homogeneous monetary utilities, and we leave the problem of further characterization of equilibrium pricing measures for future research.

\bigskip
\section{Pareto Optima in Risk-Sharing Markets}
\label{sec:risk_sharing}

In this section, we provide an application to risk-sharing markets. Specifically, we consider a risk-sharing market with $n$ agents, each of which measuring their risk exposure via a coherent risk measure, that is, a map $\rho:\mathcal{X}\to\R$ such that $-\rho$ is monotone, positively homogeneous, concave, and translation-invariant (e.g., \cite{artzner1999coherent}). Hence, coherent risk measures and positively homogeneous monetary utilities are equivalent up to a change in sign. Our previous results therefore apply to the setting of risk sharing considered in this section.

For each $i\in\mathcal{N}$, let $X_i\in\mathcal{X}$ denote the initial risk exposure of the $i$-th agent, which is assumed to be non-negative. In this section, for notational convenience, we use the convention that positive values of $X_i$ denote positive values of liabilities or risk exposures (i.e., negative values of endowments). The aggregate risk in the market is $S:=\sum_{i=1}^nX_i$. All agents are assumed to participate in a risk-sharing pool, which allocates the aggregate risk $S$ among the $n$ agents. The risk distributed to agent $i$ is denoted by $Y_i$.

For risk-sharing markets, individual rationality and Pareto optimality are defined in a similar manner to Definition \ref{defn:ir_po}, as shown below. Note that agents prefer smaller values of $\rho_i$ to larger ones.

\begin{definition}
    An allocation $\left\{Y^*_i\right\}_{i=1}^{n}\in\mathcal{A}$ is said to be:
    
    \begin{itemize}
    \item {\bf Individually Rational} (IR) if it incentivizes the parties to participate in the market, that is, 
        \begin{equation*}
            \rho_i\left(Y^*_i\right)\leq\rho_i\left(X_i\right),\quad \forall \, i \in \mathcal{N}.
        \end{equation*}
    \item {\bf Pareto Optimal} (PO) if it is IR and there does not exist any other IR allocation $\left\{Y_i\right\}_{i=1}^{n}$ such that
        \begin{equation*}
            \rho_i\left(Y_i\right)\leq\rho_i\left(Y^*_i\right),\quad \forall \, i \in \mathcal{N},
        \end{equation*}
    with at least one strict inequality.
    \end{itemize}
\end{definition}

\medskip
\subsection{Law-Invariant Coherent Risk Measures}

We assume that each agent $i \in \mathcal{N}$ uses a law-invariant coherent risk measure $\rho_i$. By Theorem \ref{thm:cpo_coherent}, we obtain a full characterization of all comonotone Pareto optima. This result is restated below in the context of risk sharing. Note that since we assume that the aggregate risk is non-negative, we may take the lower bound $\underline{s}=0$ to simplify the expressions.

\begin{corollary}
\label{cor:main_theorem_risk_sharing}
    Suppose that for each $i\in\mathcal{N}$, the risk measure $\rho_i$ is law-invariant and coherent. Let $\mathcal{T}_i$ denote the representing convex set of concave distortions for $\rho_i$ that is sequentially closed under pointwise convergence, as per Lemma \ref{lem:distortion_set_convex}. Then the following hold:

\smallskip
    
    \begin{enumerate}[label=(\roman*)]
        \item There exists a solution to the problem
            \begin{equation}
            \label{eq:inf_problem_risk_measures}
                \max_{\{T_i\}_{i=1}^n\in\,\prod_{i=1}^n\mathcal{T}_i}\,\int_0^\infty\min_{i\in\mathcal{N}}\,\{T_i(\Pr(S>x))\}\,dx\,.
            \end{equation}

            \smallskip
        
        \item A necessary condition for an allocation $\{Y_i^*\}_{i=1}^n$ to be comonotone Pareto optimal is that
                \[
                    Y_i^*=g_i^*(S)+c_i^*\,,
                \]

            \noindent where $\{c_i^*\}_{i=1}^n\in\R^n$ is chosen such that $\sum_{i=1}^nc_i^*=0$ and $\{g_i^*(S)+c_i^*\}_{i=1}^n\in\mathcal{IR}$, and $\{g_i^*\}_{i=1}^n\in\mathcal{G}$ can be written in terms of the integrals of suitable functions $h_i$.
            Specifically, for each $i\in\mathcal{N}$, we can write $g_i^*(x):=\int_0^xh_i(z)\,dz$, where each $h_i:\R_+\to[0,1]$ is a function such that for each $(T_1^*,\ldots,T_n^*)$ that solves \eqref{eq:inf_problem_risk_measures} and for almost every $x\in\R_+$,
                \[
                    \sum_{i\in L_{x,T_1^*,\ldots,T_n^*}}h_i(x)=1 \ \mbox{ and } \ \sum_{i\in\mathcal{N}\setminus L_{x,T_1^*,\ldots,T_n^*}}h_i(x)=0\,,
                \]
            where
                \begin{align*}
                    L_{x,T_1^*,\ldots,T_n^*}&:=\left\{i\in\mathcal{N}:
                        T_i^*(\Pr(S>x))=\min_{j\in\mathcal{N}}\{T_j^*(\Pr(S>x))\}\right\}\,.
                \end{align*}
    \end{enumerate}
    
\end{corollary}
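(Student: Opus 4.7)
My plan is to deduce Corollary \ref{cor:main_theorem_risk_sharing} from Theorem \ref{thm:cpo_coherent} via the sign-change correspondence between coherent risk measures and positively homogeneous monetary utilities. For each $i \in \mathcal{N}$, set $U_i := -\rho_i$; each $U_i$ is then a law-invariant positively homogeneous monetary utility by the definition recalled at the start of Section \ref{sec:risk_sharing}, so the hypotheses of Theorem \ref{thm:cpo_coherent} are satisfied. The assumption $S \geq 0$ combined with the remark following Theorem \ref{thm:cpo_coherent} permits taking $\underline{s} = 0$ throughout.

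First, I would verify that IR, PO, and CPO transport across this sign change: $\rho_i(Y_i^*) \leq \rho_i(X_i)$ is equivalent to $U_i(Y_i^*) \geq U_i(X_i)$, Pareto-dominance is reversed consistently, and comonotonicity and feasibility of $\{Y_i\}_{i=1}^n$ are unaffected. Hence $\mathcal{CPO}$ defined via $\{\rho_i\}$ coincides with $\mathcal{CPO}$ defined via $\{U_i\}$, and the translation-invariance characterization of Remark \ref{rmk:concave_and_trans_invar} extends to the risk measure framework.

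Second, I would translate the dual representation. Lemma \ref{lem:distortion_set_convex} applied to each $U_i$ yields a convex, sequentially pointwise closed set of convex distortions representing $U_i$ on $\mathcal{X}^\uparrow$. Sending each such distortion to its conjugate $T \mapsto 1 - T(1 - \cdot)$ yields a convex, sequentially pointwise closed set $\mathcal{T}_i$ of concave distortions representing $\rho_i$ as a supremum of Choquet integrals; this is the set featured in the corollary. Under this conjugation, Theorem \ref{thm:cpo_coherent} applied to $\{U_i\}$ yields both the existence claim of part (i) and the structural characterization of part (ii), once the minimization over convex distortions in \eqref{eq:inf_problem} is rewritten as the maximization over concave distortions in \eqref{eq:inf_problem_risk_measures}, and the $\max$-level sets $L_x$ in Theorem \ref{thm:cpo_coherent}(ii) are rewritten as $\min$-level sets.

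The main obstacle, which is essentially bookkeeping, is making the convex-to-concave translation of the objective integrand precise so that \eqref{eq:inf_problem_risk_measures} really is the image of \eqref{eq:inf_problem} under the conjugation, and that the $\min$ in the corollary corresponds to the $\max$ in the theorem. Once this is verified, the entire structural characterization — the form $Y_i^* = g_i^*(S) + c_i^*$, the decomposition $g_i^*(x) = \int_0^x h_i(z)\,dz$, and the constraints $\sum_{i \in L_x} h_i(x) = 1$ and $\sum_{i \in L_x^C} h_i(x) = 0$ — follows directly from the corresponding statements in Theorem \ref{thm:cpo_coherent}.
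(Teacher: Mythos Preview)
The paper offers no separate proof of this corollary; it simply presents it as Theorem \ref{thm:cpo_coherent} ``restated in the context of risk sharing,'' invoking the sign correspondence between coherent risk measures and positively homogeneous monetary utilities and taking $\underline{s}=0$ since $S\ge 0$. Your plan is therefore exactly the intended one, only made more explicit.

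One caveat on the bookkeeping you correctly flag as the crux: the conjugation $T'\mapsto 1-T'(1-\cdot)$ does \emph{not} by itself transform $U_i(Z)=\inf_{T'}\int Z\,dT'\circ\Pr$ (convex $T'$) into $\rho_i(Z)=\sup_T\int Z\,dT\circ\Pr$ (concave $T$) when you set $U_i=-\rho_i$ on the same argument. The identity underlying that conjugation is $\int(-Z)\,dT'\circ\Pr=-\int Z\,d\bigl(1-T'(1-\cdot)\bigr)\circ\Pr$, so an argument negation $Z\mapsto -Z$ is part of the package. In practice you have two clean options: (a) set $U_i(W):=-\rho_i(-W)$, apply Theorem \ref{thm:cpo_coherent} to the negated aggregate, and unwind; or (b) simply rerun the proof of Theorem \ref{thm:cpo_coherent} verbatim with $\rho_i$ in place of $U_i$, $\sup$ and $\inf$ interchanged, and concave distortions in place of convex ones. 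Either route delivers \eqref{eq:inf_problem_risk_measures} and the $\min$-based definition of $L_x$; the paper's phrasing (invoking Lemma \ref{lem:distortion_set_convex} directly for $\rho_i$ with concave distortions) is closer in spirit to (b).
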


The case of dual utilities, as examined in Subsection \ref{subsec:dual_utilities}, is analogous to the case where every agent in a risk-sharing market uses a coherent distortion risk measure. These risk measures are of the form
\begin{align*}
    \rho_i:\mathcal{X}&\to\R\\
    Z&\mapsto\int Z\,dT_i\circ\Pr\,,
\end{align*}
where $T_i$ is a concave distortion function. The result of Corollary \ref{cor:cpo_drm_characterization} applies in this case, and it is restated below in the setting of risk sharing. This characterization is known in the literature on distortion risk measures. See \cite[Theorem 3.3]{liu2020weighted} and \cite[Proposition 1]{boonen2021competitive}, for example.

\begin{corollary}
\label{cor:cpo_characterization_risk_version}
    Suppose that each agent uses a coherent distortion risk measure. For each $x\in\R_+$, let
        \[
            L_x:=\left\{i\in\mathcal{N}:
                T_i(\Pr(S>x))=\min_{j\in\mathcal{N}}\{T_j(\Pr(S>x))\}\right\}\,,
        \]
    and let $L_x^C:=\mathcal{N}\setminus L_x$. For each $i\in\mathcal{N}$, let $h_i:\R_+\to[0,1]$ be a function such that for almost every $x$,
        \[
            \sum_{i\in L_x}h_i(x)=1\mbox{ and }
                \sum_{i\in L_x^C}h_i(x)=0\,,
        \]
    and let $g_i^*(x):=\int_0^xh_i(z)\,dz$. Let $\{c_i^*\}_{i=1}^n\in\R^n$ be chosen such that $\sum_{i=1}^nc_i^*=0$ and $\{g_i^*(S)+c_i^*\}_{i=1}^n\in\mathcal{IR}$. Then 
        \[
            \inf_{\{Y_i\}_{i=1}^{n}\in\mathcal{IR}\cap\mathcal{A}_C}\left\{\sum_{i=1}^{n}\rho_i\left(Y_i\right)\right\}
                =\sum_{i=1}^{n}\rho_i\left(g_i^*(S)+c_i^*\right)
                =\int_0^\infty\min_{i\in\mathcal{N}}\,\{T_i(\Pr(S>x))\}\,dx\,,
        \]
    and therefore $\{g_i^*(S)+c_i^*\}_{i=1}^n$ is comonotone Pareto optimal. Furthermore, all comonotone Pareto-optimal allocations are of this form. That is, if $\{Y_i^*\}_{i=1}^n$ is comonotone Pareto optimal, then
        \[
            Y_i^*=g_i^*(S)+c_i^*\,,
        \]
    where $\{c_i^*\}_{i=1}^n\in\R^n$ and $\{g_i^*\}_{i=1}^n\in\mathcal{G}$ are of the form given above.
\end{corollary}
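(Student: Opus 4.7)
The plan is to derive Corollary~\ref{cor:cpo_characterization_risk_version} as the singleton specialization of Corollary~\ref{cor:main_theorem_risk_sharing}, in direct parallel to how Corollary~\ref{cor:cpo_drm_characterization} is obtained from Theorem~\ref{thm:cpo_coherent}. First I would invoke the risk-measure version of Proposition~\ref{prop:dual_utility_rep}: a coherent distortion risk measure $\rho_i(Z)=\int Z\,dT_i\circ\Pr$ with concave $T_i$ is a comonotone-additive, law-invariant, positively homogeneous monetary risk measure, and its Kusuoka-type representation (Lemma~\ref{lem:distortion_set_convex} applied to $-\rho_i$) reduces to the singleton $\mathcal{T}_i=\{T_i\}$. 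Consequently the outer maximization in \eqref{eq:inf_problem_risk_measures} is vacuous: its unique admissible point is $(T_1,\dots,T_n)$, which trivially attains the value $\int_0^\infty \min_{i\in\mathcal{N}} T_i(\Pr(S>x))\,dx$.

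Next I would read off the shape of every element of $\mathcal{CPO}$ from part~(ii) of Corollary~\ref{cor:main_theorem_risk_sharing}. With $T_i^*=T_i$, the index set $L_x$ defined there coincides verbatim with the one in the present statement, and the ``if and only if'' gives both directions of the characterization $Y_i^*=g_i^*(S)+c_i^*$, with $g_i^*(x)=\int_0^x h_i(z)\,dz$ and $h_i$ supported on $L_x$ with $\sum_{i\in L_x} h_i(x)=1$. To verify the explicit value, I would use translation invariance of each $\rho_i$ together with $\sum_i c_i^*=0$ to reduce $\sum_i \rho_i(g_i^*(S)+c_i^*)$ to $\sum_i \rho_i(g_i^*(S))$. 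For each $i$, since $g_i^*$ is non-decreasing with $g_i^*(0)=0$ and $S\ge 0$, a layer-cake change of variables gives
\[
\int g_i^*(S)\,dT_i\circ\Pr \;=\; \int_0^\infty h_i(x)\,T_i(\Pr(S>x))\,dx.
\]
Summing over $i$ and exchanging sum with integral, the integrand at each $x$ becomes $\sum_i h_i(x)T_i(\Pr(S>x))$; since $h_i$ is supported on the minimizing set $L_x$ and its values there sum to one, this collapses to $\min_{i\in\mathcal{N}} T_i(\Pr(S>x))$, producing the claimed value. Comonotone Pareto optimality of $\{g_i^*(S)+c_i^*\}_{i=1}^n$ in $\mathcal{CS}_{\One}=\mathcal{CPO}$ (Corollary~\ref{cor:comonotone_characterization_translation_invar}) then follows.

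The main technical hurdle will be justifying the layer-cake identity cleanly when $g_i^*$ is merely non-decreasing and Lipschitz rather than strictly increasing: the $h_i$ are typically indicator-like, so naive change-of-variables does not apply. I would handle this by approximating $g_i^*$ by strictly increasing Lipschitz maps with slope $h_i+\varepsilon$ and passing to the limit using the norm continuity of $\rho_i$; alternatively, one can observe that for any non-decreasing $1$-Lipschitz $g$ with $g(0)=0$ the identity $\Pr(g(S)>t)=\Pr(S>g^{-1}(t))$ holds up to a $\Pr$-null set on the range of $g$, and then apply Fubini with the substitution $t=g_i^*(x)$, $dt=h_i(x)\,dx$. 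Once the layer-cake identity is secured, the rest is a direct bookkeeping reduction to Corollary~\ref{cor:main_theorem_risk_sharing}.
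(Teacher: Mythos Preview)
Your proposal is correct and matches the paper's approach exactly: the paper derives this corollary (via its utility-side twin, Corollary~\ref{cor:cpo_drm_characterization}) by observing that a coherent distortion risk measure has the singleton representation $\mathcal{T}_i=\{T_i\}$ and then invoking the main theorem, so the outer optimization in \eqref{eq:inf_problem_risk_measures} is vacuous and part~(ii) of Corollary~\ref{cor:main_theorem_risk_sharing} delivers both directions of the characterization together with the explicit value. Your worry about the layer-cake identity is misplaced: since each $g_i^*$ is non-decreasing and $1$-Lipschitz it is absolutely continuous with a.e.\ derivative $h_i$, and the paper handles the identity $\int g_i^*(S)\,dT_i\circ\Pr=\int_0^\infty T_i(\Pr(S>x))\,h_i(x)\,dx$ in the proof of Theorem~\ref{thm:cpo_coherent} by citing a standard result (\cite[Lemma 2.1]{zhuang2016marginal}), so no approximation argument is needed.
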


We note that in an insurance context, the structure of the allocations $\{Y_i^*\}_{i=1}^n$ admit the following interpretation. For each $i\in\mathcal{N}$, we have $Y_i^*=g_i^*(S)+c_i^*$. Here, the function $g_i^*$ can be seen as the retained risk allocated to agent $i$, as a function of the value of the loss $S$. Since each $g_i^*$ is increasing and 1-Lipschitz, these retention functions satisfy the so-called \textit{no-sabotage condition} of \cite{CarlierDana2003b,CarlierDana2005a}, which guarantees that no agent has an incentive to misreport their actual realized loss. The constants $c_i^*$ can be interpreted as the fixed premia that each agent $i$ pays to participate in the risk-sharing scheme. In the following, we will refer to the functions $g_i^*$ as retention functions.

\medskip
\subsection{An Example -- A Risk Sharing Problem}

As an application of the explicit characterization of comonotone Pareto optima, we consider a risk-sharing market from the perspective of the risk-bearing agents. As is common in risk management, risk measures are used to determine the amount of capital that each agent must hold in reserve to offset future liabilities. That is, $\rho_i(Z)$ represents the amount of capital that agent $i$ must reserve when faced with the risk $Z$. It is in the agents' best interest to reach a risk-sharing arrangement that allows them to minimize the amount of capital reserve required. Each agent's capital reserve must meet two targets. First, the agent must adhere to international capital requirement reporting standards, which typically prescribe the same risk measure to be used for all participating agents in the market. Second, each agent also uses an internal capital model for risk management at the institutional level. These internal models can vary based on the differences among agents in management procedures, accounting practices, etc. 

We assume that every risk measure in this scenario is a law-invariant coherent risk measure, and that admissible allocations are constrained to be comonotone. The results of Section \ref{sec:comonotone} imply that this restriction does not negatively affect the total welfare gain that is possible from risk sharing. Furthermore, as argued by \cite{embrechts2018quantile}, comonotonicity is an important property in the context of risk-sharing arrangements since it eliminates the possibility for moral hazard among collaborative agents.

We first examine the Pareto-optimal allocations that arise from regulation. A prominent set of standards on capital reserves reporting are those that regulate the international banking sector, as set by the Basel Committee on Banking Supervision. Also known as Basel IV, these standards suggest that institutions report their Expected Shortfall calculated with the parameter $\alpha=2.5\%$. We recall the following standard definitions:

\begin{definition}
    The \emph{Value-at-Risk (VaR)} at level $\alpha\in(0,1)$ of a random variable $Z\in\mathcal{X}$ under the probability measure $\Pr$ is
        \[
            \mathrm{VaR}_\alpha^\Pr(Z)
                :=\inf_{t\in\R}\left\{\Pr(Z>t)\le\alpha\right\}\,.
        \]
\end{definition}

\begin{definition}
    The \emph{Expected Shortfall (ES)} at level $\alpha\in(0,1)$ of a random variable $Z\in\mathcal{X}$ under the probability measure $\Pr$ is
        \[
            \mathrm{ES}_\alpha^\Pr(Z)
                :=\frac{1}{\alpha}\int_0^\alpha\mathrm{VaR}_u^\Pr(Z)\,du\,.
        \]
\end{definition}

It is well known that $\mathrm{ES}_\alpha^\Pr$ is a coherent distortion risk measure, with the distortion function $T(t)=\min\{t/\alpha,1\}$ (e.g., \citealt[Section 2.6.3.2]{denuit2006actuarial}). However, if each agent uses only the standardized capital requirement, then the following result implies that while comonotone allocations are Pareto optimal, it is not possible to find a Pareto improvement beyond any comonotone allocation.

\begin{proposition}
\label{prop:same_rho}
    If each agent uses the same coherent distortion risk measure, then every individually rational comonotone allocation is Pareto optimal.
\end{proposition}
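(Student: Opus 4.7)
The plan is to exploit two fundamental properties of a common coherent distortion risk measure $\rho$: comonotone additivity and subadditivity. Since $\rho$ is law-invariant, comonotone-additive (being a Choquet integral against the distorted probability), and subadditive (as a coherent risk measure), the problem reduces to a short contradiction argument.

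First, I would observe that for any comonotone allocation $\{Y_i\}_{i=1}^n \in \mathcal{A}_C$, comonotone additivity of the common $\rho$ yields
\[
\sum_{i=1}^n \rho(Y_i) \;=\; \rho\!\left(\sum_{i=1}^n Y_i\right) \;=\; \rho(S).
\]
In particular, the total risk $\sum_i \rho(Y_i)$ is the same for every comonotone allocation. This also matches the formula of Corollary \ref{cor:cpo_characterization_risk_version}: when all distortion functions coincide with a single concave $T$, the expression $\int_0^\infty \min_i T(\Pr(S>x))\,dx = \int_0^\infty T(\Pr(S>x))\,dx$ is precisely $\rho(S)$ via the Choquet representation.

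Second, I would fix an individually rational comonotone allocation $\{Y_i\}_{i=1}^n \in \mathcal{IR}\cap\mathcal{A}_C$ and assume, towards a contradiction, that it is Pareto-dominated by some $\{\tilde Y_i\}_{i=1}^n \in \mathcal{IR}$ (not necessarily comonotone), with $\rho(\tilde Y_i) \le \rho(Y_i)$ for all $i\in\mathcal{N}$ and strict inequality for at least one index. Summing gives
\[
\sum_{i=1}^n \rho(\tilde Y_i) \;<\; \sum_{i=1}^n \rho(Y_i) \;=\; \rho(S).
\]
On the other hand, subadditivity of the coherent risk measure $\rho$ yields
\[
\rho(S) \;=\; \rho\!\left(\sum_{i=1}^n \tilde Y_i\right) \;\le\; \sum_{i=1}^n \rho(\tilde Y_i),
\]
which contradicts the previous strict inequality. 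Hence no such dominating allocation exists, and $\{Y_i\}_{i=1}^n$ is Pareto optimal.

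The argument has essentially no obstacle: the whole proof hinges on juxtaposing comonotone additivity (giving equality for the candidate) with subadditivity (giving a lower bound for any competitor). The only detail worth being careful about is that Pareto optimality is tested against all allocations in $\mathcal{A}$, not just comonotone ones, but subadditivity of $\rho$ handles precisely this general comparison. If desired, I can briefly remark that this also recovers the fact that $\sum_i \rho$ attains its infimum over $\mathcal{IR}\cap\mathcal{A}_C$ at every point of this set, which is the sum-convolution characterization $\mathcal{CS}_\One$ being trivial in this degenerate case.
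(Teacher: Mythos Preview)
Your proof is correct and takes a genuinely different route from the paper's. The paper argues by verifying that an arbitrary IR comonotone allocation satisfies the structural conditions of Corollary~\ref{cor:cpo_characterization_risk_version}: it writes $Y_i^* = g_i^*(S) + c_i^*$ via Lemma~\ref{lem:translation}, extracts densities $h_i$ with $\sum_i h_i = 1$ a.e., and then observes that since all distortions coincide, $L_x = \mathcal{N}$ for every $x$, so the conditions $\sum_{i\in L_x} h_i(x) = 1$ and $\sum_{i\in L_x^C} h_i(x) = 0$ are automatic. This is a ``plug into the characterization'' proof that leans on the machinery already built.

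Your argument is more elementary and self-contained: comonotone additivity pins $\sum_i \rho(Y_i)$ at $\rho(S)$ for every comonotone allocation, and subadditivity bounds $\sum_i \rho(\tilde Y_i)$ from below by $\rho(S)$ for \emph{any} allocation, so strict domination is impossible. This bypasses the characterization theorem entirely and makes transparent the mechanism (the gap between subadditivity and comonotone additivity collapses when $\rho$ is common). The paper's approach, by contrast, serves to illustrate how its main result specializes in this degenerate case. Both are valid; yours is shorter and arguably more robust, while the paper's keeps the narrative tied to Corollary~\ref{cor:cpo_characterization_risk_version}.
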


\begin{proof}
    Let $\{Y_i^*\}_{i=1}^n$ be any individually rational comonotone allocation. Then by Lemma \ref{lem:translation}, $Y_i^*=g_i^*(S)+c_i^*$ where $\{g_i^*\}_{i=1}^n\in\mathcal{G}$ and $\{c_i^*\}_{i=1}^n\in\R^n$. Since $\{g_i^*\}_{i=1}^n\in\mathcal{G}$, each $g_i^*$ is 1-Lipschitz and therefore absolutely continuous. Hence, for each $i$, there exists $h_i:\R_+\to[0,1]$ such that $g_i^*(x)=\int_0^xh_i(z)\,dz$, a.e. Since $\sum_{i=1}^ng_i^*(x)=x$, we also have $\sum_{i=1}^nh_i(x)=1$, a.e. Additionally, since each agent uses the same coherent distortion risk measure, it follows that $L_x=\mathcal{N}$, for all $x\in\R_+$. Thus,
        \[
            \sum_{i\in L_x}h_i(x)=\sum_{i=1}^nh_i(x)=1 \ \mbox{ and } \ 
                \sum_{i\in L_x^C}h_i(x)=0\,,
        \]
    satisfying the conditions of Corollary \ref{cor:cpo_characterization_risk_version}. It then follows that $\{Y_i^*\}_{i=1}^n$ is CPO and PO.
\end{proof}

The result of Proposition \ref{prop:same_rho} is not surprising, since distortion risk measures are comonotone-additive maps. We may interpret the objective in problem \eqref{eq:inf_problem_risk_measures} as a measure of aggregate post-transfer risk in the market. If $\rho=\rho_1=\ldots=\rho_n$, we have
    \[
        \inf_{\{Y_i\}_{i=1}^{n}\in\mathcal{IR}\cap\mathcal{A}_C}
            \left\{\sum_{i=1}^{n}\rho_i\left(Y_i\right)\right\}
            =\sum_{i=1}^n\rho_i(S)=\rho(S)\,.
    \]

\medskip

\subsubsection{A Numerical Illustration}

Suppose that there are $n=3$ agents in the market, each using a different coherent distortion risk measure for internal capital management. We assume that the aggregate risk $S$ follows a Gamma distribution with shape parameter $k=2$ and scale parameter $\theta=10$. The mean of this distribution is $20$, and the variance of this distribution is $200$. Its probability density function is shown in Figure \ref{fig:density}.

\begin{figure}[hbtp!]
	\centering
	\includegraphics[scale=0.7]{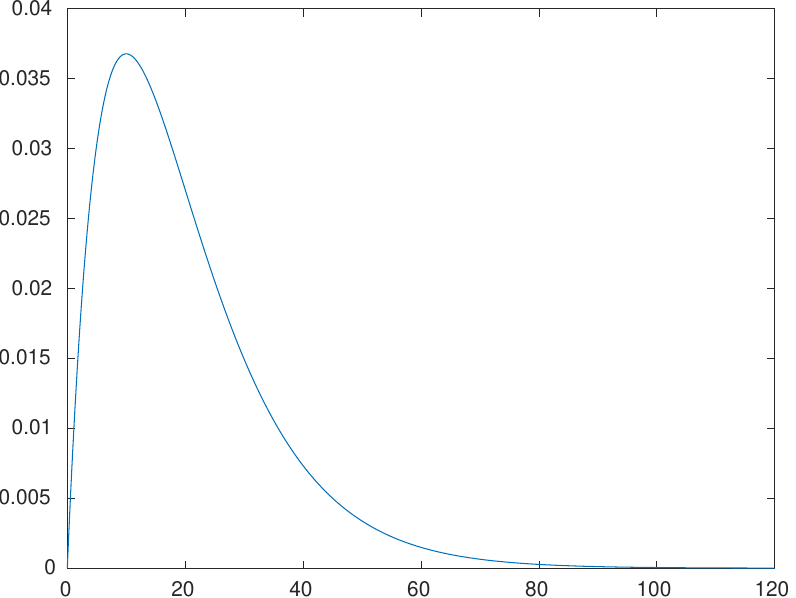}
	\caption{Probability Density Function of $S$.}
	\label{fig:density}
\end{figure}

We assume that all agents in this market are subject to the Basel IV regulatory standard of $\mathrm{ES}_{2.5\%}^\Pr$. This risk measure can be represented as a distortion risk measure, by using the distortion function
    \[
        \widehat{T}(t):=\min\{t/0.025,1\}\,.
    \]

\noindent However, each agent also possesses their individual capital requirements for internal usage. Agent 1 has decided that they would like to take a more conservative approach and reserve capital equal to the Expected Shortfall at the 1\% level. This is represented by the distortion function
    \[
        \widehat{T}_1(t):=\min\{t/0.01,1\}\,.
    \]

\noindent Since this agent must meet both the regulatory requirement and the internal capital requirement, their risk measure is therefore given by
    \begin{align*}
        \rho_1:\mathcal{X}&\to\R\\
        Z&\mapsto\max\left\{\int Z\,d\widehat{T}\circ\Pr,\int Z\,d\widehat{T}_1\circ\Pr\right\}\,.
    \end{align*}

To apply the result of Corollary \ref{cor:main_theorem_risk_sharing}, we need to express the risk measure $\rho_1$ in terms of a closed convex set of distortion functions. This is given in the following lemma, which provides the desired expression for each agent in this market. We use the notation for Agent $1$ in the statement and proof of this result for convenience, but the result applies to the other two agents as well.

\begin{lemma}
\label{lem:line_segment}
    Suppose that $\rho_1(Z)=\max\left\{\int Z\,d\widehat{T}\circ\Pr,\int Z\,d\widehat{T}_1\circ\Pr\right\}$, for each $Z \in \mathcal{X}$, and let $\mathcal{T}_1:=\left\{\lambda\widehat{T}+(1-\lambda)\widehat{T}_1 \,:\, \lambda\in[0,1]\right\}$ be the convex hull of the set $\left\{\widehat{T},\widehat{T}_1\right\}$. Then for all $Z\in\mathcal{X}$, we have 
        \[
            \rho_1(Z)=\sup_{T\in\mathcal{T}_1}\int Z\,dT\circ\Pr\,,
        \]
    and $\mathcal{T}_1$ is convex and sequentially closed under pointwise convergence.
\end{lemma}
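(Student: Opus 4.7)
The plan hinges on a single structural observation: for each fixed $Z\in\mathcal{X}$, the Choquet integral $T\mapsto\int Z\,dT\circ\Pr$ is affine in the distortion $T$. This follows from the layer-cake representation together with the normalization $T(1)=1$ shared by every distortion. Consequently, for any $T=\lambda\widehat{T}+(1-\lambda)\widehat{T}_1\in\mathcal{T}_1$,
\begin{equation*}
\int Z\,dT\circ\Pr \;=\; \lambda\int Z\,d\widehat{T}\circ\Pr+(1-\lambda)\int Z\,d\widehat{T}_1\circ\Pr.
\end{equation*}

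Taking $\sup_{T\in\mathcal{T}_1}$ then reduces the problem to $\sup_{\lambda\in[0,1]}$ of an affine function of $\lambda$, whose maximum is attained at one of the two endpoints. Hence $\sup_{T\in\mathcal{T}_1}\int Z\,dT\circ\Pr = \max\{\int Z\,d\widehat{T}\circ\Pr,\int Z\,d\widehat{T}_1\circ\Pr\}$, which is precisely $\rho_1(Z)$ by definition. This establishes the representation.

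Convexity of $\mathcal{T}_1$ is immediate from the parametrization: a convex combination of two elements corresponding to $\lambda_1,\lambda_2\in[0,1]$ corresponds to $\alpha\lambda_1+(1-\alpha)\lambda_2\in[0,1]$. For sequential closedness under pointwise convergence, I would take a sequence $T_n=\lambda_n\widehat{T}+(1-\lambda_n)\widehat{T}_1\in\mathcal{T}_1$ with $T_n\to T$ pointwise on $[0,1]$, and pick a single test point $t_0$ on which $\widehat{T}$ and $\widehat{T}_1$ disagree (for instance any $t_0\in(0.01,0.025)$, where $\widehat{T}(t_0)=t_0/0.025<1=\widehat{T}_1(t_0)$). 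Solving $T_n(t_0)=\lambda_n\widehat{T}(t_0)+(1-\lambda_n)\widehat{T}_1(t_0)$ for $\lambda_n$ and using $T_n(t_0)\to T(t_0)$ together with the bound $\lambda_n\in[0,1]$ yields $\lambda_n\to\lambda\in[0,1]$. Passing to the limit pointwise in $t$ then gives $T=\lambda\widehat{T}+(1-\lambda)\widehat{T}_1\in\mathcal{T}_1$.

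The only delicate ingredient is the affinity of the Choquet integral in the distortion, which is standard but worth noting explicitly. Once that is in hand, both the representation of $\rho_1$ and the topological properties of $\mathcal{T}_1$ follow from elementary manipulations, since the parametrization of $\mathcal{T}_1$ by the single scalar $\lambda\in[0,1]$ is a homeomorphism onto its image whenever the two endpoint distortions are distinct.
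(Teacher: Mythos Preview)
Your proof is correct and follows essentially the same approach as the paper: both invoke linearity (affinity) of the Choquet integral in the distortion $T$ for the representation, and both note convexity of $\mathcal{T}_1$ is immediate from its definition. The only minor difference is in the sequential-closedness argument: the paper extracts a convergent subsequence of $(\lambda_k)$ via compactness of $[0,1]$ and identifies the limit, whereas you use a separating test point $t_0$ to show the full sequence $(\lambda_n)$ converges; both are equally valid and elementary.
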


\begin{proof}
The equality $\rho_1(Z)=\underset{T\in\mathcal{T}_1}{\sup}\displaystyle\int Z\,dT\circ\Pr$ follows from linearity of the Choquet integral in $T$. Furthermore, $\mathcal{T}_1$ is convex by definition. It remains to show that $\mathcal{T}_1$ is sequentially closed under pointwise convergence. To this end, let $\{\lambda_k\widehat{T}+(1-\lambda_k)\widehat{T}_1\}_{k=1}^\infty$ be a sequence in $\mathcal{T}_1$ that converges to a distortion function $\tilde{T}$ pointwise. Since $\lambda_k\in[0,1]$ for all $k$, the sequence $\{\lambda_k\}_{k=1}^\infty$ admits a converging subsequence $\{\lambda_{k_l}\}_{l=1}^\infty$ with limit $\lambda\in[0,1]$. It then follows that for each $t\in[0,1]$, we have
        \[
            \lim_{l\to\infty}\left\{\lambda_{k_l}\widehat{T}(t)+(1-\lambda_{k_l})\widehat{T}_1(t)\right\}
                =\lambda\widehat{T}(t)+(1-\lambda)\widehat{T}_1(t)
                =\tilde{T}(t)\,.
        \]
    Hence, $\tilde{T}=\lambda\widehat{T}+(1-\lambda)\widehat{T}_1\in\mathcal{T}_1$.
\end{proof}

\smallskip

We construct the risk measures for Agents 2 and 3 in a similar manner. Note that in the particular case of Agent 1, the Expected Shortfall at a level of $1\%$ is always greater than that at a level of $2.5\%$, and so the agent's risk measure can be instead just represented as $\mathrm{ES}_{1\%}^\Pr$. However, this will not be the case for the other agents.

\medskip

We assume that Agent $2$ also wishes to meet the requirement given by the distortion function
    \[
        \widehat{T}_2(t):=\min\{(t/0.05)^{0.3},1\}\,.
    \]

\noindent This is similar to the Expected Shortfall at a level of $5\%$, but assigns more weight to the extreme tail risk. Note that $\widehat{T}_2$ is concave. Let $\mathcal{T}_2$ be the convex hull of the set $\left\{\widehat{T}, \widehat{T}_2\right\}$. By Lemma \ref{lem:line_segment}, the risk measure of Agent 2 is represented by the set $\mathcal{T}_2$, in the sense that 
$$\rho_2(Z)=\underset{T\in\mathcal{T}_2}{\sup}\displaystyle\int Z\,dT\circ\Pr, \ \forall Z\in\mathcal{X}.$$

\smallskip

Finally, Agent 3 chooses to implement a distortion function given by Wang's transform (see \cite{wang2000class}) as follows:
    \[
    \widehat{T}_3(t):=\Phi(\Phi^{-1}(t)+2.8)\,,
    \]

\noindent where $\Phi$ denotes the distribution function of the standard normal distribution. Similarly, we let $\mathcal{T}_3$ denote the convex hull of the set $\left\{\widehat{T}, \widehat{T}_3\right\}$. A comparison of all distortion functions used in this scenario is illustrated in Figure \ref{fig:distortions}.

\smallskip

\begin{figure}[hbtp!]
	\centering
	\includegraphics[scale=0.7]{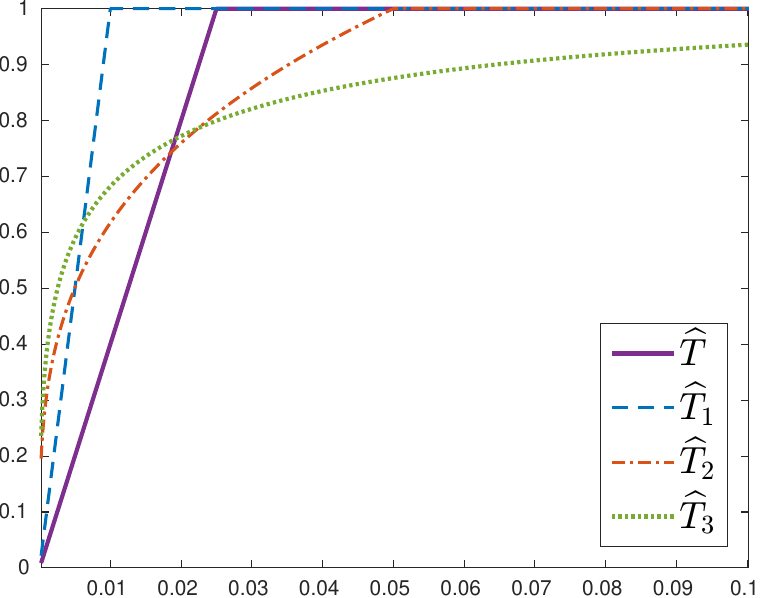}
	\caption{Probability Distortion Functions.}
	\label{fig:distortions}
\end{figure}

\smallskip

We can now apply Corollary \ref{cor:main_theorem_risk_sharing} to solve for comonotone Pareto optima in this risk-sharing market. 
First, it is determined through numerical optimization that the vector 
    \[
        (T_1^*, \, T_2^*, \, T_3^*) := (\widehat{T}_1, \, \widehat{T}_2, \, 0.2269 \, \widehat{T} + 0.7731 \, \widehat{T}_3)
    \]
is the unique solution to \eqref{eq:inf_problem_risk_measures}.
This yields the retention functions shown in Figure \ref{fig:retentions}. Furthermore, the optimal retention functions are unique in this case. Indeed, the marginal retention for each layer of the risk $S>x$ depends only on the agents with the lowest value of $T_i^*(\Pr(S>x))$, that is, the most optimistic likelihood assessment of the risk layer $S>x$. The agents' assessments of the likelihoods of tail events are shown in Figure \ref{fig:tstar}. 

\medskip

Now, recall from Corollary \ref{cor:main_theorem_risk_sharing} that for each $x\in\R_+$, the set $L_{x,T_1^*,T_2^*,T_3^*}$ denotes the set of agents with the most optimistic likelihood assessment of the tail event $S>x$, when the preference of agent $i$ is represented by a distortion risk measure with respect to $T_i^*$. It can be numerically verified that the agents with the most optimistic view towards the likelihood of such tail events are as follows:
    \[
        L_{x,T_1^*,T_2^*,T_3^*}=\begin{cases}
                \{3\}\,,&x\in(0,53.302)\cup(68.164,74.287)\\
                \{2\}\,,&x\in(53.302,68.164)\\
                \{1\}\,,&x>74.287
            \end{cases}\,.
    \]

\noindent Figure \ref{fig:retentions} shows how the retention function for agent $i$ increases with a slope of $1$ whenever $L_{x,T_1^*,T_2^*,T_3^*}=\{i\}$, that is, agent $i$ absorbs that tranche of the aggregate risk. Since $L_{x,T_1^*,T_2^*,T_3^*}$ is a singleton for a.e.\ $x$, it follows that each marginal retention is a.e.\ unique, which implies that retention functions are unique. That is, the retention structure shown in Figure \ref{fig:retentions} is the only possible structure that is comonotone Pareto optimal. Note that the retention is only increasing for an agent when that agent is most optimistic about the likelihood of the tail risk. 

\begin{figure}[htbp!]
\centering      
  \subfloat[Assessments of Tail Risks $T_i^*(\Pr(S>x))$.]{\includegraphics[scale=0.57]{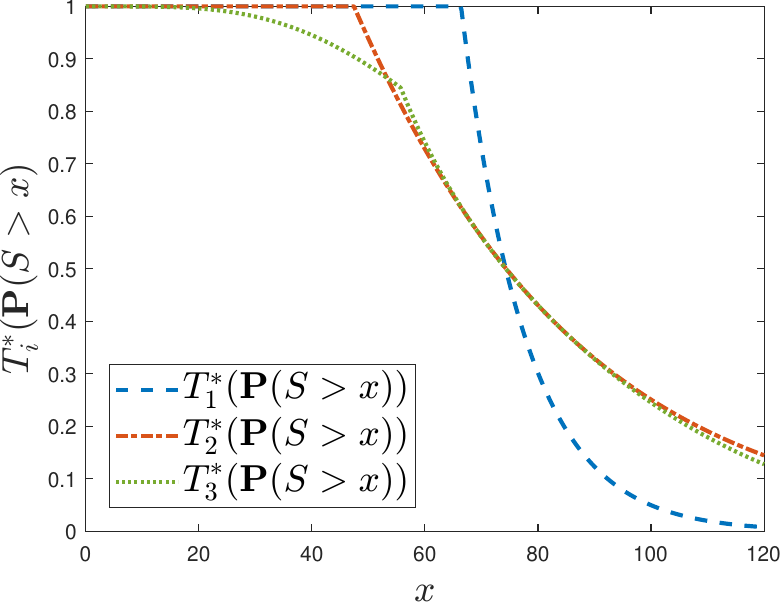}\label{fig:tstar}}
  \hspace{0.2cm}
  \subfloat[Pareto-Optimal Retention Structure.]{\includegraphics[scale=0.57]{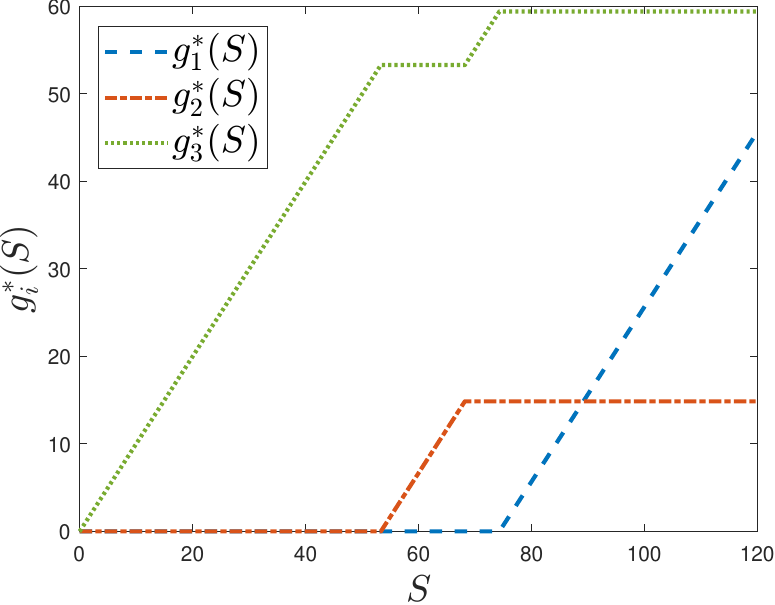}\label{fig:retentions}}
  \vspace{-0.2cm}
\caption{\vspace{0.6cm}}
\end{figure}

\bigskip
\section{Conclusion}
\label{sec:conclusion}

In this paper, we provide a characterization of Pareto-optimal allocations in a pure-exchange economy, in which agents have preferences represented by law-invariant positively homogeneous monetary utilities. Since these preferences are consistent with respect to the concave order, the classical comonotone improvement result applies. Identifying Pareto-optimal allocations then reduces to solving a sup-convolution problem over the set of comonotone allocations. By combining this result with duality representations of these preferences, we obtain a new characterization of Pareto-optimal allocations. The main advantage of our result is that provides explicit formulas for the shape of the comonotone optimal allocations themselves, as well as a clear and easily implementable algorithm for finding these optima.

\medskip

Our results may also be interpreted in the context of risk sharing, where these preferences are more commonly known as law-invariant coherent risk measures. This is a wide class of risk measures that encompasses many popular functionals in the insurance and risk management literature, including the expected shortfall. As a special case of our characterization result, we recover an explicit characterization of Pareto-optimal allocations when agents use law-invariant coherent risk measures. We apply this characterization to a problem of risk sharing in a numerical illustration.

\clearpage
\newpage
\hypertarget{LinkToAppendix}{\ }
\appendix

\vspace{-0.8cm}

\section{Proofs of the Main Results}
\label{appendix_proofs}

\subsection{Proof of Proposition \ref{prop:weighted_infconv}}

Suppose that $\{Y_i^*\}_{i=1}^n\in\mathcal{A}$ is not PO, and hence not weakly PO under our assumptions. Then by Lemma \ref{lem:po_strict}, there exists another IR allocation $\{Y_i\}_{i=1}^n$ such that
    \[
        U_i(Y_i)>U_i(Y_i^*),
            \ \forall\,i\in\mathcal{N}\,.
    \]
Therefore, since $\lambda\ge0$ and not all entries of $\lambda$ are zero,
    \[
        \sum_{i=1}^n\lambda_i\,U_i(Y_i)
            >\sum_{i=1}^n\lambda_i\,U_i(Y_i^*)\,.
    \]
Conversely, suppose $\{Y_i^*\}_{i=1}^n$ is PO. Define subsets of $\R^n$ by the following:
    \begin{align*}
        \mathcal{U}&:=co\,\Big\{\{U_i(Y_i)\}_{i=1}^n:\{Y_i\}_{i=1}^n\in\mathcal{IR}\Big\}\,,\\
        \mathcal{U}^-&:=\{r\in\R^n:r\le u\mbox{ for some }u\in\mathcal{U}\}\,,\\
        \mathcal{V}&:=\left\{r\in\R^n:r_i\ge U_i(Y_i^*)\right\}\setminus\left\{U_i(Y_i^*)\right\}_{i=1}^n\,,
    \end{align*}

\noindent where $co$ denotes the convex hull. Then by construction, both $\mathcal{U}$ and $\mathcal{V}$ are convex sets, and it is easy to verify that $\mathcal{U}^-$ is convex.

\medskip

We claim that $\mathcal{U}^-\cap\mathcal{V}=\varnothing$. Suppose for the sake of contradiction that $(r_1,\ldots,r_n)\in\mathcal{U}^-\cap\mathcal{V}$. Then $(r_1,\ldots,r_n)$ is dominated by some convex combination of elements in $\mathcal{U}$. That is, for each $i\in\mathcal{N}$, we have
    \[
        r_i\le\sum_{k=1}^mt_k\,U_i\left(Y_i^{(k)}\right)\,,
    \]
where $\sum_{k=1}^mt_k=1$ and $\left\{Y_i^{(k)}\right\}_{i=1}^n\in\mathcal{IR}$ for all $k\in1,\ldots,m$. Recall that for each $i\in\mathcal{N}$, the initial endowment of agent $i$ is denoted by $X_i$. By concavity of each $U_i$, we have
    \begin{equation}
        \label{eq:convex_ir}
        U_i(X_i)
            =\sum_{k=1}^mt_k\,U_i(X_i)
            \le\sum_{k=1}^mt_k\,U_i\left(Y_i^{(k)}\right)
            \le U_i\left(\sum_{k=1}^mt_k\,Y_i^{(k)}\right)\,.
    \end{equation}
Furthermore,    
    \begin{equation}
        \label{eq:convexity}
        U_i(Y_i^*)
            \le r_i
            \le\sum_{k=1}^mt_k\,U_i\left(Y_i^{(k)}\right)
            \le U_i\left(\sum_{k=1}^mt_k\,Y_i^{(k)}\right)\,,
    \end{equation}

\noindent where the inequality is strict for some $i\in\mathcal{N}$. On the other hand,
    \[
        \sum_{i=1}^n\sum_{k=1}^mt_k\,Y_i^{(k)}
            =\sum_{k=1}^mt_k\sum_{i=1}^nY_i^{(k)}
            =\sum_{k=1}^mt_k\,S=S\,,
    \]
implying that $\left\{\sum_{k=1}^mt_k\,Y_i^{(k)}\right\}_{i=1}^n\in\mathcal{A}$, and hence $\left\{\sum_{k=1}^mt_k\,Y_i^{(k)}\right\}_{i=1}^n\in\mathcal{IR}$ by \eqref{eq:convex_ir}. Therefore $\left\{\sum_{k=1}^mt_k\,Y_i^{(k)}\right\}_{i=1}^n$ is an IR allocation that improves upon $\{Y_i^*\}_{i=1}^n$, which contradicts the assumption that $\{Y_i^*\}_{i=1}^n\in\mathcal{PO}$. We conclude that $\mathcal{U}^-\cap\mathcal{V}=\varnothing$.

\medskip

The hyperplane separation theorem \cite[Theorem 5.61]{AliprantisBorder} then implies that there exists $\lambda\in\R^n\setminus\{0\}$ such that
    \[
        \lambda\cdot u\le\lambda\cdot v,
        \ \forall\,(u,v)\in\mathcal{U}^-\times\mathcal{V}\,.
    \]
Since $\{U_i(Y_i^*)\}_{i=1}^n$ is a limit point of $\mathcal{V}$, the above implies
    \begin{equation}
        \label{eq:hyperplane}
        \lambda\cdot u\le\lambda\cdot\{U_i(Y_i^*)\}_{i=1}^n, 
            \ \forall\,u\in\mathcal{U}^-\,.
    \end{equation}

\noindent We now show that $\lambda\ge0$. To this end, for $i\in\mathcal{N}$, let $e_i\in\R^n$ be the vector with $1$ in its $i$-th coordinate, and $0$ elsewhere. Then since $\{U_i(Y_i^*)\}_{i=1}^n\in\mathcal{U}$, we have $\{U_i(Y_i^*)\}_{i=1}^n-e_j\in\mathcal{U}^-$ for any $j\in\mathcal{N}$. Substituting this into \eqref{eq:hyperplane} gives
\begin{equation*}
\lambda \cdot \{U_i(Y_i^*)\}_{i=1}^n -\lambda_j \leq \lambda \cdot \{U_i(Y_i^*)\}_{i=1}^n \ \ \hbox{and} \ \ \lambda_j \geq 0.
\end{equation*}

To complete the proof, note that \eqref{eq:hyperplane} implies that
    \[
        \sum_{i=1}^n\lambda_i\,U(Y_i^*)
            \ge\lambda\cdot u, \ \forall \,u \in \mathcal{U}\,.
    \]
However, \eqref{eq:convex_ir} and \eqref{eq:convexity} imply that for any $(u_1,\ldots,u_n)\in\mathcal{U}$, it is possible to find an allocation $\{Y_i\}_{i=1}^n\in\mathcal{IR}$ such that 
    \[
        U_i(Y_i)\ge u_i, \    \forall\,i\in\mathcal{N}\,.
    \]
Hence, we have
    \[
        \sum_{i=1}^n\lambda_i\,U(Y_i^*)
            \ge\sup_{\{Y_i\}_{i=1}^n\in\mathcal{IR}}\sum_{i=1}^n\lambda_i\,U(Y_i)\,,
    \]
and since $\{Y_i^*\}_{i=1}^n$ is feasible for this problem, it must be a solution.\qed

\bigskip
\subsection{Proof of Theorem \ref{thm:PO_vs_CPO}}

By Proposition \ref{prop:weighted_infconv} and Corollary \ref{cor:comonotone_characterization}, we have $\mathcal{PO}=\cup_{\lambda\in\Lambda}\mathcal{S}_\lambda$ and $\mathcal{CPO}=\cup_{\lambda\in\Lambda}\mathcal{CS}_\lambda$. For a given $\lambda\in\Lambda$, we will first show that if $\mathcal{S}_\lambda$ is non-empty, then $\mathcal{CS}_\lambda$ is also non-empty and \eqref{eq:inf_convs_equal} holds. To conclude the proof, we show that $\mathcal{CS}_\lambda\ne\varnothing$ implies $\mathcal{S}_\lambda\ne\varnothing$.

\medskip

Suppose that $\mathcal{S}_\lambda$ is non-empty, and let $\{Y_i^*\}_{i=1}^n\in\mathcal{S}_\lambda$. Then it is immediate that
    \begin{align*}        
        \sum_{i=1}^{n}\lambda_i\,U_i\left(Y_i^*\right)&=\sup_{\{Y_i\}_{i=1}^{n}\in\mathcal{IR}}\left\{\sum_{i=1}^{n}\lambda_i\,U_i(Y_i)\right\}
        \ge\sup_{\{Y_i\}_{i=1}^{n}\in\mathcal{IR}\cap\mathcal{A}_C}\left\{ \sum_{i=1}^{n}\lambda_i\,U_i(Y_i)\right\}\,.
    \end{align*}
By Proposition \ref{prop:comonotone_improvement}, there exists a comonotone allocation $\left\{\tilde{Y}_i\right\}_{i=1}^n\in\mathcal{A}_C$ such that
    \[
        Y_i^*\ccv\tilde{Y}_i, \ 
            \forall i\in\{1,\dots,n\}\,.
    \]
Since each $U_i$ is Schur concave, this implies that
    \[
        U_i\left(\tilde{Y}_i\right)\ge U_i(Y_i^*),
            \ \forall i\in\{1,\ldots,n\}\,.
    \]
However, since $\{Y_i^*\}_{i=1}^n$ is Pareto optimal, equality must hold for each $i$. That is,
    \[
U_i\left(\tilde{Y}_i\right)=U_i(Y_i^*),
            \ \forall i\in\{1,\ldots,n\}\,,
    \]
implying that $\left\{\tilde{Y}_i\right\}_{i=1}^n\in\mathcal{IR}\cap\mathcal{A}_C$. Hence,
    \begin{align*}
        \sup_{\{Y_i\}_{i=1}^{n}\in\mathcal{IR}\cap\mathcal{A}_C}
            \left\{\sum_{i=1}^{n}\lambda_i\,U_i(Y_i)\right\}
            &\ge\sum_{i=1}^{n}\lambda_i\,U_i\left(\tilde{Y}_i\right)
        =\sup_{\{Y_i\}_{i=1}^{n}\in\mathcal{IR}}
            \left\{\sum_{i=1}^{n}\lambda_i\,U_i(Y_i)\right\}\,,
    \end{align*}
    
\noindent so \eqref{eq:inf_convs_equal} holds.

\medskip

It remains to show that if $\mathcal{CS}_\lambda$ is non-empty, then so is $\mathcal{S}_\lambda$. Let $\{Y_i^*\}_{i=1}^n\in\mathcal{CS}_\lambda$. It is immediate that
    \[ 
        \sup_{\{Y_i\}_{i=1}^{n}\in\mathcal{IR}}
            \left\{\sum_{i=1}^{n}\lambda_i\,U_i(Y_i)\right\}
            \ge\sup_{\{Y_i\}_{i=1}^{n}\in\mathcal{IR}\cap\mathcal{A}_C}
            \left\{\sum_{i=1}^{n}\lambda_i\,U_i(Y_i)\right\}\,.
    \]
Suppose for the sake of contradiction that this inequality is strict. That is, there exists an allocation $\{\hat{Y}_i\}_{i=1}^n\in\mathcal{IR}$ such that
    \begin{align*}
        \sum_{i=1}^{n}\lambda_i\,U_i\left(\hat{Y}_i\right)
            &>\sup_{\{Y_i\}_{i=1}^{n}\in\mathcal{IR}\cap\mathcal{A}_C}
            \left\{\sum_{i=1}^{n}\lambda_i\,U_i(Y_i)\right\}
        =\sum_{i=1}^{n}\lambda_i\,U_i(Y_i^*)\,.
    \end{align*}
Then by Proposition \ref{prop:comonotone_improvement}, there exists a comonotone allocation $\left\{\tilde{Y}_i\right\}_{i=1}^n\in\mathcal{A}_C$ such that
\[U_i\left(\tilde{Y}_i\right)\ge U_i\left(\hat{Y}_i\right), \ \forall i\in\{1,\ldots,n\}\,,\]
which implies that $\left\{\tilde{Y}_i\right\}_{i=1}^n$ is individually rational as well. Therefore
    \begin{align*}
        \sup_{\{Y_i\}_{i=1}^{n}\in\mathcal{IR}\cap\mathcal{A}_C}
            \left\{\sum_{i=1}^{n}\lambda_i\,U_i\left(Y_i\right)\right\}
            &\ge\sum_{i=1}^{n}\lambda_i\,U_i\left(\tilde{Y}_i\right)
        \ge\sum_{i=1}^{n}\lambda_i\,U_i\left(\hat{Y}_i\right)
        >\sum_{i=1}^{n}\lambda_i\,U_i\left(Y_i^*\right)\\
        &=\sup_{\{Y_i\}_{i=1}^{n}\in\mathcal{IR}\cap\mathcal{A}_C}
            \left\{\sum_{i=1}^{n}\lambda_i\,U_i\left(Y_i\right)\right\}\,,
    \end{align*}

\noindent a contradiction. Hence, \eqref{eq:inf_convs_equal} holds, and $\{Y_i^*\}_{i=1}^n\in\mathcal{S}_\lambda$, implying that $\mathcal{S}_\lambda\ne\varnothing$, as desired.\qed

\bigskip
\subsection{Proof of Proposition \ref{prop:characterization}}

If $\left\{Y^*_i\right\}_{i=1}^{n}\not\in\mathcal{PO}$, then there exists $\left\{\tilde{Y}_i\right\}_{i=1}^{n}\in\mathcal{IR}$ such that
    \[
        U_i\left(\tilde{Y}_i\right)
            \ge U_i\left(Y^*_i\right),
    \]
with at least one strict inequality, which implies that
    \[
        \sum_{i=1}^{n}U_i\left(\tilde{Y}_i\right)
>\sum_{i=1}^{n}U_i\left(Y^*_i\right).
    \]

\noindent Therefore, $\left\{Y^*_i\right\}_{i=1}^{n}\not\in\mathcal{S}_\One$, and hence $\mathcal{S}_1\subseteq\mathcal{PO}$.

\medskip

To show the reverse inclusion, assume, by way of contradiction, that there exists $\left\{Y^*_i\right\}_{i=1}^{n}\in\mathcal{PO}$ such that $\left\{Y^*_i\right\}_{i=1}^{n}\notin\mathcal{S}_\One$. Then, there exists $\left\{\tilde{Y}_i\right\}_{i=1}^{n}\in\mathcal{IR}$ such that
    \begin{equation}
    \label{eq:smaller_sum}
        \sum_{i=1}^{n}U_i\left(\tilde{Y}_i\right)
            >\sum_{i=1}^{n}U_i\left(Y^*_i\right)\,.
    \end{equation}
Define $\mathcal{N}_1,\mathcal{N}_2,\mathcal{N}_3\subseteq\mathcal{N}$ such that,
    \begin{equation*}
        U_i\left(\tilde{Y}_i\right)
            <U_i\left(Y^*_i\right),\quad\forall i\in\mathcal{N}_1\,,
    \end{equation*}
    \begin{equation*}
        U_i\left(\tilde{Y}_i\right)
            =U_i\left(Y^*_i\right),\quad\forall i\in\mathcal{N}_2\,,
    \end{equation*}
    \begin{equation*}
        U_i\left(\tilde{Y}_i\right)
            >U_i\left(Y^*_i\right),\quad\forall i\in\mathcal{N}_3\,.
    \end{equation*}

\noindent Note that $\mathcal{N}_1,\mathcal{N}_2,\mathcal{N}_3$ is a partition of $\mathcal{N}$. Moreover, by \eqref{eq:smaller_sum}, $\mathcal{N}_3\neq\varnothing$.

\medskip

By assumption, $\left\{Y^*_i\right\}_{i=1}^{n}\in\mathcal{PO}$, which implies that $\mathcal{N}_1\neq\varnothing$. Define, for $i\in\mathcal{N}_1$,
    \begin{equation*}
        \varepsilon_i
            :=U_i\left(Y^*_i\right)-U_i\left(\tilde{Y}_i\right)
            >0\,.
    \end{equation*}
Then, by \eqref{eq:smaller_sum}, there exist $\left\{\varepsilon_i\right\}_{i\in\mathcal{N}_3}$ such that, (i) $\varepsilon_i\geq 0$, for $i\in\mathcal{N}_3$; (ii) $U_i\left(\tilde{Y}_i-\varepsilon_i\right)\geq U_i\left(Y^*_i\right)$, for $i\in\mathcal{N}_3$, with at least one strict inequality; and (iii) $\sum_{i\in\mathcal{N}_3}\varepsilon_i=\sum_{i\in\mathcal{N}_1}\varepsilon_i$. Let
    \[
        \hat{Y}_i:=\begin{cases}
            \,\tilde{Y}_i+\varepsilon_i\,,
                &\quad\forall i\in\mathcal{N}_1\\
            \,\tilde{Y}_i\,
                &\quad\forall i\in\mathcal{N}_2\\
            \,\tilde{Y}_i-\varepsilon_i\,,
                &\quad\forall i\in\mathcal{N}_3
            \end{cases}
        \,.
    \]
Note that $\left\{\hat{Y}_i\right\}_{i=1}^{n}\in\mathcal{A}$, since
    \begin{equation*}
        \sum_{i=1}^{n}\hat{Y}_i
            =\sum_{i=1}^{n}\tilde{Y}_i-\sum_{i\in\mathcal{N}_1}\varepsilon_i+\sum_{i\in\mathcal{N}_3}\varepsilon_i
            =\sum_{i=1}^{n}\tilde{Y}_i=\sum_{i=1}^{n}X_i\,.
    \end{equation*}
Moreover, $\left\{\hat{Y}_i\right\}_{i=1}^{n}\in\mathcal{IR}$, since
    \begin{align*}
U_i\left(\hat{Y}_i\right)&=U_i\left(\tilde{Y}_i\right)+\varepsilon_i
            =U_i\left(\tilde{Y}_i\right)
            +\left(U_i\left(\tilde{Y}_i\right)-U_i\left(Y^*_i\right)\right)\\
        &=U_i\left(Y^*_i\right)
            \geq U_i\left(X_i\right)\,,
            \quad\forall i\in\mathcal{N}_1\,,
    \end{align*}
    \begin{equation*}
U_i\left(\hat{Y}_i\right)=U_i\left(\tilde{Y}_i\right)
            =U_i\left(Y^*_i\right)
            \geq U_i\left(X_i\right)\,,
             \quad\forall i\in\mathcal{N}_2\,,
    \end{equation*}
    \begin{equation}
    \label{eq:strict_inequality_N_3}
        U_i\left(\hat{Y}_i\right)
            =U_i\left(\tilde{Y}_i-\varepsilon_i\right)\geq U_i\left(Y^*_i\right)
            \geq U_i\left(X_i\right)\,,
            \quad\forall i\in\mathcal{N}_3\,,
    \end{equation}
    
\noindent in which \eqref{eq:strict_inequality_N_3} has at least one strict inequality. Hence, $\left\{Y^*_i\right\}_{i=1}^{n}\not\in\mathcal{PO}$, a contradiction. \qed

\bigskip
\subsection{Proof of Lemma \ref{lem:distortion_set_convex}}

By Lemma \ref{lem:coherent_rep}, there exists a convex law-invariant set $\mathcal{H}\subseteq L_+^\infty$ such that
    \[
        \mathcal{U}(Z)=
            \inf_{T\in\{\phi_H:H\in\mathcal{H}\}}\int Z\,dT\circ\Pr\,.
    \]
We claim that defining $\mathcal{T}$ to be the pointwise closure of the set
    \[
        \{\phi_H:H\in\mathcal{H}\cap\mathcal{X}^\uparrow\}
    \]
yields the desired convex set of distortions. It is immediate that
    \[
        \inf_{T\in\{\phi_H:H\in\mathcal{H}\}}\int Z\,dT\circ\Pr
            \le\inf_{T\in\{\phi_H:H\in\mathcal{H}\cap\mathcal{X}^\uparrow\}}\int Z\,dT\circ\Pr\,.
    \]
Suppose for the sake of contradiction that there exists $\tilde{H}\in\mathcal{H}$ such that
    \[
        \int Z\,d\phi_{\tilde{H}}\circ\Pr<\inf_{T\in\{\phi_H:H\in\mathcal{H}\cap\mathcal{X}^\uparrow\}}\int Z\,dT\circ\Pr\,.
    \]
Then since the probability space is non-atomic and $\mathcal{H}$ is a law-invariant set, there exists $H^\uparrow\in\mathcal{H}\cap\mathcal{X}^\uparrow$ such that $H^\uparrow$ and $\tilde{H}$ have the same distribution  \cite[Lemma 4.2]{dana2005representation}. However, by the Hardy-Littlewood inequality \cite[Theorem A.28]{FollmerSchied2016}, we have
    \[
        \int Z\,d\phi_{H^\uparrow}\circ\Pr
            =\int_0^1F_Z^{-1}(t)F_{H^\uparrow}^{-1}(1-t)\,dt
            \le\int_0^1F_Z^{-1}(t)F_{\tilde{H}}^{-1}(1-t)\,dt
            =\int Z\,d\phi_{\tilde{H}}\circ\Pr\,,
    \]
a contradiction. Hence,
    \[
        \mathcal{U}(Z)=\inf_{T\in\{\phi_H:H\in\mathcal{H}\}}\int Z\,dT\circ\Pr
            =\inf_{T\in\{\phi_H:H\in\mathcal{H}\cap\mathcal{X}^\uparrow\}}\int Z\,dT\circ\Pr\,.
    \]
Next, we show that $\{\phi_H:H\in\mathcal{H}\cap\mathcal{X}^\uparrow\}$ is convex. To this end, let $\lambda\in(0,1)$ and 
$H_1,H_2\in\mathcal{H}\cap\mathcal{X}^\uparrow$. For every $x\in[0,1]$, we have
    \begin{align*}
        \lambda\phi_{H_1}(x)+(1-\lambda)\,\phi_{H_2}(x)
            &=\lambda\int_0^xF_{H_1}^{-1}(t)\,dt+(1-\lambda)\int_0^xF_{H_2}^{-1}(t)\,dt\\
        &=\int_0^x\lambda\,F_{H_1}^{-1}(t)+(1-\lambda)\,F_{H_2}^{-1}(t)\,dt
        =\int_0^xF_{\lambda\,H_1+(1-\lambda)\,H_2}^{-1}(t)\,dt\\
        &=\phi_{\lambda\,H_1+(1-\lambda)\,H_2}(x),
    \end{align*}

\noindent where the third equality holds since $H_1$ and $H_2$ are comonotone. Since $\mathcal{H}\cap\mathcal{X}^\uparrow$ is convex, we have $\lambda\,Y_1+(1-\lambda)\,Y_2\in\mathcal{H}\cap\mathcal{X}^\uparrow$, and so $\{\phi_H:H\in\mathcal{H}\cap\mathcal{X}^\uparrow\}$ is convex as well.

\medskip

Finally, we show that the Choquet integral $\int Z\,dT\circ\Pr$ is sequentially continuous in $T$ with respect to pointwise convergence. Let $\{T^{(k)}\}_{k=1}^\infty$ be a sequence of distortion functions such that for all $t\in[0,1]$, we have $\lim_{k\to\infty}T^{(k)}(t)=T(t)$, for some function $T$. It is immediate that if $\{T^{(k)}\}_{k=1}^\infty$ converges pointwise to $T$ and each $T^{(k)}$ is a convex distortion function, then $T$ is also a convex distortion function. Let $Z\in\mathcal{X}$, and let $K:=||Z||_\infty<\infty$. Then we have
    \begin{align*}
        \lim_{k\to\infty}\int Z\,dT^{(k)}\circ\Pr&=
            \lim_{k\to\infty}\int_0^\infty T^{(k)}(\Pr(Z>x))\,dx
            +\lim_{k\to\infty}\int_{-\infty}^0[T^{(k)}(\Pr(Z>x))-1]\,dx\\
        &=\lim_{k\to\infty}\int_0^KT^{(k)}(\Pr(Z>x))\,dx
            +\lim_{k\to\infty}\int_{-K}^0[T^{(k)}(\Pr(Z>x))-1]\,dx\\
        &=\int_0^K\lim_{k\to\infty}T^{(k)}(\Pr(Z>x))\,dx
            +\int_{-K}^0\left[\lim_{k\to\infty}T^{(k)}(\Pr(Z>x))-1\right]\,dx\\
        &=\int_0^\infty T(\Pr(Z>x))\,dx
            +\int_{-\infty}^0[T(\Pr(Z>x))-1]\,dx
        =\int Z\,dT\circ\Pr\,,
    \end{align*}
where we may apply the dominated convergence theorem to exchange the order of the limit and the integral, since for any $x\in\R$,
    \[
        |T(\Pr(Z>x))|\le1 
        \ \ \hbox{and} \ \ 
        |T(\Pr(Z>x))-1|\le1\,.
    \]
Hence, the Choquet integral is sequentially continuous in $T$. Therefore
    \[
        \mathcal{U}(Z)=\inf_{T\in\{\phi_H:H\in\mathcal{H}\cap\mathcal{X}^\uparrow\}}\int Z\,dT\circ\Pr
            =\inf_{T\in\mathcal{T}}\int Z\,dT\circ\Pr\,,
    \]
where $\mathcal{T}$ has the desired properties.\qed

\bigskip
\subsection{Proof of Theorem \ref{thm:cpo_coherent}}
\begin{enumerate}[label=(\roman*)]
    \item
    Since the objective function of \eqref{eq:inf_problem} is non-negative, we have
        \[
            -\infty<V:=\inf_{\{T_i\}_{i=1}^n\in\prod_{i=1}^n\mathcal{T}_i}\,\int_0^\infty\max_{i\in\mathcal{N}}\,\{T_i(\Pr(S>\underline{s}+x))\}\,dx\,.
        \]
    Let $\{(T_1^{(k)},\ldots,T_n^{(k)})\}_{k=1}^\infty$ be a sequence such that
        \[
            \int_0^\infty\max_{i\in\mathcal{N}}\,\{T_i^{(k)}(\Pr(S>\underline{s}+x))\}\,dx\le V+\frac{1}{k}\,.
        \]
    
    \noindent Then since each distortion is a monotone function on the bounded interval $[0,1]$, by Helly's compactness theorem (e.g., \citealt[pp.\ 165-166]{Doob94}), there exists a subsequence $\left\{\left(T_1^{(k_j)},\ldots,T_n^{(k_j)}\right)\right\}_{j=1}^\infty$ for which $\left\{T_1^{(k_j)}\right\}_{j=1}^\infty$ converges pointwise to a limit $T_1^*$. Since $\mathcal{T}_1$ is closed under pointwise convergence by Lemma \ref{lem:distortion_set_convex}, we have $T_1^*\in\mathcal{T}_1$. Applying Helly's compactness theorem again to this subsequence gives another sequence $\left\{(T_1^{(k_l)},\ldots,T_n^{(k_l)})\right\}_{l=1}^\infty$ such that $\left\{T_1^{(k_l)}\right\}_{l=1}^\infty$ converges pointwise to $T_1^*\in\mathcal{T}_1$ and $\left\{T_2^{(k_l)}\right\}_{l=1}^\infty$ converges pointwise to $T_2^*\in\mathcal{T}_2$. Hence, iterating this process $n$ times yields a subsequence $\left\{\left(T_1^{(k_m)},\ldots,T_n^{(k_m)}\right)\right\}_{m=1}^\infty$ that converges pointwise to a limit  
        \[
(T_1^*,\ldots,T_n^*) \, \in \, \prod_{i=1}^n\mathcal{T}_i\,.
        \]

Therefore,
\begin{align*}
V&\le\int_0^\infty\max_{i\in\mathcal{N}}\,\{T_i^*(\Pr(S>\underline{s}+x))\}\,dx
=\int_0^\infty\max_{i\in\mathcal{N}}\,\left\{\lim_{m\to\infty}T_i^{(k_m)}(\Pr(S>\underline{s}+x))\right\}\,dx\\
&=\lim_{m\to\infty}\int_0^\infty\max_{i\in\mathcal{N}}\,\left\{T_i^{(k_m)}(\Pr(S>\underline{s}+x))\right\}\,dx
            \le V+\lim_{m\to\infty}\frac{1}{k_m}=V,
        \end{align*}
    
    \noindent where we can exchange the limit and the integral by dominated convergence. Hence, $(T_1^*,\ldots,T_n^*)$ is a solution to \eqref{eq:inf_problem}.
    
\medskip
    
    \item
    By Corollary \ref{cor:comonotone_characterization_translation_invar}, it suffices to characterize $\mathcal{CS}_\One$, i.e., solutions to the problem
        \[
            \sup_{\{Y_i\}_{i=1}^{n}\in\mathcal{IR}\cap\mathcal{A}_C}
                \left\{\sum_{i=1}^{n}U_i\left(Y_i\right)\right\}\,.
        \]
     By Lemma \ref{lem:translation}, this allocation can be written in terms of functions $\{g_i\}_{i=1}^n\in\mathcal{G}$ and constants $\{c_i\}_{i=1}^n\in\R^n$ where $\sum_{i=1}^nc_i=\underline{s}$. Conversely, if $\{g_i\}_{i=1}^n\in\mathcal{G}$ and $\{c_i\}_{i=1}^n\in\R^n$ with $\sum_{i=1}^nc_i=\underline{s}$, then $\{g_i(S-\underline{s})+c_i\}_{i=1}^n\in\mathcal{A}_C$. Therefore
        \begin{align*}
            \sup_{\{Y_i\}_{i=1}^{n}\in\mathcal{IR}\cap\mathcal{A}_C}
                \left\{\sum_{i=1}^{n}U_i\left(Y_i\right)\right\}
                &=\sup_{\substack{\left(\{g_i\}_{i=1}^{n},\{c_i\}_{i=1}^n\right)
                \in\mathcal{IR}\cap(\mathcal{G}\times\R^n)\\
                \sum_{i=1}^nc_i=\underline{s}
                }}
                \left\{\sum_{i=1}^{n}U_i\left(g_i(S-\underline{s})+c_i\right)\right\}\\
            &=\sup_{\left(\{g_i\}_{i=1}^{n},\{c_i\}_{i=1}^n\right)\in\mathcal{IR}\cap(\mathcal{G}\times\R^n)}
                \left\{\sum_{i=1}^{n}U_i\left(g_i(S-\underline{s})\right)\right\}+\underline{s}\,,
        \end{align*}
    
    \noindent where we write $(\{g_i\}_{i=1}^n,\{c_i\}_{i=1}^n)\in\mathcal{IR}$ when the allocation $\{g_i(S-\underline{s})+c_i\}_{i=1}^n\in\mathcal{IR}$. 
    This problem is solved by $(\{g_i^*\}_{i=1}^n,\{c_i^*\}_{i=1}^n)\in\mathcal{G}\times\R^n$ if and only if $\{g_i^*\}_{i=1}^n$ solves
        \begin{equation}
            \label{eq:inf_conv_simplified}
            \sup_{\{g_i\}_{i=1}^n\in\mathcal{G}}\left\{\sum_{i=1}^nU_i(g_i(S-\underline{s}))\right\}\,,
        \end{equation}
    
    \noindent and the constants $c_i^*$ are chosen such that $\{g_i^*(S-\underline{s})+c_i^*\}_{i=1}^n\in\mathcal{IR}$ and $\sum_{i=1}^nc_i^*=\underline{s}$. We will show that the form given in the statement of the theorem is a necessary condition for $\{g_i^*\}_{i=1}^n$ to be a solution to \eqref{eq:inf_conv_simplified}.

    \medskip
    
    By Lemma \ref{lem:coherent_rep}, we may rewrite \eqref{eq:inf_conv_simplified} as follows:
        \begin{align*}
            \sup_{\{g_i\}_{i=1}^n\in\mathcal{G}}\left\{\sum_{i=1}^{n}U_i\left(g_i(S-\underline{s})\right)\right\}
            &=\sup_{\{g_i\}_{i=1}^n\in\mathcal{G}}\left\{\sum_{i=1}^{n}\inf_{T_i\in\mathcal{T}_i}\int g_i(S-\underline{s})\,dT_i\circ\Pr\right\}.
        \end{align*}

\medskip

    For each $i\in\mathcal{N}$, let
        \[
            A_i:=\left\{\int g_i(S-\underline{s})\,dT_i\circ\Pr:T_i\in\mathcal{T}_i\right\}\,.
        \]
    Then we have
        \begin{align}
            \sup_{\{g_i\}_{i=1}^n\in\mathcal{G}}\left\{\sum_{i=1}^{n}\inf_{T_i\in\mathcal{T}_i}\int g_i(S-\underline{s})\,dT_i\circ\Pr\right\}
            &=\sup_{\{g_i\}_{i=1}^n\in\mathcal{G}}\left\{\sum_{i=1}^{n}\inf A_i\right\}\nonumber\\
            &=\sup_{\{g_i\}_{i=1}^n\in\mathcal{G}}\left\{\inf\left(\sum_{i=1}^{n}A_i\right)\right\}\nonumber\\
            &=\sup_{\{g_i\}_{i=1}^n\in\mathcal{G}}\,\inf_{\{T_i\}_{i=1}^n\in\prod_{i=1}^n\mathcal{T}_i}\,\sum_{i=1}^{n}\int g_i(S-\underline{s})\,dT_i\circ\Pr\,,
            \label{eq:minmax_obj}
        \end{align}

\smallskip

\noindent since the infimum commutes with the Minkowski sum $\sum_{i=1}^nA_i$.
    
    \medskip

    Note that the range of $S-\underline{s}$ is contained within the interval $[0,2M]$, where $M$ is the essential supremum norm of $S$. Let $\mathcal{C}([0,2M])$ denote the set of continuous functions on $[0,2M]$, which is a Banach space under the supremum norm. Let $\mathcal{D}:=\R^{[0,1]}$ denote the space of functions from $[0,1]\to\R$, which is a topological vector space with the topology of pointwise convergence. Consider the objective function of \eqref{eq:minmax_obj} as a function from $\mathcal{C}([0,2M])^n\times\mathcal{D}^n$ to $\R$ with the product topology of the spaces $\mathcal{C}([0,2M])$ and $\mathcal{D}$.

    \medskip
    
    Since the Choquet integral is comonotone additive, this objective function is linear in both $\{g_i\}_{i=1}^n$ and $\{T_i\}_{i=1}^n$. Both $\mathcal{G}$ and $\prod_{i=1}^n\mathcal{T}_i$ are convex, with the latter due to Lemma \ref{lem:distortion_set_convex}. Furthermore, since $\mathcal{G}$ is a closed subset of a Cartesian product of 1-Lipschitz functions on the interval $[0,2M]$, it is also compact by the Arzela-Ascoli Theorem \cite[IV.6.7]{Dunford}.

\medskip

    We now verify some continuity properties of the objective function. Firstly, for each $i\in\mathcal{N}$, let $\left\{g_i^{(k)}\right\}_{k=1}^\infty$ be a sequence that converges to $g_i$ uniformly (i.e., with respect to the supremum norm on $\mathcal{C}([0,2M])$). Then $g_i^{(k)}(S-\underline{s})\to g_i(S-\underline{s})$ uniformly on $L^\infty$. The objective function is therefore continuous in $\{g_i\}_{i=1}^n$, since the Choquet integral is continuous with respect to the $L^\infty$ norm.\footnote{In fact, it is Lipschitz continuous \cite[Proposition 4.11]{MarinacciMontrucchio}.} Furthermore, by the proof of Lemma \ref{lem:distortion_set_convex}, the objective is sequentially continuous in each $T_i$ under pointwise convergence. 

\medskip
    
    Therefore, by Sion's minimax theorem (e.g., \cite{komiya1988elementary}), the minimax equality holds for this problem. Exchanging the order of the supremum and infimum yields the minimax problem
        \begin{equation}
            \label{eq:minimax}
            \inf_{\{T_i\}_{i=1}^n\in\prod_{i=1}^n\mathcal{T}_i}\,\max_{\{g_i\}_{i=1}^n\in\mathcal{G}}\,\sum_{i=1}^{n}\int g_i(S-\underline{s})\,dT_i\circ\Pr\,,
        \end{equation}
    where the inner supremum is attained due to compactness. 

    Recall that a solution to \eqref{eq:inf_conv_simplified} exists (see \citealt[Theorem 2.5]{filipovic2008optimal}). Let $\{g_i^*\}_{i=1}^n$ be a solution to \eqref{eq:inf_conv_simplified}. Since the minimax equality holds, it follows from standard results on minimax problems (e.g., \citealt[Section 2.3]{barbuprecupanu}) that for every vector of distortions $\{T_i^*\}_{i=1}^n$ solving \eqref{eq:minimax}, the pair $(\{T_i^*\}_{i=1}^n,\{g_i^*\}_{i=1}^n)$ is a saddle point of \eqref{eq:minimax} as a minimax problem. Hence, it must be true that
        \begin{align*}
            \sum_{i=1}^{n}\int g_i^*(S-\underline{s})\,dT_i^*\circ\Pr
                &=\inf_{\{T_i\}_{i=1}^n\in\prod_{i=1}^n\mathcal{T}_i}\,\max_{\{g_i\}_{i=1}^n\in\mathcal{G}}\,\sum_{i=1}^{n}\int g_i(S-\underline{s})\,dT_i\circ\Pr\\
                &=\max_{\{g_i\}_{i=1}^n\in\mathcal{G}}\,\sum_{i=1}^{n}\int g_i(S-\underline{s})\,dT_i^*\circ\Pr\,.
        \end{align*}
    
    To complete the proof, we will now show that for any fixed vector of convex distortion functions $(T_1,\ldots,T_n)$, we have
        \[
            \max_{\{g_i\}_{i=1}^n\in\mathcal{G}}\,\sum_{i=1}^{n}\int g_i(S-\underline{s})\,dT_i\circ\Pr=\int_0^\infty\max_{i\in\mathcal{N}}\,\{T_i(\Pr(S>\underline{s}+x))\}\,dx\,,
        \]
    and hence problems \eqref{eq:minimax} and \eqref{eq:inf_problem} are equivalent. Furthermore, this maximum is attained at $\{g_i^*\}\in\mathcal{G}$ if and only if $g_i^*(x)=\int_0^xh_i(z)\,dz$, and
         \[
            \sum_{i\in L_x}h_i(x)=1\mbox{ and }\sum_{i\in\mathcal{N}\setminus L_x}h_i(x)=0\,,
        \]
    where
        \begin{align*}
            L_x&:=\left\{i\in\mathcal{N}:
                T_i(\Pr(S>\underline{s}+x))=\max_{j\in\mathcal{N}}\{T_j(\Pr(S>\underline{s}+x))\}\right\}\,.
        \end{align*}
    This is the characterization of optimal allocations for Yaari utilities, and a proof can be found in \cite{liu2020weighted}, for instance. We provide the full argument below for completeness.

    Since each $g_i$ is 1-Lipschitz and non-negative, by a standard result (e.g., \citealt[Lemma 2.1]{zhuang2016marginal}), we may rewrite the Choquet integral as
        \begin{equation}
            \label{eq:inner_sup}
            \max_{\{g_i\}_{i=1}^n\in\mathcal{G}}\,
                \sum_{i=1}^{n}\int_0^\infty T_i(\Pr(S>\underline{s}+x)) \, g_i'(x)\,dx\,.
        \end{equation}
    Here, $g_i'(x)$ is understood as a function for which $g_i(x)=\int_0^xg_i'(z)\,dz$, since each $g_i$ is absolutely continuous. Let $\{g_i^*\}_{i=1}^n\in\mathcal{G}$ satisfy the form given in the statement of the theorem. We first check that $\{g_i^*\}_{i=1}^n\in\mathcal{G}$. By construction, each $g_i^*$ is increasing. Furthermore, for all $x\in\R_+$, we have
        \begin{align*}
            \sum_{i=1}^ng_i^*(x)
            =\sum_{i=1}^n\int_0^xh_i(z)\,dz
            =\int_0^x\sum_{i=1}^nh_i(z)\,dz
            =\int_0^x1\,dz
            =x\,,
        \end{align*}
    and so $\{g_i^*\}_{i=1}^n\in\mathcal{G}$.
    
    Now suppose that $\{\tilde{g}_i\}_{i=1}^n\in\mathcal{G}$. For ease of notation, let $L_x^C:=\mathcal{N}\setminus L_x$. Then we have
    \allowdisplaybreaks
        \begin{align*}
            \sum_{i=1}^nU_i(\tilde{g}_i(S-\underline{s}))
            &=\sum_{i=1}^n\int_0^\infty T_i(\Pr(S>\underline{s}+x))\tilde{g}_i'(x)\,dx
            =\int_0^\infty\sum_{i=1}^nT_i(\Pr(S>\underline{s}+x))\tilde{g}_i'(x)\,dx\\
            &\le\int_0^\infty\sum_{i=1}^n\max_{i\in\mathcal{N}}\,\{T_i(\Pr(S>\underline{s}+x))\}\tilde{g}_i'(x)\,dx\stepcounter{equation}\tag{\theequation}\label{ineq}\\
            &=\int_0^\infty\max_{i\in\mathcal{N}}\,\{T_i(\Pr(S>\underline{s}+x))\}\sum_{i=1}^n\tilde{g}_i'(x)\,dx
            =\int_0^\infty\max_{i\in\mathcal{N}}\,\{T_i(\Pr(S>\underline{s}+x))\}\,dx\\
            &=\int_0^\infty\max_{i\in\mathcal{N}}\,\{T_i(\Pr(S>\underline{s}+x))\}
                +\sum_{i\in L_x^C}\big\{T_i(\Pr(S>\underline{s}+x))\cdot0\big\}\,dx\\
            &=\int_0^\infty\max_{i\in\mathcal{N}}\,\{T_i(\Pr(S>\underline{s}+x))\}\cdot\sum_{i\in L_x}h_i(x)
                +\sum_{i\in L_x^C}\big\{T_i(\Pr(S>\underline{s}+x))\cdot h_i(x)\big\}\,dx\\
            &=\int_0^\infty\sum_{i\in L_x}\big\{\max_{i\in\mathcal{N}}\,\{T_i(\Pr(S>\underline{s}+x))\}\cdot h_i(x)\big\}
                +\sum_{i\in L_x^C}\big\{T_i(\Pr(S>\underline{s}+x))\cdot h_i(x)\big\}\,dx\\
            &=\int_0^\infty\sum_{i\in L_x}\big\{T_i(\Pr(S>\underline{s}+x))\cdot h_i(x)\big\}
                +\sum_{i\in L_x^C}\big\{T_i(\Pr(S>\underline{s}+x))\cdot h_i(x)\big\}\,dx\\
            &=\int_0^\infty\sum_{i=1}^nT_i(\Pr(S>\underline{s}+x))h_i(x)\,dx
            =\sum_{i=1}^n\int_0^\infty T_i(\Pr(S>\underline{s}+x))h_i(x)\,dx\\
            &=\sum_{i=1}^nU_i(g_i^*(S-\underline{s}))\,,
        \end{align*}
    implying that $\{g_i^*\}_{i=1}^n$ solves \eqref{eq:inner_sup}. The above also shows that the optimal value of \eqref{eq:inner_sup} is
        \[
            \int_0^\infty\max_{i\in\mathcal{N}}\,\{T_i(\Pr(S>\underline{s}+x))\}\,dx\,.
        \]

\medskip

    Conversely, suppose that $\tilde{g}_i$ are not of the specified form. That is , we have $\underset{i\in L_x^C}{\sum}\tilde{g}_i'(x)>0$ on a set $\mathcal{A}$ of positive measure. Then for every $x$ in $\mathcal{A}$,
        \begin{align*}
            \sum_{i=1}^nT_i(\Pr(S>\underline{s}+x))\tilde{g}_i'(x)
            &=\sum_{i\in L_x}T_i(\Pr(S>\underline{s}+x))\tilde{g}_i'(x)+\sum_{i\in L_x^C}T_i(\Pr(S>\underline{s}+x))\tilde{g}_i'(x)\\
            &=\max_{i\in\mathcal{N}}\,\{T_i(\Pr(S>\underline{s}+x))\}\tilde{g}_i'(x)+\sum_{i\in L_x^C}T_i(\Pr(S>\underline{s}+x))\tilde{g}_i'(x)\\
            &<\max_{i\in\mathcal{N}}\,\{T_i(\Pr(S>\underline{s}+x))\}\tilde{g}_i'(x)+\sum_{i\in L_x^C}\max_{i\in\mathcal{N}}\,\{T_i(\Pr(S>\underline{s}+x))\}\tilde{g}_i'(x)\\
            &=\max_{i\in\mathcal{N}}\,\{T_i(\Pr(S>\underline{s}+x))\}
            =\sum_{i=1}^n\max_{i\in\mathcal{N}}\,\{T_i(\Pr(S>\underline{s}+x))\}\tilde{g}_i'(x),
        \end{align*}
    where the strict inequality follows because $L_x^C$ is non-empty and $\tilde{g}_i'(x)$ are not all zero for $i\in L_x^C$. Taking the integral over the set $\mathcal{A}$ of positive measure gives
        \[
            \int_\mathcal{A}\sum_{i=1}^nT_i(\Pr(S>\underline{s}+x))\tilde{g}_i'(x)\,dx<\int_\mathcal{A}\sum_{i=1}^n\max_{i\in\mathcal{N}}\,\{T_i(\Pr(S>\underline{s}+x))\}\tilde{g}_i'(x)\,dx\,.
        \]
    Therefore the inequality \eqref{ineq} is strict in this case, implying that $\tilde{g}_i(x)$ does not solve \eqref{eq:inner_sup}. \qed
\end{enumerate}

\bigskip

\begin{remark}
In our application of Sion's minimax theorem in the proof of Theorem \ref{thm:cpo_coherent} above, we show that the objective function \eqref{eq:minmax_obj} is sequentially continuous on each $\mathcal{T}_i$. However, standard statements of Sion's minimax theorem in the literature (e.g., \cite{komiya1988elementary}, \citealt[Theorem 2.132]{barbuprecupanu}) require that the objective function be lower semicontinuous with respect to the topological vector space over which the infimum is taken. Nonetheless, a careful examination of the proof of Sion's minimax theorem shows that sequential lower semicontinuity is sufficient for the result. See, in particular, \cite[Lemma 1]{komiya1988elementary}.

\smallskip

This slight generalization is particularly relevant in our case, since our objective function \eqref{eq:minmax_obj} is not, in general, continuous on each $\mathcal{T}_i$ with respect to the topology of pointwise convergence. This is due to the fact that the dominated convergence theorem does not generalize from sequences of functions to nets of functions. Since the topology of pointwise convergence is not metrizable, it follows that sequential continuity does not necessarily imply continuity.
\end{remark}

\bigskip
\section{A Counterexample to Sufficiency of Theorem \ref{thm:cpo_coherent} (ii)}
\label{sec:not_sufficient_example}

Suppose that there are $n=2$ agents in the market, and that the aggregate endowment $S$ is a continuous random variable. Let $T_1$ and $T_2$ be two convex distortion functions such that $T_1<T_2$ with positive measure, and $T_1>T_2$ also with positive measure. Suppose further that $\int S\,dT_1\circ\Pr\ne\int S\,dT_2\circ\Pr$, and without loss of generality, let $\int S\,dT_1\circ\Pr>\int S\,dT_2\circ\Pr$.

\medskip

Let Agent 1's preferences be represented by a Yaari dual utility with respect to the convex distortion function $T_1$. This preference can be represented by a the singleton set
    \[
        \mathcal{T}_1:=\{T_1\}\,,
    \]
in the sense of Lemma \ref{lem:distortion_set_convex}. That is, for each $Z\in\mathcal{X}$,
    \[
        U_1(Z)=\inf_{T\in\mathcal{T}_1}\int Z\,dT\circ\Pr=\int Z\,dT_1\circ\Pr\,.
    \]
Let $\mathcal{T}_2$ be the closed convex hull of $T_1$ and $T_2$, and suppose that
    \[
        U_2(Z)=\inf_{T\in\mathcal{T}_2}\int Z\,dT\circ\Pr\,.
    \]
Since $T_1\in\mathcal{T}_2$, Agent 2's utility functional is always dominated by Agent 1's utility functional. This implies that it is comonotone Pareto optimal for Agent 1 to assume all of the variability in the market. Indeed, for any $(g_1,g_2)\in\mathcal{G}$, we have
    \[
        U_1(g_1(S))+U_2(g_2(S))\le U_1(g_1(S))+U_1(g_2(S))=U_1(g_1(S)+g_2(S))=U_1(S)\,.
    \]

\medskip

However, applying the result of Theorem \ref{thm:cpo_coherent} does not easily identify the comonotone Pareto-optimal allocations in this scenario. Firstly, since $T_2>T_1$ on a set of positive measure and $S$ is a continuous random variable, it follows that the set 
    \[
        \big\{x\in\R_+:T_2(\Pr(S>\underline{s}+x))>T_1(\Pr(S>\underline{s}+x))\big\}
    \]
also has positive measure. This implies that for each $\lambda\in(0,1]$, we have
\begin{align*}
    \int_0^\infty&\max\big\{T_1(\Pr(S>\underline{s}+x)),\lambda T_2(\Pr(S>\underline{s}+x))+(1-\lambda)T_1(\Pr(S>\underline{s}+x))\big\}\,dx\\
    &=\int_0^\infty T_1(\Pr(S>\underline{s}+x))\,dx+\lambda\int_0^\infty \max\big\{T_2(\Pr(S>\underline{s}+x))-T_1(\Pr(S>\underline{s}+x)),0\big\}\,dx\\
    &>\int_0^\infty T_1(\Pr(S>\underline{s}+x))\,dx\,.
\end{align*}

\noindent Hence, the unique optimizer to \eqref{eq:inf_problem} in this market is the pair $\{T_1,T_1\}$. When applying part (ii) of Theorem \ref{thm:cpo_coherent}, we see that $L_{x,T_1,T_1}=\{1,2\}$ for all $x\in\R_+$. Therefore, the necessary condition for comonotone Pareto optimality provided by Theorem \ref{thm:cpo_coherent} allows for any comonotone allocation in $\mathcal{G}$.

Nonetheless, not all comonotone allocations are Pareto optimal. Consider the two allocations $(S-c^*,c^*)$ and $(\tilde{c},S-\tilde{c})$, where $c^*$ and $\tilde{c}$ are chosen in $\R$ such that both allocations are individually rational. Then we have
    \begin{align*}
        U_1(\tilde{c})+U_2(S-\tilde{c})=U_2(S)=\int S\,dT_2\circ\Pr<\int S\,dT_1\circ\Pr=U_1(S)=U_1(S-c^*)+U_2(c^*)\,,
    \end{align*}
implying that the allocation $(\tilde{c},S-\tilde{c})$ is comonotone but not comonotone Pareto optimal. Hence, the necessary condition provided by Theorem \ref{thm:cpo_coherent} is not sufficient in this case.

\bigskip
\section{Proofs for Section \ref{subsec:equilibrium}}
\label{sec:proofs_equil}

\subsection{Proof of Proposition \ref{prop:equil_exists}}
The result follows from \cite{FilipovicKupper2008} (Theorems 3.1 and 3.2), provided that the preferences are $\sigma(L^\infty,L^1)$ upper semicontinuous and that the value of the sup-convolution problem $\mathcal{S}_\One$ is finite. Since monetary utilities are norm-continuous, law-invariance implies $\sigma(L^\infty,L^1)$ upper semicontinuity by Theorem 2.2 of \cite{JouiniSchachermayerTouzi2006}. Moreover, Theorems \ref{thm:PO_vs_CPO} and \ref{thm:cpo_coherent} imply that the value of $\mathcal{S}_\One$ is the value of \eqref{eq:inf_problem}, which is finite.\qed

\bigskip
\subsection{Proof of Lemma \ref{LemmaEqu}}

This follows from the fact that the inequality in the demand problem \eqref{eq:demand} must hold with equality at optimum. Indeed, if we have $\E^{\Q^*}[Y_i^*+b_i]<\E^{\Q^*}[X_i]$, then
$$U_i\left(Y_i^*+b_i+\E^{\Q^*}[X_i]-\E^{\Q^*}[Y_i^*+b_i]\right)=U_i(Y_i^*+b_i)+\E^{\Q^*}[X_i]-\E^{\Q^*}[Y_i^*+b_i]>U_i(Y_i^*+b_i)\,,$$
and so $Y_i^*+b_i$ does not solve \eqref{eq:demand}. Therefore we must have $\E^{\Q^*}[Y_i^*+b_i]=\E^{\Q^*}[X_i]$, which implies that $b_i=\E^{\Q^*}[X_i-Y_i^*]$.\qed

\bigskip
\subsection{Proof of Proposition \ref{prop:equil_yaari}}
Note that since $(Y_1^*,\ldots,Y_n^*)\in\mathcal{CPO}$, $(Y_1^*+\E^{\Q^*}[X_1-Y_1^*],\ldots,Y_n^*+\E^{\Q^*}[X_n-Y_n^*])$ is an allocation, and so the market-clearing condition is satisfied. Suppose that, for each $i\in\mathcal{N}$, $Y_i^*+\E^{\Q^*}[X_i-Y_i^*]$ solves the demand problem \eqref{eq:demand} under the pricing measure $\Q^*$. Recall from Lemma \ref{lem:translation} that every comonotone allocation $(Y_1,\ldots,Y_n)$ has the form
    \[
        Y_i=g_i(S-\underline{s})+c_i\,,
    \]
where $\{g_i\}_{i=1}^n\in\mathcal{G}$ and $\{c_i\}_{i=1}^n\in\R^n$ satisfies $\sum_{i=1}^nc_i=\underline{s}$. In particular, we can write $Y_i^*+\E^{\Q^*}[X_i-Y_i^*]=g_i^*(S-\underline{s})+c_i^*$, where $\{g_i^*\}_{i=1}^n\in\mathcal{G}$ and $\{c_i^*\}_{i=1}^n\in\R^n$.
Then for each $i\in\mathcal{N}$, since $Y_i^*+\E^{\Q^*}[X_i-Y_i^*]$ solves the demand problem \eqref{eq:demand}, it follows that $g_i^*$ and $c_i^*$ must be a solution to
    \begin{align*}
        \max_{g_i\in\mathcal{I},\,c_i\in\R}&\quad U_i(g_i(S-\underline{s})+c_i)\\
        s.t.&\quad\E^{\Q^*}[g_i(S-\underline{s})+c_i]=\E^{\Q^*}[X_i]\,,
    \end{align*}
where $\mathcal{I}$ is the set of non-decreasing $1$-Lipschitz functions on $\R_+$. By applying translation invariance of $U_i$, we see that $g_i^*$ must be a solution to
    \[
        \max_{g_i\in\mathcal{I}} \ \left\{U_i(g_i(S-\underline{s}))+\E^{\Q^*}[X_i]-\E^{\Q^*}[g_i(S-\underline{s})]\right\}\,.
    \]
Since $\E^{\Q^*}[X_i]$ is a constant, if $g_i^*$ solves the above, it must also be a solution to
    \begin{equation}
        \label{eq:demand_simplified}
        \max_{g_i\in\mathcal{I}} \ \left\{U_i(g_i(S-\underline{s}))-\E^{\Q^*}[g_i(S-\underline{s})]\right\}\,.
    \end{equation}

\smallskip

We now provide an explicit solution to \eqref{eq:demand_simplified}. By rewriting the Choquet integral according to \cite[Lemma 2.1]{zhuang2016marginal}, the objective of \eqref{eq:demand_simplified} becomes
    \begin{align*}
        &\int_0^\infty T_i(\Pr(S>\underline{s}+x))\,g_i'(x)\,dx-
            \int_0^\infty\Q^*(S>\underline{s}+x)\,g_i'(x)\,dx\\
        &=\int_0^\infty\big(T_i(\Pr(S>\underline{s}+x))-
            \Q^*(S>\underline{s}+x)\big)\,g_i'(x)\,dx\,.
    \end{align*}
It follows that $g_i^*$ solves \eqref{eq:demand_simplified} if and only if $g_i^*(x)=\displaystyle\int_0^xh_i(z)\,dz$, where $h_i$ satisfies
    \begin{equation}
        \label{eq:demand_solution}
        h_i(x)=\begin{cases}
            1\,,&\mbox{if }T_i(\Pr(S>\underline{s}+x))>\Q^*(S>\underline{s}+x)\,,\\
            0\,,&\mbox{if }T_i(\Pr(S>\underline{s}+x))<\Q^*(S>\underline{s}+x)\,,
        \end{cases}
    \end{equation}
for almost every $x\in\R_+$.

\smallskip

We conclude the proof by showing that $h_i$ satisfies \eqref{eq:demand_solution} for all $i\in\mathcal{N}$ if and only if
    \begin{equation}
        \label{eq:q_condition}
        T_{(n-1)}(\Pr(S>\underline{s}+x))\le\Q^*(S>\underline{s}+x)\le T_{(n)}(\Pr(S>\underline{s}+x))\,,
    \end{equation}
for almost every $x\in\R_+$. First, suppose that \eqref{eq:q_condition} holds.
Recall from our characterization of $\mathcal{CPO}$ in Corollary \ref{cor:cpo_drm_characterization} that $h_i$ satisfies
    \[
        \sum_{i\in L_x}h_i(x)=1\mbox{ and }\sum_{i\in L_x^C}h_i(x)=0\,.
    \]
If $T_i(\Pr(S>\underline{s}+x))>\Q^*(S>\underline{s}+x)$, then \eqref{eq:q_condition} implies that $L_x=\{i\}$, and hence $h_i(x)=1$. Similarly, if $T_i(\Pr(S>\underline{s}+x))<\Q^*(S>\underline{s}+x)$, then \eqref{eq:q_condition} implies that $i\not\in L_x$, and hence $h_i(x)=0$.

\smallskip

For the converse, suppose that \eqref{eq:q_condition} does not hold. If $\Q^*(S>\underline{s}+x)<T_{(n-1)}(\Pr(S>\underline{s}+s)$ and $h_i$ satisfies \eqref{eq:demand_solution} for all $i\in\mathcal{N}$, then there exists $j,k\in\mathcal{N}$ such that $j\ne k$ and $h_j(x)=h_k(x)=1$. Therefore $\sum_{i=1}^nh_i(x)\ge2\ne1$, contradicting the results of Corollary \ref{cor:cpo_drm_characterization}. Hence, $h_i$ does not satisfy \eqref{eq:demand_solution} for all $i\in\mathcal{N}$. Similarly, if $\Q^*(S>\underline{s}+x)>T_{(n)}(\Pr(S>\underline{s}+s)$, then \eqref{eq:demand_solution} would imply that $\sum_{i=1}^nh_i(x)=0\ne1$, a contradiction. Therefore $h_i$ satisfies \eqref{eq:demand_solution} for all $i\in\mathcal{N}$ if and only if \eqref{eq:q_condition} holds, thereby completing the proof. \qed

\newpage

\bibliographystyle{ecta}
\bibliography{biblio}

\vspace{0.6cm}

\end{document}